\documentclass[11pt]{article}%
\usepackage{amssymb}
\usepackage{amsfonts}
\usepackage{amsmath}
\usepackage{amssymb}
\usepackage{color}
\usepackage{graphicx}
\usepackage{caption}%
\setcounter{MaxMatrixCols}{30}
\usepackage{hyperref}
\usepackage{float}
\usepackage{subcaption}


\usepackage{tabularx}
\usepackage{algorithm}
\usepackage{algpseudocode}
\providecommand{\U}[1]{\protect\rule{.1in}{.1in}}
\newtheorem{theorem}{Result}

\newtheorem{remark}[theorem]{Remark}

\newenvironment{proof}[1][Proof]{\noindent \textbf{#1.} }{\  \rule{0.5em}{0.5em}}
\textwidth 170mm \textheight 235mm \oddsidemargin -4mm
\evensidemargin -4mm \topmargin -12mm

\begin{document}
	
	\title{\textbf{Robust inference for intermittently-monitored step-stress tests under Weibull lifetime distributions}}
	\author{Narayanaswamy Balakrishnan$^1$,  María Jaenada$^2$ and Leandro Pardo$^2$}
	\date{$^1$McMaster University, $^2$Complutense University of Madrid}
	\maketitle

\begin{abstract}
	Many modern products exhibit high reliability under normal operating conditions. Conducting life tests under these conditions may result in very few observed failures, insufficient for accurate inferences. Instead, accelerated life tests (ALTs) must be performed. 
	%
	One of the most popular ALT designs is the step-stress test, which shortens the product's lifetime by progressively increasing the stress level at which units are subjected to at some pre-specified times.
	Classical estimation methods based on the maximum likelihood estimator (MLE) enjoy suitable asymptotic properties but they lack robustness. That is, data contaminationcan significantly impact the statistical analysis.
	In this paper, we develop robust inferential methods for highly reliable devices based on the density power divergence (DPD) for estimating and testing under the step-stress model with intermittent monitoring and Weibull lifetime distributions. We theoretically and empirically examine asymptotic and robustness properties of the minimum DPD estimators and associated Wald-type test statistics. Moreover, we develop robust estimators and confidence intervals for some important lifetime characteristics.
	The effect of temperature 
	in solar lights, medium power silicon bipolar transistors and LED lights using real data arising from an step-stress ALT is analyzed applying the robust methods proposed.
\end{abstract}


\section{Introduction}

Nowadays products often have large mean times to failure under normal operating conditions. Therefore,
accelerated life tests (ALTs)
need to be conducted to infer their lifetime distribution. The objective of these accelerated tests is to induce the devices to fail in a shorter time by increasing the stress level at which the devices are subjected, so as to efficiently collect useful and sufficient information for analyzing the lifetimes of the products depending on the stress level.
After suitable inference is developed, results can be extrapolated to normal operating conditions; see Meeter and Mecker (1994) and Meeker et al. (1998).
Once the lifetime distribution is estimated, some lifetime characteristics of interest, such as reliability of the device at certain mission times, distribution quantiles or mean lifetime of the product,  may be determined.

Accelerated life-testing may be performed either at constant stress or linearly increasing stress levels 
over time.  
Multiple step-stress ALTs increase the stress level at which devices are tested at pre-specified times, referred in the following as times of stress change. The step-stress design requires less samples for inference and is more efficient than constant-stress ALTs under the optimal situations
(see Han and Ng (2013)). This advantage in terms of experimental sample typically reduces the experiment cost.
 
For step-stress designs a model relating the effect of increased stress levels on the lifetime distribution of a device is needed.
Three main proposals have been made in the literature: tampered random variable model (DeGroot and  Goel (1979)), tampered failure-rate model (Bhattacharyya and Soejoeti (1989)), and cumulative exposure model (Nelson (1980)). 
We will adopt the cumulative exposure approach here, which relates the lifetime distribution of experimental devices at one stress level to the distributions at preceding stress levels by assuming that the residual life of the experimental devices depends only  on the current cumulative exposure experienced and current stress, with no memory of how the exposure was accumulated.
The cumulative exposure model is widely used in the literature due to its natural assumption regarding changes in the distribution and mathematical simplicity; the distribution at the new stress level is adjusted (shifted) by the cumulative wear and tear experienced so far, but behaves as if it had always been exposed to this new level.
Moreover, there is substantial empirical evidence supporting the effectiveness of the model, and practical studies have demonstrated its accuracy and reliability in predicting product lifetimes under accelerated testing conditions.

Balakrishnan (2009) reviewed likelihood-based inferential techniques for step-stress models and associated optimal accelerated life-tests, based on complete and censored samples under exponential distributions. 
In the present work,
we will assume that lifetimes of  devices follow a Weibull distribution with common shape parameter (not depending on the stress level), and scale parameter related to the current stress level through a log-linear relationship.  
By assuming common shape for the Weibull distribution, it can be parameterized as a proportional hazards model, meaning that the hazard rates of any two products stay  constant  over time. In an accelerated life testing applications,  a common assumption is that the ALT does not change the failure mode of the units or materials, and so the  homogeneity condition on the shape is valid in most applications; therefore, the stress level would change only the scale parameter of the life distribution.
This assumption simplifies the application of CE and is quite common in ALT literature (See, for example, Wang and Fei (2003), Mondal and Kundu (2019), Samanta et al. (2019) and Pal et al. (2021) ) as it makes the estimation problem mathematically more tractable.
 However, the homogeneity condition on the shape parameter may be violated in some practical applications, when the stress level may affect the shape parameter as well.


On the other hand, due to product or experimental constraints, reliability experiments often deal with censored data. For example, 
in destructive one-shot device testing 
we only know if a device has failed or not when it is tested and so
the data collected from such devices are either left- or right-censored. 
Destructive one-shot devices, which get destroyed when tested, have been widely studied in the literature of ALTs.
However, the destructiveness assumption may not be necessary in some engineering applications (see for instance Cheng and Elsayed (2018)), and so the surviving devices would continue in the experiment, providing extra-information.
Then, the observed data from the experiment will instead be interval-censored.
Other examples of interval-censored data may appear in 
intermittently-monitored experiments wherein the exact failure times of  devices can not be recorded, but the state of the devices can only be checked at certain readout times. For example, Gouno (2001) used this experimental set up to infer the lifetime distribution of two electronic products related to medium-power silicon bipolar transistors. Hassan et al. (2014) determined optimum inspection times.
More recently, Han and Bai (2019, 2020, 2022) studied EM estimation for constant and step-stress ALTs under interval-monitoring and compared the results with the continuous monitoring set-up.
We will assume here that the life status of devices under test can only be intermittently checked, leading to an interval-censoring setup.
Constant-stress ALT for one-shot devices have been studied extensively, including classical and robust methods.
One may refer to the recent book by Balakrishnan et al. (2021) for a detailed state-of-the-art review on inferential techniques for ALTs with one-shot devices. 

On the other hand, recent works on intermittently-monitored step-stress ALTs have shown the advantage of using divergence-based methods in terms of robustness, with an unavoidable (but not significant) loss of efficiency. Balakrishnan et al. (2023, 2024a, 2024b) developed robust estimation methods based on density power divergence (DPD) for the step-stress model for different lifetime distributions, including exponential, gamma and lognormal.

Although the Weibull distribution is commonly used as a lifetime model in engineering, physical, and biomedical sciences, robust inferences have not yet been studied under this distribution.
In this paper, we extend this work to the case of Weibull distributions. 
Additionally, we develop a Wald-type statistic to robustly test whether Weibull or exponential distributions should be considered. 
The properties of the minimum DPD estimators for intermittent monitored step-stress ALTs are theoretically derived, including their asymptotic distribution, which allows for the construction of asymptotic confidence intervals. From the robust estimators, point estimates and asymptotic confidence intervals for the Weibull lifetime characteristics are derived.
Additionally, the performance of the proposed robust estimators is empirically compared to classical likelihood-based techniques through Monte Carlo simulations. The advantages for practical applications are illustrated using two different datasets analyzing the effect of temperature in light emitting devices.

The rest of this paper is organised as follows. 
Section \ref{sec:modelformulation} presents the multiple step-stress model for Weibull lifetime distributions. 
In Section \ref{sec:MDPDE},   the classical maximum likelihood estimator (MLE) of the model and the minimum DPD estimators (MDPDE) are introduced, and their estimating equations and asymptotic properties are discussed.
Section \ref{sec:interval} investigates several lifetime characteristics, namely, reliability, distribution quantiles and mean lifetime, as well as their point estimation and confidence intervals (CIs) based on the MDPDEs. Further, transformed CIs are proposed to avoid some practical drawback of direct CIs.
In Section \ref{sec:robusttest}, Wald-type test statistics, based on the MDPDE, for testing general composite null hypothesis are developed. Asymptotic distribution of the proposed test statistic under the null hypothesis and asymptotic behaviour of the power of the tests are established.
Section \ref{sec:IF} theoretically study the robustness properties of the estimators and test statistics based on the MDPDE through the analysis of their influence function (IF). 
In Section \ref{sec:simstudy},  an extensive simulation study is carried out for examining the robustness properties of the estimators and the test statistics is carried out.
Section \ref{sec:realdata} illustrates the applicability of the proposed model and the inferential methods developed here with two real datasets from the reliability literature, regarding the effect of temperature in different electronic components.
Finally, some concluding remarks are made in Section \ref{sec:conclusions}.

\section{Motivating Data Sets \label{sec:Motivating}}

We aim to analyze the influence of temperature  in some electronic components such as solar lights, medium power silicon bipolar transistors and LED lights.
Allowing systems to run for prolonged periods of time in high temperatures can severely decrease the longevity and reliability of some electronic devices. 
Therefore, in order to get the most out of an electronic component in different environments or otherwise choose the best material for a certain function, it will be interest to study its reliability based on the temperature.
Recall that the step-stress model enables to obtain accurate estimates with fewer sample sizes than constant-stress ALTs. Therefore, the model is especially adequate for the presented industrial experiments, which traditionally have small sample sizes.

\subsection{Effect of temperature on solar lighting devices \label{Dataset1}}

Solar lights are known to be resilient and environmental-friendly by design and so they offer an attractive lighting alternative for outdoor and indoor places. Conversely, there is an unwanted effect of  extreme weather on the reliability of such devices.
In particular, the efficiency of solar lights generally decreases as the temperature rises, due to a voltage decrease.
Under normal operating conditions, solar lighting devices are expected to last for hundreds of hours, and therefore an ALT must be carried out to infer the effect of temperature on solar lighting devices.
For this purpose, Han and Kundu (2014) conducted a simple step-stress accelerated life testing experiment.
A set of 35 solar light prototypes were placed on equal environmental conditions, and  the temperature was increased at a pre-specified time $\tau_1=5$ (in hundred hours) from normal operating temperature ($293K$) to $353K.$
The experiment was terminated at $\tau_2=6$ (in hundred hours), when only $4$ devices were still functioning. The election of the time of stress-change were made in advance by the experimenter.
Failure times observed during the experiment  are as follows:
\begin{center}
	0.140, 0.783, 1.324, 1.582, 1.716, 1.794, 1.883, 2.293, 2.660, 2.674, 2.725, 3.085, 3.924, 4.396, 4.612, 4.892, 5.002, 5.022, 5.082, 5.112, 5.147, 5.238, 5.244, 5.247, 5.305, 5.337, 5.407, 5.408, 5.445, 5.483, 5.717.
\end{center}

The collected data can be easily transformed into intermittent inspecting devices with inspection times $t=1.5,3,5,5.2,5.4,6$ (in hundred ours),  by counting the number of failures within each inspection interval as if the exact failure times were not available. The resulting observed data are given by the counts of failures $(3, 8, 5, 5, 5, 5, 4).$
To ensure that the Weibull distribution is appropriate for the observed grouped data, we perform a non-parametric test using the Pearson chi-squared statistic, resulting in a p-value of 0.4. Thus, we can assume the Weibull distribution is appropriate.

%

\subsection{Effect of temperature on Light Emitting Diodes (LEDs) \label{Dataset3}}

Temperature has a tangible effect on the material and output of an LED light. 
Cold environments favour good functioning of LEDs, and light output diminishes with temperature increment.
Zhao and Elsayed (2005) examined the effect of temperature in LEDs through a multiple step-stress ALT experiment with four levels. 
The LEDs are considered to be working properly as long as the change in the light intensity
does not exceed a pre-fixed threshold level. 
Then, they placed $27$ LED units in a temperature and humidity chamber and the humidity and current in the circuit were held constant at 70\% RH and 28 mA, respectively.
Four temperature levels were considered during the test, 
According to expert engineering judgment, temperatures less than $80^\circ C$ are not appropriate since it is difficult to observe the LED failure in a reasonable test time. Conversely, temperatures above and above $200^\circ C$ are not recommended either as extrapolation can not be justified. Therefore, four temperature levels ranging from $80$ to $200^\circ $ were considered for the ALT as $T=363K, 413K, 433K$ and $448K.$ 
The normal operating temperature for LEDs is $50^\circ C.$ 
The times of stress change were fixed at $t = 300, 500$ and $600$ hours, and the test got terminated at $t=720$ hours. Recorded failure times were as follows:
\begin{center}
	347, 397, 432, 491, 512, 567, 574, 588, 597, 603, 605, 615, 633, 634, 637, 644, 653, 675, 684, 699, 706, 718, 720.
\end{center}

We will transformed the data to intermittent tested data in which the inspection times coincide with the times of stress change.
To ensure that the Weibull distribution is appropriate for the observed grouped data, we performed a non-parametric goodness of fit test using the Pearson chi-squared statistic, resulting in a p-value of 0.44. Therefore, the Weibull distribution can be assumed for  lifetime distributions for the LEDs.

\section{The multiple step-stress model} \label{sec:modelformulation}

We consider the multiple step-stress ALT for intermittently tested devices with $k$ ordered stress levels $ x_1 < x_2 < \dots < x_k$ and $N$ devices under test. 
At the start of the experiment, all devices are subjected to the same stress level, $x_1,$ which gets increased to $x_2$ at a certain pre-specified time of stress change, $\tau_1.$ All devices are subjected to the new stress level until the next time of stress change, $\tau_2,$ and the process is repeated until
all remaining surviving devices are subjected $k$-th stress, $x_k.$ The termination time of the experiment is fixed at $\tau_k.$ 
In addition, we consider $L$ pre-fixed inspection times, 
$$0 < t_1 < \dots < t_{l_1} = \tau_1 < t_{l_1+1} < \dots < t_{l_k} = \tau_k, $$
where $l_i$ denotes the number of inspection times before the $i$-th time of stress change and $l_k= L,$ when all the surviving devices are tested and the number of failures is recorded. 
We assume that all times of stress change are inspection times.  
Observed failures from intermittently-monitored devices are grouped as count of failures within each inspected interval.

The cumulative exposure (CE) model forms a composite failure distribution function by assuming that the product lifetime is shifted at the time of stress change such that the survival functions of the change time are the same under two different stress levels. We will assume that the lifetime of devices at a constant stress level $x_i$ follow a Weibull distribution with common shape parameter $\eta>0$  (homogeneity condition) and scale parameter $\alpha_i > 0$ depending on the stress level through the log-linear relationship
\begin{equation} \label{eq:loglinear}
	\alpha_i =  \exp(a_0 + a_1 x_i), \hspace{0.3cm} i = 1,..,k,
\end{equation} 
where $(a_0, a_1)^T \in \mathbb{R} \times \mathbb{R}^{-} .$
Note that the CE assumes that the  probability of failure gets increased when the stress level increases, so that the scale $\alpha_i$ should decrease with the stress, or equivalently the log-linear parameter $a_1$ need to be negative. 
The shape parameter $\eta$ is positive, so the parameter $\boldsymbol{\theta} = (a_0,a_1,\eta)^T $ is $\Theta = \mathbb{R} \times \mathbb{R}^{-} \times \mathbb{R}^{+}.$
The log-linear relationship is frequently adopted in accelerated life modeling as it fits some important physical models and have proved to perform well in practice. In particular, when stress is temperature as is the case of our illustrative examples, it is common to consider an Arrhenius model as stress function and an exponential or Weibull life distribution with scale parameter
	$$ \lambda_i = \lambda_0\exp\left( + \frac{E_0}{K}x_i\right)$$
	where $E_0$ and $K$ are unknown parameters that need to be estimated,
	 $x_i = \frac{1}{T_0} - \frac{1}{T_i}$ with $T_i$ the temperature at level $i$ and $T_0$ the normal operating temperature, and $\lambda_0$  the scale parameter under normal operating conditions.

Then, the cumulative distribution function (cdf) of failure time is given by
\begin{equation}\label{eq:distributionT}
	G_T(t) = 
	\begin{cases}
		 1-\exp\left(- \left(\frac{t}{\alpha_1}\right)^{\eta}\right), & 0<t< \tau_1\\
		 1-\exp\left(-\left(\frac{t+h_1}{\alpha_2}\right)^{\eta}
		\right), & \tau_1 \leq t < \tau_2, \\
		\vdots & \vdots \\
		 1-\exp\left(-\left(\frac{t+h_{k-1}}{\alpha_k}\right)^{\eta}
		\right), & \tau_{k-1} \leq t < \infty \\
	\end{cases},
\end{equation}
and correspondingly the reliability function is given by 
\begin{equation*}
	\begin{aligned}
	R_T(t) &= 1- G_T(t)\\ &= \begin{cases} \exp\left(- \left(\frac{t}{\alpha_1}\right)^{\eta}\right),  & 0<t< \tau_1\\
		\exp\left(-\left(\frac{t+h_1}{\alpha_2}\right)^{\eta}
		\right), & \tau_1 \leq t < \tau_2 \\
		\vdots & \vdots \\
		 \exp\left(-\left(\frac{t+h_{k-1}}{\alpha_k}\right)^{\eta}
		\right), & \tau_{k-1} \leq t < \infty \\
	\end{cases},
\end{aligned}
\end{equation*}
and the probability distribution function (pdf) is given by
\begin{equation} \label{eq:densityT}
	g_T(t) = 
	\begin{cases}
	 \frac{\eta}{\alpha_1}\left(\frac{t}{\alpha_1}\right)^{\eta-1} \exp\left(-\left(\frac{t}{\alpha_1}\right)^{\eta}\right), & 0<t <\tau_1\\
	 \frac{\eta}{\alpha_2}\left(\frac{t + h_1}{\alpha_2}\right)^{\eta-1} \exp\left(-\left(\frac{t + h_1}{\alpha_2}\right)^{\eta}\right), & \tau_1 \leq t < \tau_2 \\
		\vdots & \vdots\\
	 \frac{\eta}{\alpha_{k}}\left(\frac{t + h_{k-1}}{\alpha_k}\right)^{\eta-1} \exp\left(-\left(\frac{t + h_{k-1}}{\alpha_k}\right)^{\eta}\right),   & \tau_{k-1} \leq t < \infty \\
	\end{cases},
\end{equation}
with 
\begin{equation}\label{eq:ai}
	h_{i} =  \alpha_{i+1} \sum_{k=0}^{i-1}\left(-\frac{1}{\alpha_{i+1-k}} + \frac{1}{\alpha_{i-k}} \right)\tau_{i-k},
\end{equation}
for $i=1,...,k-1.$ 
Although the distribution function is continuous in $(0,\infty),$ the density function has $k$  points of discontinuity at each time of stress change. 
Equation (\ref{eq:ai}) can be obtained thanks to the homogeneity condition on the shape parameter, as the shifting time assuring continuity of the cdf has an explicit expression only when the change distribution depends only on a scale parameter.
Further, the probability of failure at the $j$-th interval  is
\begin{equation} \label{eq:th.prob}
	\pi_j(\boldsymbol{\theta}) =  G_T(t_j) - G_T(t_{j-1}), \hspace{0.3cm} j = 1,..,L,
\end{equation}
and finally the probability of survival at the end of the experiment is given by $\pi_{L+1}(\boldsymbol{\theta}) = 1 - G_T(t_L).$

\section{Minimum density power divergence estimator}\label{sec:MDPDE}

In this section, we develop robust estimators for the SSALT model under Weibull lifetimes based on the DPD introduced by Basu et al. (1998), and then study their asymptotic and robustness properties.  

Let $n_j$  denote the number of failures within the $j$-th interval and $n_{L+1}$  denote the number of surviving devices at the end of the experiment.
We introduce probability vector $\boldsymbol{\pi} (\boldsymbol{\theta}) = (\pi_1(\boldsymbol{\theta})),...,\pi_{L+1}(\boldsymbol{\theta}))^T$ quantifying the probability of failure within each interval
and its corresponding empirical probability vector $\widehat{\boldsymbol{p}} =  \left(n_1/N, ..., n_{L+1}/N\right)^T.$
Then, the likelihood function of the multinomial model with probability $\boldsymbol{\pi} (\boldsymbol{\theta})$
and $N$ trials (total number devices tested)  is given by
$$\mathcal{L}(\boldsymbol{\theta}; n_1,..,n_{L+1}) = \frac{N!}{n_1!\cdots n_{L+1}!} \prod_{j=1}^{L+1} \pi_j(\boldsymbol{\theta})^{n_j}$$
and then the MLE of the model parameter $\boldsymbol{\theta}$ for the SSALT model under Weibull lifetimes is given by
\begin{align*}
	\boldsymbol{\widehat{\theta}}^{MLE} &= \left(\widehat{a}_0^{MLE}, \widehat{a}_1^{MLE}, \widehat{\eta}^{MLE}\right)^T\\
	& = \operatorname{arg} \operatorname{max}_{\boldsymbol{\theta}  \in \Theta} \mathcal{L}(\boldsymbol{\theta}; n_1,..,n_{L+1}).
\end{align*}

It is useful to note that the MLE can be equivalently derived from an information theory approach by using the Kullback-Leibler (KL) divergence between the empirical vector $\widehat{\boldsymbol{p}}$ and the theoretical probability  $\boldsymbol{\pi} (\boldsymbol{\theta})$ of the multinomial model, since the expression of the KL divergence is related to the likelihood function as
\begin{align*}
	d_{KL} (\boldsymbol{\widehat{p}}, \boldsymbol{\pi}(\boldsymbol{\theta})) &= \sum_{j=1}^{L+1}\widehat{p}_j \log\left(\frac{\widehat{p}_j}{\pi_j(\boldsymbol{\theta})}\right)\\
	& = \sum_{i=1}^{L+1} \widehat{p}_i\log(\widehat{p}_i) - \frac{1}{N}\log \mathcal{L}(\boldsymbol{\theta}; n_1,..,n_{L+1}).
\end{align*}
Then, the minimizer of the KL divergence coincides with the MLE.
This classical estimator is known to be asymptotically efficient in the absence of contamination, but it lacks robustness as contamination in the data could influence the parameter estimate. 
To overcome this, we introduce a family of estimators based on the DPD, indexed by a tuning parameter $\beta \geq 0$ controlling the trade-off between efficiency and robustness. 

The DPD between the empirical and theoretical probability vectors, $\widehat{\boldsymbol{p}}$ and $\boldsymbol{\pi},$ is given by
\begin{equation}\label{eq:DPDloss}
	d_{\beta}\left( \widehat{\boldsymbol{p}},\boldsymbol{\pi}\left(\boldsymbol{\theta}\right)\right)   = \sum_{j=1}^{L+1} \left(\pi_j(\boldsymbol{\theta})^{1+\beta} -\left( 1+\frac{1}{\beta}\right) \widehat{p}_j\pi_j(\boldsymbol{\theta})^{\beta}  +\frac{1}{\beta} \widehat{p}_j^{\beta+1} \right)
\end{equation}
and correspondingly the minimum DPD estimator (MDPPE), for $\beta >0, $ is defined as
\begin{equation} 
	\boldsymbol{\widehat{\theta}}^{\beta} = \left(\widehat{a}_0^{\beta},\widehat{a}_{1}^{\beta}, \widehat{\eta}^{\beta}\right)^T = \operatorname{arg} \operatorname{min}_{\boldsymbol{\theta}  \in \Theta} d_{\beta} \left( \widehat{\boldsymbol{p}},\boldsymbol{\pi}\left(\boldsymbol{\theta}\right)\right).
\end{equation}
For $\beta=0,$ the DPD can be defined by taking continuos limits  $\beta \rightarrow 0,$ and the resulting expression is indeed the KL divergence. Thus,
the DPD family generalizes the likelihood procedure to a broader  class of estimators, including the  classical MLE
for the case when $\beta = 0.$ 

The following result establishes the estimating equations of the MDPDE for any $\beta \geq0.$




\begin{theorem}\label{thm:esitmatingequations}
	The estimating equations of the MDPDE for the SSALT model under Weibull lifetime distributions, satisfying the log-linear relation in (\ref{eq:loglinear}), are given by
	$$\boldsymbol{W}^T\boldsymbol{D}_{\boldsymbol{\pi}(\boldsymbol{\theta})}^{\beta-1}\left( \widehat{\boldsymbol{p}}- \boldsymbol{\pi}(\boldsymbol{\theta})\right) = \boldsymbol{0}_3,$$
	where $\boldsymbol{0}_3$ is the 3-dimensional null vector, $\boldsymbol{D}_{\boldsymbol{\pi}(\boldsymbol{\theta})}$ denotes a $(L+1)\times(L+1)$ diagonal matrix with diagonal entries $\pi_j(\boldsymbol{\theta}),$ $j=1,...,L+1,$ and $\boldsymbol{W}$ is a $(L+1) \times 3$ matrix with rows
	$
	\boldsymbol{w}_j =  \boldsymbol{z}_j-\boldsymbol{z}_{j-1},
	$
	where
	\begin{align}
		\label{eq:zj} \boldsymbol{z}_j  &= g_T(t_j)\begin{pmatrix}
			-(t_j+h_{i-1})\\
			-(t_j+h_{i-1})x_i + h_{i-1}^\ast \\
			\log\left(\frac{t_j+h_{i-1}}{\alpha_i}\right)\frac{t_j+h_{i-1}}{\eta}
		\end{pmatrix}, \hspace{0.1cm} \\ 
		h_{i}^\ast &= h_{i}x_{i+1}+ \alpha_{i+1}\sum_{k=0}^{i-1}\left(\frac{x_{i+1-k}}{\alpha_{i+1-k}} - \frac{x_{i-k}}{\alpha_{i-k}} \right)\tau_{i-k} \label{aast},\hspace{0.1cm} 
	\end{align}
	for any $j = 1,...,L$ and $i=1,...,k-1$ with $ \boldsymbol{z}_{-1} = \boldsymbol{z}_{L+1} = \boldsymbol{0}$ and $i$ being the stress level at which the units are tested before the $j-$th inspection time.
\end{theorem}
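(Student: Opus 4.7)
The plan is to differentiate the DPD objective in \eqref{eq:DPDloss}, set the gradient to zero, and then unpack the resulting score equations through the piecewise cumulative-exposure structure of $G_T$. Computing the gradient of $d_\beta(\widehat{\boldsymbol{p}},\boldsymbol{\pi}(\boldsymbol{\theta}))$, the term $\widehat{p}_j^{\beta+1}/\beta$ is $\boldsymbol{\theta}$-free, while $\partial\pi_j^{1+\beta}/\partial\boldsymbol{\theta}=(1+\beta)\pi_j^\beta\,\partial\pi_j/\partial\boldsymbol{\theta}$ and $\partial\pi_j^\beta/\partial\boldsymbol{\theta}=\beta\pi_j^{\beta-1}\,\partial\pi_j/\partial\boldsymbol{\theta}$ combine (after factoring out $1+\beta$) into
$$\sum_{j=1}^{L+1}\pi_j(\boldsymbol{\theta})^{\beta-1}\bigl(\widehat{p}_j-\pi_j(\boldsymbol{\theta})\bigr)\frac{\partial\pi_j(\boldsymbol{\theta})}{\partial\boldsymbol{\theta}}=\boldsymbol{0}_3.$$
Identifying the rows of $\boldsymbol{W}$ with $\partial\pi_j/\partial\boldsymbol{\theta}^T$, this is precisely the matrix form $\boldsymbol{W}^T\boldsymbol{D}_{\boldsymbol{\pi}(\boldsymbol{\theta})}^{\beta-1}(\widehat{\boldsymbol{p}}-\boldsymbol{\pi}(\boldsymbol{\theta}))=\boldsymbol{0}_3$ asserted in the statement. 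Since $\pi_j=G_T(t_j)-G_T(t_{j-1})$ for $j\le L$ and $\pi_{L+1}=1-G_T(t_L)$, setting $\boldsymbol{z}_j=\partial G_T(t_j)/\partial\boldsymbol{\theta}^T$ with the convention $\boldsymbol{z}_0=\boldsymbol{z}_{L+1}=\boldsymbol{0}$ produces the telescoping identity $\boldsymbol{w}_j=\boldsymbol{z}_j-\boldsymbol{z}_{j-1}$, so the task reduces to computing $\boldsymbol{z}_j$.

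For $t_j$ lying in the $i$-th stress interval, $G_T(t_j)=1-\exp(-((t_j+h_{i-1})/\alpha_i)^\eta)$, and direct differentiation yields the compact partials $\partial G_T(t_j)/\partial(t_j+h_{i-1})=g_T(t_j)$, $\partial G_T(t_j)/\partial\alpha_i=-((t_j+h_{i-1})/\alpha_i)g_T(t_j)$, and $\partial G_T(t_j)/\partial\eta=((t_j+h_{i-1})/\eta)\log((t_j+h_{i-1})/\alpha_i)g_T(t_j)$. The log-linear model \eqref{eq:loglinear} supplies $\partial\alpha_i/\partial a_0=\alpha_i$ and $\partial\alpha_i/\partial a_1=x_i\alpha_i$, while neither $\alpha_i$ nor $h_{i-1}$ depends on $\eta$; everything therefore hinges on $\partial h_{i-1}/\partial a_0$ and $\partial h_{i-1}/\partial a_1$ computed from the recursion \eqref{eq:ai}.

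This last step is the main bookkeeping obstacle. Applying the product rule to $h_i=\alpha_{i+1}\sum_{k=0}^{i-1}(-1/\alpha_{i+1-k}+1/\alpha_{i-k})\tau_{i-k}$ with $\partial\alpha_{i+1}/\partial a_0=\alpha_{i+1}$ and $\partial(1/\alpha_j)/\partial a_0=-1/\alpha_j$ produces two contributions: the outer derivative reproduces $h_i$, while termwise differentiation of each bracket flips its sign and yields $-h_i$, so $\partial h_i/\partial a_0=0$, and the $a_0$-component of $\boldsymbol{z}_j$ collapses to $-(t_j+h_{i-1})g_T(t_j)$. The parallel computation with $\partial\alpha_{i+1}/\partial a_1=x_{i+1}\alpha_{i+1}$ and $\partial(1/\alpha_j)/\partial a_1=-x_j/\alpha_j$ does not cancel: the outer piece contributes $x_{i+1}h_i$ and the inner piece contributes $\alpha_{i+1}\sum_k(x_{i+1-k}/\alpha_{i+1-k}-x_{i-k}/\alpha_{i-k})\tau_{i-k}$, whose sum is exactly $h_i^\ast$ of \eqref{aast}; the $a_1$-component of $\boldsymbol{z}_j$ then simplifies to $[-(t_j+h_{i-1})x_i+h_{i-1}^\ast]g_T(t_j)$. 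Assembling the three components reproduces \eqref{eq:zj} and completes the proof.
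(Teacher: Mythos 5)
Your proof is correct and follows essentially the same route as the paper: differentiate the DPD objective to obtain the weighted score equations, telescope $\partial\pi_j/\partial\boldsymbol{\theta}$ into differences of $\boldsymbol{z}_j=\partial G_T(t_j)/\partial\boldsymbol{\theta}$, and evaluate the partials through the cumulative-exposure structure. You actually supply more detail than the paper does on the key bookkeeping step (the cancellation $\partial h_i/\partial a_0=0$ and the identification $\partial h_i/\partial a_1=h_i^\ast$), and your convention $\boldsymbol{z}_0=\boldsymbol{0}$ is the natural reading of the paper's $\boldsymbol{z}_{-1}=\boldsymbol{0}$.
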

\begin{proof}
	See Appendix.
\end{proof}

Note that estimating equations of the MLE are  obtained ready by evaluating the previous expression at $\beta=0,$ resulting in
$$\boldsymbol{W}^T\boldsymbol{D}_{\boldsymbol{\pi}(\boldsymbol{\theta})}^{-1}\left( \widehat{\boldsymbol{p}}- \boldsymbol{\pi}(\boldsymbol{\theta})\right) = \boldsymbol{0}_3.$$

Next, we present the asymptotic distribution of the proposed estimator for any $\beta \geq 0$.
\begin{theorem} \label{thm:asymptoticestimator}
	Let $\boldsymbol{\theta}_0$ be the true value of the parameter $\boldsymbol{\theta}$. Then, the asymptotic distribution of the MDPDE, $\boldsymbol{\widehat{\theta}}^{\beta},$ for the SSALT model, under Weibull lifetime distribution, is given by
	$$ \sqrt{N}\left(\boldsymbol{\widehat{\theta}}^{\beta} - \boldsymbol{\theta}_0\right)\xrightarrow[N\rightarrow \infty]{L}\mathcal{N}\left(\boldsymbol{0}, \boldsymbol{J}_\beta^{-1}(\boldsymbol{\theta}_0)\boldsymbol{K}_\beta(\boldsymbol{\theta}_0)\boldsymbol{J}_\beta^{-1}(\boldsymbol{\theta}_0)\right),$$
	where \begin{equation} \label{eq:JK}
		\begin{aligned}
		\boldsymbol{J}_\beta(\boldsymbol{\theta}_0) &= \boldsymbol{W}^T \boldsymbol{D}_{\boldsymbol{\pi}(\boldsymbol{\theta_0})}^{\beta-1} \boldsymbol{W},\\
		\boldsymbol{K}_\beta(\boldsymbol{\theta}_0)& = \boldsymbol{W}^T \left(\boldsymbol{D}_{\boldsymbol{\pi}(\boldsymbol{\theta_0})}^{2\beta-1}-\boldsymbol{\pi}(\boldsymbol{\theta}_0)^{\beta}\boldsymbol{\pi}(\boldsymbol{\theta}_0)^{\beta T}\right) \boldsymbol{W},
	\end{aligned}
	\end{equation}
	$\boldsymbol{D}_{\boldsymbol{\pi}(\boldsymbol{\theta_0})}$ denotes the diagonal matrix with entries $\pi_j(\boldsymbol{\theta_0}),$ $j=1,...,L+1,$ and $\boldsymbol{\pi}(\boldsymbol{\theta}_0)^{\beta}$ denotes the vector with components $\pi_j(\boldsymbol{\theta}_0)^{\beta}.$
\end{theorem}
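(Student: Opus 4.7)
The plan is to treat $\widehat{\boldsymbol{\theta}}^{\beta}$ as a Z-estimator associated with the estimating equation of Result~1 and to run the standard linearization argument for estimators built from multinomial cell frequencies. The starting ingredient is the multivariate CLT for $\widehat{\boldsymbol{p}}$: since $(n_1,\ldots,n_{L+1})^T \sim \mathrm{Mult}(N, \boldsymbol{\pi}(\boldsymbol{\theta}_0))$,
$$
\sqrt{N}\bigl(\widehat{\boldsymbol{p}} - \boldsymbol{\pi}(\boldsymbol{\theta}_0)\bigr) \xrightarrow{\mathcal{D}} \mathcal{N}\bigl(\boldsymbol{0}, \boldsymbol{\Sigma}(\boldsymbol{\theta}_0)\bigr), \qquad \boldsymbol{\Sigma}(\boldsymbol{\theta}_0) = \boldsymbol{D}_{\boldsymbol{\pi}(\boldsymbol{\theta}_0)} - \boldsymbol{\pi}(\boldsymbol{\theta}_0)\boldsymbol{\pi}(\boldsymbol{\theta}_0)^T.
$$
Writing the estimating equation of Result~1 as $\boldsymbol{\Psi}_N(\widehat{\boldsymbol{\theta}}^{\beta}) = \boldsymbol{0}_3$, where $\boldsymbol{\Psi}_N(\boldsymbol{\theta}) := \boldsymbol{W}(\boldsymbol{\theta})^T\boldsymbol{D}_{\boldsymbol{\pi}(\boldsymbol{\theta})}^{\beta-1}\bigl(\widehat{\boldsymbol{p}} - \boldsymbol{\pi}(\boldsymbol{\theta})\bigr)$, and assuming consistency $\widehat{\boldsymbol{\theta}}^{\beta} \xrightarrow{P} \boldsymbol{\theta}_0$ (discussed below), a first-order Taylor expansion around $\boldsymbol{\theta}_0$ gives
$$
\boldsymbol{0}_3 = \boldsymbol{\Psi}_N(\boldsymbol{\theta}_0) + \bigl[\nabla_{\boldsymbol{\theta}}\boldsymbol{\Psi}_N(\boldsymbol{\theta}^{*})\bigr]\bigl(\widehat{\boldsymbol{\theta}}^{\beta} - \boldsymbol{\theta}_0\bigr),
$$
for some $\boldsymbol{\theta}^{*}$ on the segment joining $\widehat{\boldsymbol{\theta}}^{\beta}$ and $\boldsymbol{\theta}_0$.

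The second step is to identify the two limits. For the stochastic part, the CLT above yields
$$
\sqrt{N}\,\boldsymbol{\Psi}_N(\boldsymbol{\theta}_0) \xrightarrow{\mathcal{D}} \mathcal{N}\!\left(\boldsymbol{0},\; \boldsymbol{W}^T\boldsymbol{D}_{\boldsymbol{\pi}(\boldsymbol{\theta}_0)}^{\beta-1}\boldsymbol{\Sigma}(\boldsymbol{\theta}_0)\boldsymbol{D}_{\boldsymbol{\pi}(\boldsymbol{\theta}_0)}^{\beta-1}\boldsymbol{W}\right),
$$
and I would simplify using the componentwise identity $\boldsymbol{D}_{\boldsymbol{\pi}}^{\beta-1}\boldsymbol{\pi} = \boldsymbol{\pi}^{\beta}$ together with $\boldsymbol{D}_{\boldsymbol{\pi}}^{\beta-1}\boldsymbol{D}_{\boldsymbol{\pi}}\boldsymbol{D}_{\boldsymbol{\pi}}^{\beta-1} = \boldsymbol{D}_{\boldsymbol{\pi}}^{2\beta-1}$, which collapse the sandwich to exactly $\boldsymbol{K}_\beta(\boldsymbol{\theta}_0)$. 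For the derivative, the product rule gives
$$
\nabla_{\boldsymbol{\theta}}\boldsymbol{\Psi}_N(\boldsymbol{\theta}) = -\boldsymbol{W}(\boldsymbol{\theta})^T\boldsymbol{D}_{\boldsymbol{\pi}(\boldsymbol{\theta})}^{\beta-1}\boldsymbol{W}(\boldsymbol{\theta}) + R_N(\boldsymbol{\theta}),
$$
where $R_N$ collects the terms coming from differentiating the prefactor $\boldsymbol{W}(\boldsymbol{\theta})^T\boldsymbol{D}_{\boldsymbol{\pi}(\boldsymbol{\theta})}^{\beta-1}$ and is linear in the residual $\widehat{\boldsymbol{p}} - \boldsymbol{\pi}(\boldsymbol{\theta})$. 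At $\boldsymbol{\theta}^{*} \xrightarrow{P} \boldsymbol{\theta}_0$ this residual is $o_P(1)$, so $R_N(\boldsymbol{\theta}^{*}) \xrightarrow{P} \boldsymbol{0}$, and the leading piece converges to $-\boldsymbol{J}_\beta(\boldsymbol{\theta}_0)$. Here I am crucially using the fact implicit in Result~1 that the rows $\boldsymbol{w}_j = \boldsymbol{z}_j - \boldsymbol{z}_{j-1}$ coincide with $\partial \pi_j(\boldsymbol{\theta})/\partial\boldsymbol{\theta}$, i.e. $\boldsymbol{W}(\boldsymbol{\theta})$ is the Jacobian of $\boldsymbol{\pi}(\boldsymbol{\theta})$. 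An application of Slutsky's theorem to
$$
\sqrt{N}\bigl(\widehat{\boldsymbol{\theta}}^{\beta}-\boldsymbol{\theta}_0\bigr) = -\bigl[\nabla_{\boldsymbol{\theta}}\boldsymbol{\Psi}_N(\boldsymbol{\theta}^{*})\bigr]^{-1}\sqrt{N}\,\boldsymbol{\Psi}_N(\boldsymbol{\theta}_0)
$$
then yields the claimed sandwich normal limit.

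The main obstacle is not the matrix algebra but the two regularity prerequisites: consistency of $\widehat{\boldsymbol{\theta}}^{\beta}$ and invertibility of $\boldsymbol{J}_\beta(\boldsymbol{\theta}_0)$. Consistency I would obtain from classical Z-estimator arguments: the population objective $\boldsymbol{\theta}\mapsto d_\beta(\boldsymbol{\pi}(\boldsymbol{\theta}_0), \boldsymbol{\pi}(\boldsymbol{\theta}))$ is non-negative with unique zero at $\boldsymbol{\theta}_0$ by identifiability of the log-linear Weibull parameterization, and $\boldsymbol{\pi}(\cdot)$ is smooth on the parameter space $\Theta$, so a uniform law of large numbers applied to the sample objective $d_\beta(\widehat{\boldsymbol{p}}, \boldsymbol{\pi}(\boldsymbol{\theta}))$ on a suitably chosen compact set delivers the required convergence. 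Invertibility of $\boldsymbol{J}_\beta(\boldsymbol{\theta}_0) = \boldsymbol{W}^T\boldsymbol{D}_{\boldsymbol{\pi}(\boldsymbol{\theta}_0)}^{\beta-1}\boldsymbol{W}$ reduces to $\boldsymbol{W}$ having full column rank $3$, which is a mild design condition on the number of inspection times $L$ and stress levels $k \geq 2$; generically the three columns $\partial \boldsymbol{\pi}/\partial a_0,\partial \boldsymbol{\pi}/\partial a_1, \partial \boldsymbol{\pi}/\partial \eta$ are linearly independent because $x_i$ and $\log((t_j+h_{i-1})/\alpha_i)$ vary nontrivially across stress levels and inspection windows.
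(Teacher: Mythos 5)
Your argument is correct and is precisely the standard route the paper relies on: the paper's proof simply defers to Theorem 2 of Balakrishnan et al. (2022), which establishes the analogous result via the same multinomial CLT for $\widehat{\boldsymbol{p}}$, linearization of the estimating equations of Result \ref{thm:esitmatingequations} around $\boldsymbol{\theta}_0$, the identification of $\boldsymbol{W}$ as the Jacobian of $\boldsymbol{\pi}(\boldsymbol{\theta})$, and the collapse of the sandwich covariance to $\boldsymbol{K}_\beta(\boldsymbol{\theta}_0)$. Your explicit treatment of consistency and the full-rank condition on $\boldsymbol{W}$ supplies regularity details the paper leaves implicit, but the substance of the argument is the same.
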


\begin{proof}
	The proof is similar to that of Theorem 2 in Balakrishnan et al. (2022).
\end{proof}
The Fisher information matrix associated with the SSALT model under Weibull lifetime distribution is given by $\boldsymbol{I}_F(\boldsymbol{\theta}_0) =  \boldsymbol{W}^T \boldsymbol{D}_{\boldsymbol{\pi}(\boldsymbol{\theta_0})}^{-1} \boldsymbol{W},$ and so the convergence of the MLE is obtained as a particular case at $\beta=0,$ to be
$$ \sqrt{N}\left(\boldsymbol{\widehat{\theta}}^{0} - \boldsymbol{\theta}_0\right)\xrightarrow[N\rightarrow \infty]{L}\mathcal{N}\left(\boldsymbol{0}, \boldsymbol{I}_F^{-1}(\boldsymbol{\theta}_0)\right).$$

\section{Point estimation and confidence intervals of reliability, distribution quantiles and mean lifetime of a device } \label{sec:interval}

Engineering applications often demand estimated values of some important lifetime characteristics, such as the reliability  function, distribution quantiles and mean lifetime of a device under normal operating conditions.
Hence,  point estimation and confidence intervals (CIs) of such lifetime characteristics will be of great interest for reliability engineers.
In this section, we develop point estimation and CIs for these three main lifetime characteristics based on the MDPDEs. Further, we establish asymptotic distributions of each lifetime characteristic and derive the corresponding approximate direct and transformed CIs.

Lifetime distribution characteristics are functions of the model parameter $\boldsymbol{\theta},$ and so their asymptotic distribution can be derived by using the Delta-method. In particular, the reliability of the device at a constant stress level, $x_0,$ and  at a fixed time $t,$ is given by
\begin{equation}\label{eq:survival}
	R_t(\boldsymbol{\theta}) = \exp \left(- \left( \frac{t}{\exp(a_0 + a_1x_0)} \right)^\eta\right),
\end{equation}
the $1-\alpha$ distribution quantile  at  stress level $x_0$ is given by the inverse distribution (or reliability) function, as
\begin{equation}\label{eq:quantiles}
	\begin{aligned}
		Q_{1-\alpha}(\boldsymbol{\theta})  &=	R_0^{-1}(1-\alpha) = G_0^{-1}(\alpha)\\
		& = \exp(a_0+a_1x_0)\left(-\log(1-\alpha)\right)^{\frac{1}{\eta}},
	\end{aligned}
\end{equation}
and the mean lifetime of the device is given by
\begin{equation}\label{eq:loglinearmean}
	\operatorname{E}_T(\boldsymbol{\theta}) = \mathbb{E}_{\boldsymbol{\theta}}[T]= \exp(a_0+a_1x_0)\Gamma\left(1+\frac{1}{\eta}\right)
\end{equation}
with $\Gamma(\cdot)$ denoting the gamma function.

Given an MDPDE, $\widehat{\boldsymbol{\theta}}^\beta$, it is straightforward to obtain a point estimation of the reliability at a certain time, quantile and mean lifetime of  devices  under normal operating conditions by substituting the estimated parameters in (\ref{eq:survival}), (\ref{eq:quantiles}) and (\ref{eq:loglinearmean}), respectively.
The next result presents the asymptotic distributions of these three estimated lifetime characteristics based on the MDPDE.

\begin{theorem} \label{thm:asymptoticreliability}
	Let $\boldsymbol{\theta}_0$ be the true value of the parameter $\boldsymbol{\theta},$ and then consider $\widehat{\boldsymbol{\theta}}^\beta$ as the MDPDE with tuning parameter $\beta.$
	Then, the asymptotic distribution of the estimated reliability of devices
	at a certain time $t$ under normal operating conditions,   based on the MDPDE,  $R_t(\widehat{\boldsymbol{\theta}}^\beta),$ is given by
	$$\sqrt{N}(R_t(\widehat{\boldsymbol{\theta}}^\beta)- R_t(\boldsymbol{\theta}_0) )\xrightarrow[N \rightarrow \infty]{L} \mathcal{N}\left(\boldsymbol{0}, \sigma(R_t(\boldsymbol{\theta}_0)) \right)$$
	with
	$$\sigma(R_0^\beta(t))^2= \nabla R_t\left(\boldsymbol{\theta}_0\right)^T \boldsymbol{J}_\beta^{-1}(\boldsymbol{\theta}_0)\boldsymbol{K}_\beta(\boldsymbol{\theta}_0)\boldsymbol{J}_\beta^{-1}(\boldsymbol{\theta}_0)\nabla R_t\left(\boldsymbol{\theta}_0\right),$$
	where the matrices 
	$\boldsymbol{J}_\beta(\boldsymbol{\theta}_0)$ and $\boldsymbol{K}_\beta(\boldsymbol{\theta}_0)$ are as defined in (\ref{eq:JK}) and 
	$\nabla R_t\left(\boldsymbol{\theta}\right)^T = tg_0(t) \left(1, x_0, -\log\left(\frac{t}{\alpha_0}\right)\frac{1}{\eta}\right)$ is the gradient of the function $R_t(\boldsymbol{\theta})$ defined in (\ref{eq:survival}).
\end{theorem}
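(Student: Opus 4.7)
The proposal is to deduce the result as a direct application of the Delta method to the smooth transformation $R_t(\cdot)$, using Theorem \ref{thm:asymptoticestimator} as the input.

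First, I would invoke Theorem \ref{thm:asymptoticestimator} to write
$$\sqrt{N}\bigl(\widehat{\boldsymbol{\theta}}^{\beta} - \boldsymbol{\theta}_0\bigr) \xrightarrow[N\to\infty]{L} \mathcal{N}\bigl(\boldsymbol{0},\, \boldsymbol{J}_\beta^{-1}(\boldsymbol{\theta}_0)\boldsymbol{K}_\beta(\boldsymbol{\theta}_0)\boldsymbol{J}_\beta^{-1}(\boldsymbol{\theta}_0)\bigr),$$
and then observe that $R_t(\boldsymbol{\theta})$, as given in (\ref{eq:survival}), is infinitely differentiable in the parameter on $\Theta$ since $\alpha_0 = \exp(a_0+a_1 x_0) > 0$ and $\eta > 0$. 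Applying the multivariate Delta method with the transformation $\boldsymbol{\theta} \mapsto R_t(\boldsymbol{\theta})$ immediately yields
$$\sqrt{N}\bigl(R_t(\widehat{\boldsymbol{\theta}}^\beta) - R_t(\boldsymbol{\theta}_0)\bigr) \xrightarrow[N\to\infty]{L} \mathcal{N}\bigl(0,\, \nabla R_t(\boldsymbol{\theta}_0)^T \boldsymbol{J}_\beta^{-1}(\boldsymbol{\theta}_0)\boldsymbol{K}_\beta(\boldsymbol{\theta}_0)\boldsymbol{J}_\beta^{-1}(\boldsymbol{\theta}_0)\nabla R_t(\boldsymbol{\theta}_0)\bigr),$$
which already matches the claimed form of the limit and the expression of $\sigma(R_t(\boldsymbol{\theta}_0))^2$.

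The only substantive calculation left is to verify the explicit form of the gradient. Writing $u(\boldsymbol{\theta}) = (t/\alpha_0)^\eta = t^\eta\exp(-\eta(a_0+a_1 x_0))$, we have $R_t(\boldsymbol{\theta}) = e^{-u(\boldsymbol{\theta})}$, hence $\nabla R_t = -e^{-u}\,\nabla u$. A direct differentiation gives
$$\frac{\partial u}{\partial a_0} = -\eta\,u, \qquad \frac{\partial u}{\partial a_1} = -\eta\,x_0\,u, \qquad \frac{\partial u}{\partial \eta} = u\log\!\left(\frac{t}{\alpha_0}\right).$$
Combining with the identity $\eta\, u\, e^{-u} = \frac{t\eta}{\alpha_0}(t/\alpha_0)^{\eta-1}e^{-(t/\alpha_0)^\eta} = t\,g_0(t)$, where $g_0$ is the Weibull density at the normal operating stress $x_0$, one recovers precisely
$$\nabla R_t(\boldsymbol{\theta})^T = t\,g_0(t)\left(1,\; x_0,\; -\frac{1}{\eta}\log\!\left(\frac{t}{\alpha_0}\right)\right),$$
as stated.

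The argument is therefore essentially a one-step Delta-method reduction; the main (and only minor) obstacle is the algebraic repackaging of $\eta u e^{-u}$ as $t g_0(t)$ so that the gradient takes the compact form given in the theorem. No new regularity arguments are required beyond those already ensuring Theorem \ref{thm:asymptoticestimator}, since $R_t$ is smooth and its gradient is continuous at $\boldsymbol{\theta}_0$.
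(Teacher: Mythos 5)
Your proposal is correct and follows exactly the route the paper intends: the paper gives no separate proof of this result, stating only that the asymptotic distributions of the lifetime characteristics "can be derived by using the Delta-method" applied to Result \ref{thm:asymptoticestimator}. Your gradient computation (including the repackaging of $\eta u e^{-u}$ as $t\,g_0(t)$) checks out and supplies the only detail the paper leaves implicit.
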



\begin{theorem}\label{thm:asymptoticquantiles}
	Under the same assumptions as in Result \ref{thm:asymptoticreliability},
	the asymptotic distribution of the estimated $1-\alpha$ quantile of the lifetime distribution of devices
	under normal operating conditions based on the MDPDE,  $Q_{1-\alpha}(\widehat{\boldsymbol{\theta}}^\beta),$ is given by
	$$\sqrt{N}(\widehat{Q}_{1-\alpha}(\widehat{\boldsymbol{\theta}}^\beta)- Q_{1-\alpha}(\boldsymbol{\theta}_0)) \xrightarrow[N \rightarrow \infty]{L} \mathcal{N}\left(\boldsymbol{0}, \sigma(Q_{1-\alpha}(\boldsymbol{\theta}_0)) \right) $$
	with
	
	\begin{align*}
		\sigma(Q_{1-\alpha}&(\boldsymbol{\theta}_0) )^2 =\\ &\nabla Q_{1-\alpha}\left(\boldsymbol{\theta}_0\right)^T \boldsymbol{J}_\beta^{-1}(\boldsymbol{\theta}_0)\boldsymbol{K}_\beta(\boldsymbol{\theta}_0)\boldsymbol{J}_\beta^{-1}(\boldsymbol{\theta}_0)\nabla Q_{1-\alpha}\left(\boldsymbol{\theta}_0\right) 
	\end{align*}
	where the matrices 
	$\boldsymbol{J}_\beta(\boldsymbol{\theta}_0)$ and $\boldsymbol{K}_\beta(\boldsymbol{\theta}_0)$ are as defined in (\ref{eq:JK}) and 
	$\nabla Q_{1-\alpha}(\boldsymbol{\theta})^T = Q_{1-\alpha}(\boldsymbol{\theta}) \left(1, x_0, \frac{-\log(-\log(1-\alpha))}{\eta^2}\right)$ is the gradient of the function $Q_{1-\alpha}(\boldsymbol{\theta})$ defined in (\ref{eq:quantiles}).
\end{theorem}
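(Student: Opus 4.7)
The plan is to apply the standard multivariate Delta method to the smooth function $\boldsymbol{\theta}\mapsto Q_{1-\alpha}(\boldsymbol{\theta})$ defined in (\ref{eq:quantiles}), using the asymptotic normality of the MDPDE $\widehat{\boldsymbol{\theta}}^\beta$ already established in Result \ref{thm:asymptoticestimator}. Since $Q_{1-\alpha}(\boldsymbol{\theta}) = \exp(a_0 + a_1 x_0)(-\log(1-\alpha))^{1/\eta}$ is continuously differentiable on the open parameter space $\Theta$, and we have
$$\sqrt{N}(\widehat{\boldsymbol{\theta}}^\beta - \boldsymbol{\theta}_0) \xrightarrow[N\to\infty]{L} \mathcal{N}\left(\boldsymbol{0}, \boldsymbol{J}_\beta^{-1}(\boldsymbol{\theta}_0)\boldsymbol{K}_\beta(\boldsymbol{\theta}_0)\boldsymbol{J}_\beta^{-1}(\boldsymbol{\theta}_0)\right),$$
the Delta method immediately yields convergence of $\sqrt{N}(Q_{1-\alpha}(\widehat{\boldsymbol{\theta}}^\beta) - Q_{1-\alpha}(\boldsymbol{\theta}_0))$ to a centered normal law with variance $\nabla Q_{1-\alpha}(\boldsymbol{\theta}_0)^T \boldsymbol{J}_\beta^{-1}(\boldsymbol{\theta}_0)\boldsymbol{K}_\beta(\boldsymbol{\theta}_0)\boldsymbol{J}_\beta^{-1}(\boldsymbol{\theta}_0)\nabla Q_{1-\alpha}(\boldsymbol{\theta}_0)$.

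The only real computation is to verify the claimed form of the gradient $\nabla Q_{1-\alpha}(\boldsymbol{\theta})$. Writing $Q_{1-\alpha}(\boldsymbol{\theta}) = \exp(a_0 + a_1 x_0)(-\log(1-\alpha))^{1/\eta}$ and differentiating with respect to each coordinate of $\boldsymbol{\theta}=(a_0,a_1,\eta)^T$, the first two derivatives are trivially $Q_{1-\alpha}(\boldsymbol{\theta})$ and $x_0 Q_{1-\alpha}(\boldsymbol{\theta})$. For the partial derivative with respect to $\eta$, I would rewrite $(-\log(1-\alpha))^{1/\eta} = \exp\bigl(\tfrac{1}{\eta}\log(-\log(1-\alpha))\bigr)$ and apply the chain rule to obtain $-\tfrac{\log(-\log(1-\alpha))}{\eta^2}\,(-\log(1-\alpha))^{1/\eta}$, so that $\partial_\eta Q_{1-\alpha}(\boldsymbol{\theta}) = Q_{1-\alpha}(\boldsymbol{\theta})\cdot \tfrac{-\log(-\log(1-\alpha))}{\eta^2}$. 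Factoring $Q_{1-\alpha}(\boldsymbol{\theta})$ across all three components gives precisely the expression stated in the theorem.

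There is really no hard step here: the result is a direct consequence of Result \ref{thm:asymptoticestimator} combined with the differentiability of $Q_{1-\alpha}$. The only minor care required is to note that when $\boldsymbol{\theta}_0\in\Theta=\mathbb{R}\times\mathbb{R}^{-}\times\mathbb{R}^{+}$ one has $\eta>0$, so the formula and its gradient are well defined in a neighborhood of $\boldsymbol{\theta}_0$, justifying the Delta method. I would conclude by writing the asymptotic variance in the compact form $\sigma(Q_{1-\alpha}(\boldsymbol{\theta}_0))^2 = \nabla Q_{1-\alpha}(\boldsymbol{\theta}_0)^T\,\Sigma_\beta(\boldsymbol{\theta}_0)\,\nabla Q_{1-\alpha}(\boldsymbol{\theta}_0)$ with $\Sigma_\beta(\boldsymbol{\theta}_0)=\boldsymbol{J}_\beta^{-1}(\boldsymbol{\theta}_0)\boldsymbol{K}_\beta(\boldsymbol{\theta}_0)\boldsymbol{J}_\beta^{-1}(\boldsymbol{\theta}_0)$, matching the statement. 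This parallels exactly the proof of Result \ref{thm:asymptoticreliability} with $R_t$ replaced by $Q_{1-\alpha}$.
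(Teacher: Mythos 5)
Your proposal is correct and follows exactly the route the paper intends: the paper itself states that the lifetime characteristics are smooth functions of $\boldsymbol{\theta}$ and that their asymptotic distributions follow from Result \ref{thm:asymptoticestimator} via the Delta method, and it offers no further written proof for this result. Your gradient computation $\nabla Q_{1-\alpha}(\boldsymbol{\theta})^T = Q_{1-\alpha}(\boldsymbol{\theta})\left(1,\ x_0,\ -\log(-\log(1-\alpha))/\eta^2\right)$ checks out against (\ref{eq:quantiles}), so nothing is missing.
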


\begin{theorem}\label{thm:asymptoticmean}
	Under the same assumptions as in Result \ref{thm:asymptoticreliability}, the asymptotic distribution of the estimated mean lifetime of devices
	under normal operating conditions based on the MDPDE $\widehat{\boldsymbol{\theta}}^\beta$, $\operatorname{E}_T(\widehat{\boldsymbol{\theta}}^\beta),$ is given by
	$$\sqrt{N}(\operatorname{E}_T(\widehat{\boldsymbol{\theta}}^\beta) - \operatorname{E}_T(\boldsymbol{\theta}_0)) \xrightarrow[N \rightarrow \infty]{L} \mathcal{N}\left(\boldsymbol{0}, \sigma(\operatorname{E}_T(\boldsymbol{\theta}_0)) \right) $$
	with
	$$\sigma(\operatorname{E}_T(\boldsymbol{\theta}_0))^2 = \nabla \operatorname{E}_T(\boldsymbol{\theta}_0)^T \boldsymbol{J}_\beta^{-1}(\boldsymbol{\theta}_0)\boldsymbol{K}_\beta(\boldsymbol{\theta}_0)\boldsymbol{J}_\beta^{-1}(\boldsymbol{\theta}_0)\nabla \operatorname{E}_T(\boldsymbol{\theta}_0),$$
	where the matrices 
	$\boldsymbol{J}_\beta(\boldsymbol{\theta}_0)$ and $\boldsymbol{K}_\beta(\boldsymbol{\theta}_0)$ are as defined in (\ref{eq:JK}) and 
	$\nabla\operatorname{E}_T(\boldsymbol{\theta})^T = \operatorname{E}_T(\boldsymbol{\theta}) \left(1, x_0, \psi_0\left(1+\frac{1}{\eta}\right)\frac{-1}{\eta^2}\right)$ is the gradient of the function $\operatorname{E}_T(\boldsymbol{\theta})$ defined in (\ref{eq:loglinearmean}), and $\psi_0\left(\cdot\right)$ denotes the digamma function.
\end{theorem}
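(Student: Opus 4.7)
The plan is to derive the asymptotic distribution by a direct application of the Delta method to the result already established in Theorem \ref{thm:asymptoticestimator}. Since the mean lifetime $\operatorname{E}_T(\boldsymbol{\theta}) = \exp(a_0+a_1x_0)\Gamma(1+1/\eta)$ is a smooth (infinitely differentiable) function of $\boldsymbol{\theta}$ on the admissible parameter space (using that $\eta > 0$ keeps the argument of $\Gamma$ away from its poles), the standard multivariate Delta method yields
$$\sqrt{N}\bigl(\operatorname{E}_T(\widehat{\boldsymbol{\theta}}^\beta) - \operatorname{E}_T(\boldsymbol{\theta}_0)\bigr) \xrightarrow[N \to \infty]{L} \mathcal{N}\Bigl(0,\; \nabla\operatorname{E}_T(\boldsymbol{\theta}_0)^T\,\boldsymbol{J}_\beta^{-1}(\boldsymbol{\theta}_0)\boldsymbol{K}_\beta(\boldsymbol{\theta}_0)\boldsymbol{J}_\beta^{-1}(\boldsymbol{\theta}_0)\,\nabla\operatorname{E}_T(\boldsymbol{\theta}_0)\Bigr),$$
once the gradient is identified, which will give exactly the claimed $\sigma(\operatorname{E}_T(\boldsymbol{\theta}_0))^2$.

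The key step is therefore the explicit computation of $\nabla \operatorname{E}_T(\boldsymbol{\theta})$. Differentiating with respect to $a_0$ and $a_1$ is immediate and produces factors of $1$ and $x_0$, respectively, multiplied by $\operatorname{E}_T(\boldsymbol{\theta})$. For the partial derivative with respect to $\eta$, I will use the identity $\Gamma'(z) = \Gamma(z)\psi_0(z)$, where $\psi_0$ is the digamma function, together with the chain rule applied to the inner function $\eta \mapsto 1+1/\eta$ whose derivative is $-1/\eta^2$. This yields
$$\frac{\partial}{\partial \eta}\operatorname{E}_T(\boldsymbol{\theta}) = \operatorname{E}_T(\boldsymbol{\theta})\,\psi_0\!\left(1+\tfrac{1}{\eta}\right)\,\frac{-1}{\eta^2},$$
matching the form stated in the theorem. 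Factoring out the common scalar $\operatorname{E}_T(\boldsymbol{\theta})$ gives the required row vector.

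Finally I would assemble the three ingredients: the asymptotic normality of $\widehat{\boldsymbol{\theta}}^\beta$ with asymptotic covariance $\boldsymbol{J}_\beta^{-1}\boldsymbol{K}_\beta\boldsymbol{J}_\beta^{-1}$ from Theorem \ref{thm:asymptoticestimator}, the smoothness of $\operatorname{E}_T(\cdot)$, and the computed gradient, and invoke the Delta method to conclude. There is no real obstacle here: everything follows the same template as Theorems \ref{thm:asymptoticreliability} and \ref{thm:asymptoticquantiles}, and the only mildly delicate point is remembering that the derivative of $\Gamma(1+1/\eta)$ introduces the digamma function composed with the chain-rule factor $-1/\eta^2$, which is precisely what produces the third component of the gradient.
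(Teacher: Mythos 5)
Your proposal is correct and follows exactly the route the paper intends: the paper gives no separate proof of this result, but derives it (as stated just before Result 3) by applying the multivariate Delta method to the asymptotic normality of $\widehat{\boldsymbol{\theta}}^\beta$ from Result \ref{thm:asymptoticestimator}, with the gradient of $\operatorname{E}_T(\boldsymbol{\theta})$ computed via $\Gamma'(z)=\Gamma(z)\psi_0(z)$ and the chain rule, just as you do. Your gradient computation matches the stated $\nabla\operatorname{E}_T(\boldsymbol{\theta})^T = \operatorname{E}_T(\boldsymbol{\theta})\left(1, x_0, \psi_0\left(1+\frac{1}{\eta}\right)\frac{-1}{\eta^2}\right)$ exactly.
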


As $\widehat{\boldsymbol{\theta}}^\beta$ is a consistent estimator of the true parameter value, $\boldsymbol{\theta}_0$, 
the diagonal entries of the matrix $\boldsymbol{J}_\beta^{-1}(\widehat{\boldsymbol{\theta}}^\beta)\boldsymbol{K}_\beta(\widehat{\boldsymbol{\theta}}^\beta)\boldsymbol{J}_\beta^{-1}(\widehat{\boldsymbol{\theta}}^\beta)$
are consistent estimators of the asymptotic variances of the model parameters, $(\widehat{a}_0^\beta, \widehat{a}_1^\beta, \widehat{\eta}^\beta)$.

Therefore, from Result \ref{thm:asymptoticestimator}, we can obtain asymptotic CIs for $a_0,$ $a_1$ and $\eta$  with confidence level $(1-\alpha)\%,$ as
\begin{equation}\label{eq:CI}
	\widehat{a}_i^\beta \pm z_{\alpha/2}\frac{\widehat{\sigma}(a_i^\beta)}{\sqrt{N}}, \hspace{0.3cm} i = 0,1, \hspace{0.3cm}\text{ and }\hspace{0.3cm} \widehat{\eta}^\beta \pm z_{\alpha/2}\frac{\widehat{\sigma}(\eta^\beta)}{\sqrt{N}},
\end{equation} with $z_{\alpha/2}$ being the upper $\alpha/2$ percentage point of the standard normal distribution and $\sigma^2(a_i^\beta)$ and $\sigma^2(\eta^\beta)$  are the estimated variances of $\widehat{a}_i^\beta$ and $\widehat{\eta}^\beta,$ respectively.
Furthermore, approximate two-sided $100(1 - \alpha)\%$ CI of the reliability, $(1-\alpha)$-quantile and mean lifetime under normal condition are given, respectively, by
$$R_t(\widehat{\boldsymbol{\theta}}^\beta) \pm  z_{\alpha/2}\frac{\sigma(R_t(\widehat{\boldsymbol{\theta}}^\beta))}{\sqrt{N}}, \hspace{0.3cm}
Q_{1-\alpha}(\widehat{\boldsymbol{\theta}}^\beta) \pm  z_{\alpha/2}\frac{\sigma(Q_{1-\alpha}(\widehat{\boldsymbol{\theta}}^\beta))}{\sqrt{N}}
\hspace{0.2cm} $$
and
$$\hspace{0.2cm} \operatorname{E}_{T}(\widehat{\boldsymbol{\theta}}^\beta) \pm  z_{\alpha/2}\frac{\sigma(\operatorname{E}_T(\widehat{\boldsymbol{\theta}}^\beta))}{\sqrt{N}},$$
where $\sigma(R_t(\widehat{\boldsymbol{\theta}}^\beta)),$ $\sigma(Q_{1-\alpha}(\widehat{\boldsymbol{\theta}}^\beta))$ and $\sigma(\operatorname{E}_{T}(\widehat{\boldsymbol{\theta}}^\beta))$ are as defined in Results \ref{thm:asymptoticreliability}, \ref{thm:asymptoticquantiles} and \ref{thm:asymptoticmean}, respectively.

The above CIs work well for large samples. However, in the case of small samples, the interval limits may need to be truncated to satisfy some constraints, namely, positivity of the quantiles and mean lifetime and reliability lying between $(0,1)$. To avoid such truncation, Balakrishnan et al. (2022) proposed transformed CIs based on the logit function (for the reliability) and logarithmic function (for the quantile and mean lifetime).
The resulting asymptotic CIs for the reliability, quantile and mean lifetime are, respectively, given by
$$  \left[ \frac{R_t(\widehat{\boldsymbol{\theta}}^\beta)}{R_t(\widehat{\boldsymbol{\theta}}^\beta)+(1-R_t(\widehat{\boldsymbol{\theta}}^\beta))S}, \frac{R_t(\widehat{\boldsymbol{\theta}}^\beta)}{R_t(\widehat{\boldsymbol{\theta}}^\beta)+(1-R_t(\widehat{\boldsymbol{\theta}}^\beta))/S}\right],
$$
\begin{align*}
	\Bigg[&Q_{1-\alpha}(\widehat{\boldsymbol{\theta}}^\beta) \exp\left( -\frac{z_{\alpha/2}}{\sqrt{N}}\frac{\sigma(Q_{1-\alpha}(\widehat{\boldsymbol{\theta}}^\beta))}{Q_{1-\alpha}(\widehat{\boldsymbol{\theta}}^\beta)}\right), \\ 
	& \hspace{1.7cm} \widehat{Q}_{1-\alpha}^\beta \exp\left( \frac{z_{\alpha/2}}{\sqrt{N}}\frac{\sigma(Q_{1-\alpha}(\widehat{\boldsymbol{\theta}}^\beta))}{Q_{1-\alpha}(\widehat{\boldsymbol{\theta}}^\beta)}\right) \Bigg],
\end{align*}
and 
\begin{align*}
	\Bigg[&\operatorname{E}_T(\widehat{\boldsymbol{\theta}}^\beta)\exp\left( -\frac{z_{\alpha/2}}{\sqrt{N}}\frac{\sigma(\operatorname{E}_T(\widehat{\boldsymbol{\theta}}^\beta))}{\operatorname{E}_T(\widehat{\boldsymbol{\theta}}^\beta)}\right),\\ & \hspace{1.7cm} \operatorname{E}_T(\widehat{\boldsymbol{\theta}}^\beta) \exp\left( \frac{z_{\alpha/2}}{\sqrt{N}}\frac{\sigma(\operatorname{E}_T(\widehat{\boldsymbol{\theta}}^\beta))}{\operatorname{E}_T(\widehat{\boldsymbol{\theta}}^\beta)}\right) \Bigg],
\end{align*}
with $ S= \exp\left( \frac{z_{\alpha/2}}{\sqrt{N}}\frac{\sigma(R_t(\widehat{\boldsymbol{\theta}}^\beta))}{\widehat{R}_0^\beta(t)(1-\widehat{R}_0^\beta(t))},\right)$ and $\sigma(R_t(\boldsymbol{\theta})),$ $\sigma(Q_{1-\alpha}(\boldsymbol{\theta}))$ and $\sigma(\operatorname{E}_T(\boldsymbol{\theta}))$ are as defined in Results \ref{thm:asymptoticreliability}-\ref{thm:asymptoticmean}.

Balakrishnan and Ling (2013) empirically showed that  CIs for the reliability based on the MLE  constructed by applying the logit-transformation approach are more accurate than direct CIs, but the transformation approach  does not work well in the case of small samples as the MLE of the mean lifetime for Weibull distribution  in (\ref{eq:loglinearmean}) does not possess a near normal distribution.

\section{Wald-type tests} \label{sec:robusttest}

In this section, we will consider composite null hypothesis on the model parameter $\boldsymbol{\theta}$ of the form
\begin{equation}\label{eq:null}\operatorname{H}_0 : \boldsymbol{m}(\boldsymbol{\theta}) = 0,
\end{equation}
where $\boldsymbol{m} : \mathbb{R}^{3} \rightarrow \mathbb{R}^{r} $, with $r \in \{1,2\}.$
Choosing $\boldsymbol{m}(\boldsymbol{\theta}) = a_1,$ the null hypothesis tests if the stress level does affect the lifetime of the one-shot devices or not, and with the choice of $\boldsymbol{m}(\boldsymbol{\theta}) = (m_0, m_1,m_2)\boldsymbol{\theta} - d,$ with $d \in \mathbb{R},$  linear null hypothesis can be tested.

	The Wald-type statistic based on the MDPDE, $\widehat{\boldsymbol{\theta}}^\beta,$ for  testing the null hypothesis in (\ref{eq:null}) is given by
	\begin{equation}\label{eq:Wtypest}
		\begin{aligned}
		W_{N}(\widehat{\boldsymbol{\theta}}^\beta) = &N \boldsymbol{m}^T(\widehat{\boldsymbol{\theta}}^\beta) \bigg(\boldsymbol{M}^T(\widehat{\boldsymbol{\theta}}^\beta)\boldsymbol{J}_\beta^{-1}(\widehat{\boldsymbol{\theta}}^\beta)	\\
		 & \times \boldsymbol{K}_\beta(\widehat{\boldsymbol{\theta}}^\beta)\boldsymbol{J}_\beta^{-1}(\widehat{\boldsymbol{\theta}}^\beta)\boldsymbol{M}(\widehat{\boldsymbol{\theta}}^\beta)\bigg)^{-1}
		\boldsymbol{m}(\widehat{\boldsymbol{\theta}}^\beta),
	\end{aligned}
	\end{equation}
	where $\boldsymbol{M}(\widehat{\boldsymbol{\theta}}^\beta) = \frac{\partial \boldsymbol{m}^T(\boldsymbol{\theta})}{\partial \boldsymbol{\theta}}$ is a matrix of rank $r$ and $\boldsymbol{J}_\beta(\boldsymbol{\theta})$ and $\boldsymbol{K}_\beta(\boldsymbol{\theta})$ are defined as in (\ref{eq:JK}).

\begin{remark}
		The exponential distribution is a simple yet useful model for fitting lifetime data. However, it assumes that the lifetime hazard function is constant over time, indicating that the probability of failure at any given moment is the same, regardless of how long the device has been operating. This assumption may not be adequate for products that experience degradation over time.
	Because the exponential distribution is a particular case of the Weibull family with $\eta = 1,$ its validity can be tested in practice using the linear null hypothesis $\operatorname{H}_0: \eta = 1.$
\end{remark}
The following result presents the asymptotic distribution of this Wald-type statistic.

\begin{theorem}\label{thm:asymptoticstat}
	The asymptotic distribution of the Wald-type statistic defined in (\ref{eq:Wtypest}), under the composite null hypothesis (\ref{eq:null}), is a chi-squared $(\chi^2)$ distribution with $r$ degrees of freedom.
\end{theorem}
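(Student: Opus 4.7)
The plan is to reduce the Wald-type statistic to a quadratic form in an asymptotically normal vector and then invoke the standard result that $\boldsymbol{Z}^{T}\boldsymbol{\Sigma}^{-1}\boldsymbol{Z}\sim\chi^{2}_{r}$ when $\boldsymbol{Z}\sim\mathcal{N}(\boldsymbol{0},\boldsymbol{\Sigma})$ with $\boldsymbol{\Sigma}$ an invertible $r\times r$ matrix. The main pieces we have at our disposal are Result \ref{thm:asymptoticestimator}, which gives the asymptotic normality of $\widehat{\boldsymbol{\theta}}^{\beta}$ with covariance $\boldsymbol{J}_{\beta}^{-1}(\boldsymbol{\theta}_{0})\boldsymbol{K}_{\beta}(\boldsymbol{\theta}_{0})\boldsymbol{J}_{\beta}^{-1}(\boldsymbol{\theta}_{0})$, plus the smoothness of $\boldsymbol{m}$ and the rank condition on $\boldsymbol{M}(\boldsymbol{\theta})$.

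First, I would apply a first-order Taylor expansion of $\boldsymbol{m}$ around $\boldsymbol{\theta}_{0}$. Under $\operatorname{H}_{0}$ one has $\boldsymbol{m}(\boldsymbol{\theta}_{0})=\boldsymbol{0}$, so
$$\sqrt{N}\,\boldsymbol{m}(\widehat{\boldsymbol{\theta}}^{\beta}) = \boldsymbol{M}^{T}(\boldsymbol{\theta}_{0})\sqrt{N}(\widehat{\boldsymbol{\theta}}^{\beta}-\boldsymbol{\theta}_{0}) + o_{P}(1).$$
Combining this with Result \ref{thm:asymptoticestimator} and the delta method gives
$$\sqrt{N}\,\boldsymbol{m}(\widehat{\boldsymbol{\theta}}^{\beta})\xrightarrow[N\to\infty]{L}\mathcal{N}\bigl(\boldsymbol{0},\,\boldsymbol{\Sigma}_{\beta}(\boldsymbol{\theta}_{0})\bigr),\qquad \boldsymbol{\Sigma}_{\beta}(\boldsymbol{\theta}_{0}) = \boldsymbol{M}^{T}(\boldsymbol{\theta}_{0})\boldsymbol{J}_{\beta}^{-1}(\boldsymbol{\theta}_{0})\boldsymbol{K}_{\beta}(\boldsymbol{\theta}_{0})\boldsymbol{J}_{\beta}^{-1}(\boldsymbol{\theta}_{0})\boldsymbol{M}(\boldsymbol{\theta}_{0}),$$
which is an $r\times r$ matrix; the rank-$r$ assumption on $\boldsymbol{M}(\boldsymbol{\theta}_{0})$ together with the positive definiteness of $\boldsymbol{J}_{\beta}^{-1}\boldsymbol{K}_{\beta}\boldsymbol{J}_{\beta}^{-1}$ guarantees that $\boldsymbol{\Sigma}_{\beta}(\boldsymbol{\theta}_{0})$ is non-singular.

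Next, I would replace the unknown $\boldsymbol{\theta}_{0}$ inside the sandwich covariance by the consistent estimator $\widehat{\boldsymbol{\theta}}^{\beta}$. Since $\boldsymbol{M}(\cdot)$, $\boldsymbol{J}_{\beta}(\cdot)$ and $\boldsymbol{K}_{\beta}(\cdot)$ are continuous in $\boldsymbol{\theta}$ on $\Theta$ and $\boldsymbol{\Sigma}_{\beta}(\boldsymbol{\theta}_{0})$ is invertible, the continuous mapping theorem yields
$$\boldsymbol{\Sigma}_{\beta}(\widehat{\boldsymbol{\theta}}^{\beta})^{-1} \xrightarrow[N\to\infty]{P} \boldsymbol{\Sigma}_{\beta}(\boldsymbol{\theta}_{0})^{-1}.$$
Then Slutsky's theorem, applied to the quadratic form
$$W_{N}(\widehat{\boldsymbol{\theta}}^{\beta}) = \bigl(\sqrt{N}\,\boldsymbol{m}(\widehat{\boldsymbol{\theta}}^{\beta})\bigr)^{T}\boldsymbol{\Sigma}_{\beta}(\widehat{\boldsymbol{\theta}}^{\beta})^{-1}\bigl(\sqrt{N}\,\boldsymbol{m}(\widehat{\boldsymbol{\theta}}^{\beta})\bigr),$$
shows that $W_{N}(\widehat{\boldsymbol{\theta}}^{\beta})$ has the same limit in distribution as $\boldsymbol{Z}^{T}\boldsymbol{\Sigma}_{\beta}(\boldsymbol{\theta}_{0})^{-1}\boldsymbol{Z}$ with $\boldsymbol{Z}\sim\mathcal{N}(\boldsymbol{0},\boldsymbol{\Sigma}_{\beta}(\boldsymbol{\theta}_{0}))$. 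The classical quadratic-form identity then gives the $\chi^{2}_{r}$ limit.

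The mildly delicate point, and the one I would treat most carefully, is the invertibility step: one must argue that $\boldsymbol{\Sigma}_{\beta}(\boldsymbol{\theta}_{0})$ is indeed positive definite, which in turn requires that $\boldsymbol{W}$ has full column rank (so that $\boldsymbol{J}_{\beta}$ and $\boldsymbol{K}_{\beta}$ are positive definite for $\beta\geq 0$ on admissible probability vectors) and that $\boldsymbol{M}(\boldsymbol{\theta}_{0})$ has rank $r$ as stated. Everything else is a routine delta-method/Slutsky assembly, and the remainder of the calculation is essentially identical to the standard Wald argument for divergence-based estimators already used in Balakrishnan et al.\ (2022), so one can simply cite the analogous proof to close.
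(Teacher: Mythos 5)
Your proposal is correct and follows essentially the same route as the paper's own proof: asymptotic normality of $\sqrt{N}\,\boldsymbol{m}(\widehat{\boldsymbol{\theta}}^{\beta})$ via the delta method applied to Result \ref{thm:asymptoticestimator}, full rank of the resulting $r\times r$ covariance, and consistency of $\widehat{\boldsymbol{\theta}}^{\beta}$ to justify plugging the estimator into the sandwich matrix. You are merely more explicit than the paper about the Taylor/Slutsky bookkeeping and the positive-definiteness condition, which the paper simply asserts as ``the covariance matrix has maximum rank.''
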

\begin{proof}
	See Appendix
\end{proof}

Based on Result \ref{thm:asymptoticstat}, for any $\beta \geq 0,$ the critical region with significance level $\alpha$ for the hypothesis test with linear null hypothesis in (\ref{eq:null}) is given by
\begin{equation}\label{eq:criticalregionWtest}
	\mathcal{R}_{\alpha} = \{(n_1,...,n_{L+1})  \text{ s.t. }  W_{N}(\widehat{\boldsymbol{\theta}}^\beta) > \chi^2_{r,\alpha}\},
\end{equation}
where $\chi^2_{r,\alpha}$ denotes the upper $\alpha$ percentage point of a chi square distribution with $r$ degrees of freedom.
The following result provides an asymptotic approximation to the power of the test.

\begin{theorem} \label{thmasymptoticpower}
	Let $\boldsymbol{\theta}^\ast \in \Theta$ be the true value of the parameter $\boldsymbol{\theta}$ with $\boldsymbol{m}(\boldsymbol{\theta}^\ast) \neq 0.$ Then, the approximate power function of the test
	in (\ref{eq:criticalregionWtest}) is given by
	$$\beta_N\left(\boldsymbol{\theta}^\ast\right) \approx 1- \Phi\left(\frac{\sqrt{N}}{\sigma_{W_N(\boldsymbol{\theta}^\ast)} } \left(\frac{\chi^2_{r,\alpha}}{N}- \ell^\ast(\boldsymbol{\theta}^\ast, \boldsymbol{\theta}^\ast) \right)\right),$$
	where 
	\begin{align*}
	\ell^\ast(\boldsymbol{\theta}_1,\boldsymbol{\theta}_2) =& N \boldsymbol{m}^T(\boldsymbol{\theta}_1) \bigg(\boldsymbol{M}^T(\boldsymbol{\theta}_2)\boldsymbol{J}_\beta^{-1}(\boldsymbol{\theta}_2)\\
	& \times \boldsymbol{K}_\beta(\boldsymbol{\theta}_2)\boldsymbol{J}_\beta^{-1}(\boldsymbol{\theta}_2)\boldsymbol{M}(\boldsymbol{\theta}_2)\bigg)^{-1}
	\boldsymbol{m}(\boldsymbol{\theta}_1),
\end{align*}
	
	\begin{align*}
		\sigma_{W_N(\boldsymbol{\theta}^\ast)} =& \frac{\partial \ell^\ast(\boldsymbol{\theta},\boldsymbol{\theta}^\ast)}{\partial \boldsymbol{\theta}^T} \bigg|_{\boldsymbol{\theta} = \boldsymbol{\theta}^\ast} \boldsymbol{J}_\beta^{-1}(\boldsymbol{\theta}^\ast)\boldsymbol{K}_\beta(\boldsymbol{\theta}^\ast)\boldsymbol{J}_\beta^{-1}(\boldsymbol{\theta}^\ast)\\
		&\times 
		\frac{\partial \ell^\ast(\boldsymbol{\theta},\boldsymbol{\theta}^\ast)}{\partial \boldsymbol{\theta}} \bigg|_{\boldsymbol{\theta} = \boldsymbol{\theta}^\ast}
	\end{align*} 
	and $\Phi(\cdot)$ denotes the  distribution function of a standard normal distribution.
\end{theorem}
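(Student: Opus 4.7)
The plan is to write the power function $\beta_N(\boldsymbol{\theta}^\ast)=P_{\boldsymbol{\theta}^\ast}\!\left(W_N(\widehat{\boldsymbol{\theta}}^\beta)>\chi^2_{r,\alpha}\right)$ and exploit the fact that, since $\boldsymbol{m}(\boldsymbol{\theta}^\ast)\neq\boldsymbol{0}$, the statistic $W_N$ is no longer asymptotically $\chi^2$ but rather diverges at rate $N$. Therefore I cannot use the central $\chi^2$ approximation of Theorem \ref{thm:asymptoticstat}; instead, I will apply the Delta-method to the scalar function
$$
h(\boldsymbol{\theta})=\boldsymbol{m}^T(\boldsymbol{\theta})\,\bigl[\boldsymbol{M}^T(\boldsymbol{\theta}^\ast)\boldsymbol{J}_\beta^{-1}(\boldsymbol{\theta}^\ast)\boldsymbol{K}_\beta(\boldsymbol{\theta}^\ast)\boldsymbol{J}_\beta^{-1}(\boldsymbol{\theta}^\ast)\boldsymbol{M}(\boldsymbol{\theta}^\ast)\bigr]^{-1}\boldsymbol{m}(\boldsymbol{\theta}),
$$
evaluated at $\widehat{\boldsymbol{\theta}}^\beta$. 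Note that $h(\boldsymbol{\theta})$ is precisely $\ell^\ast(\boldsymbol{\theta},\boldsymbol{\theta}^\ast)$ (up to the factor $N$, which I interpret as normalising $W_N$), so that $W_N(\widehat{\boldsymbol{\theta}}^\beta)=N\,\ell^\ast(\widehat{\boldsymbol{\theta}}^\beta,\widehat{\boldsymbol{\theta}}^\beta)$.

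First I would use the consistency of the MDPDE (which is a direct consequence of Theorem \ref{thm:asymptoticestimator}) together with continuity of the maps $\boldsymbol{\theta}\mapsto\boldsymbol{M}(\boldsymbol{\theta}),\boldsymbol{J}_\beta(\boldsymbol{\theta}),\boldsymbol{K}_\beta(\boldsymbol{\theta})$ to argue, via Slutsky, that
$$
\ell^\ast(\widehat{\boldsymbol{\theta}}^\beta,\widehat{\boldsymbol{\theta}}^\beta)=\ell^\ast(\widehat{\boldsymbol{\theta}}^\beta,\boldsymbol{\theta}^\ast)+o_P(N^{-1/2}),
$$
so that only the dependence through the first argument contributes to the leading asymptotic behaviour. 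Next I would perform a first-order Taylor expansion of $\boldsymbol{\theta}\mapsto\ell^\ast(\boldsymbol{\theta},\boldsymbol{\theta}^\ast)$ around $\boldsymbol{\theta}^\ast$, apply Theorem \ref{thm:asymptoticestimator}, and invoke the Delta-method to obtain
$$
\sqrt{N}\,\bigl(\ell^\ast(\widehat{\boldsymbol{\theta}}^\beta,\boldsymbol{\theta}^\ast)-\ell^\ast(\boldsymbol{\theta}^\ast,\boldsymbol{\theta}^\ast)\bigr)\xrightarrow[N\to\infty]{L}\mathcal{N}\!\left(0,\sigma_{W_N(\boldsymbol{\theta}^\ast)}^2\right),
$$
with the asymptotic variance exactly of the form given in the theorem, since the gradient of $\ell^\ast(\cdot,\boldsymbol{\theta}^\ast)$ at $\boldsymbol{\theta}^\ast$ is $\partial\ell^\ast(\boldsymbol{\theta},\boldsymbol{\theta}^\ast)/\partial\boldsymbol{\theta}^T\vert_{\boldsymbol{\theta}=\boldsymbol{\theta}^\ast}$ and the asymptotic covariance of the MDPDE is $\boldsymbol{J}_\beta^{-1}(\boldsymbol{\theta}^\ast)\boldsymbol{K}_\beta(\boldsymbol{\theta}^\ast)\boldsymbol{J}_\beta^{-1}(\boldsymbol{\theta}^\ast)$.

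Having assembled these ingredients, the final step is a routine standardisation:
$$
\beta_N(\boldsymbol{\theta}^\ast)=P_{\boldsymbol{\theta}^\ast}\!\left(\ell^\ast(\widehat{\boldsymbol{\theta}}^\beta,\widehat{\boldsymbol{\theta}}^\beta)>\frac{\chi^2_{r,\alpha}}{N}\right)=P_{\boldsymbol{\theta}^\ast}\!\left(\frac{\sqrt{N}\bigl(\ell^\ast(\widehat{\boldsymbol{\theta}}^\beta,\widehat{\boldsymbol{\theta}}^\beta)-\ell^\ast(\boldsymbol{\theta}^\ast,\boldsymbol{\theta}^\ast)\bigr)}{\sigma_{W_N(\boldsymbol{\theta}^\ast)}}>\frac{\sqrt{N}}{\sigma_{W_N(\boldsymbol{\theta}^\ast)}}\!\left(\frac{\chi^2_{r,\alpha}}{N}-\ell^\ast(\boldsymbol{\theta}^\ast,\boldsymbol{\theta}^\ast)\right)\right),
$$
and replacing the left-hand side by its Gaussian limit yields the stated approximation $1-\Phi(\cdot)$. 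The only genuine obstacle is the justification that the second-argument dependence in $\ell^\ast(\widehat{\boldsymbol{\theta}}^\beta,\widehat{\boldsymbol{\theta}}^\beta)$ can be frozen at $\boldsymbol{\theta}^\ast$ at the $N^{-1/2}$ rate; this follows because the sandwich matrix is smooth in $\boldsymbol{\theta}$ and $\widehat{\boldsymbol{\theta}}^\beta-\boldsymbol{\theta}^\ast=O_P(N^{-1/2})$, but it is the step that accounts for the particular form of the variance $\sigma_{W_N(\boldsymbol{\theta}^\ast)}^2$ given in the statement, which involves only the derivative with respect to the first argument of $\ell^\ast$. All remaining manipulations are standard Delta-method bookkeeping and a continuous-mapping/Slutsky argument.
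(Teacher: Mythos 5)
Your proposal follows essentially the same route as the paper's proof: a first-order Taylor expansion of $\ell^\ast(\cdot,\boldsymbol{\theta}^\ast)$ about $\boldsymbol{\theta}^\ast$ combined with the asymptotic normality of the MDPDE (Result \ref{thm:asymptoticestimator}) to obtain the Gaussian limit with variance $\sigma^2_{W_N(\boldsymbol{\theta}^\ast)}$, followed by an appeal to consistency to replace $\ell^\ast(\widehat{\boldsymbol{\theta}}^\beta,\boldsymbol{\theta}^\ast)$ by $W_N(\widehat{\boldsymbol{\theta}}^\beta)$ and a routine standardisation. The one place where you are (like the paper) informal is the claim that freezing the second argument costs only $o_P(N^{-1/2})$: smoothness of the sandwich matrix together with $\widehat{\boldsymbol{\theta}}^\beta-\boldsymbol{\theta}^\ast=O_P(N^{-1/2})$ yields only $O_P(N^{-1/2})$ for that replacement, but the result is stated as an approximation and the paper's own proof invokes nothing more than consistency at this step.
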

\begin{proof}
	See Appendix.
\end{proof}

It is clear that $\lim_{N\rightarrow \infty} \beta_N(\boldsymbol{\theta}^\ast)=1,$ and consequently the Wald-type statistic is consistent in the  sense of Fraser (Fraser (1957)).

\subsection{Contiguous alternative hypothesis}

We may find a better approximation to the power function of the Wald-type test statistic, $W_{N}(\widehat{\boldsymbol{\theta}}^\beta),$ under contiguous alternative hypothesis of the form
\begin{equation} \label{eq:cont1}
	\operatorname{H}_{1,n}: \boldsymbol{\theta} = \boldsymbol{\theta}_n  = \boldsymbol{\theta}_0 + \frac{\boldsymbol{d}}{\sqrt{N}}\in \Theta \setminus \Theta_0,
\end{equation}
where $\boldsymbol{\theta}_0 \in \Theta_0$  is the closest element to $\boldsymbol{\theta}_n $ in terms of Euclidean distance and $\boldsymbol{d} \in \mathbb{R}^3$ indicates the direction of the difference between the true value of the vector parameter and the closest element in the null space.
Further, alternative contiguous hypothesis could also be defined by relaxing the null hypothesis constraint, yielding
\begin{equation}\label{eq:cont2}
	\begin{aligned}
		\operatorname{H}_0&: \boldsymbol{m}(\boldsymbol{\theta}) = \boldsymbol{0}_r,\\
		\operatorname{H}_{1,n}^\ast&: \boldsymbol{m}(\boldsymbol{\theta}) = \frac{\boldsymbol{\delta}}{\sqrt{N}},
	\end{aligned}
\end{equation}
for some $\boldsymbol{\delta} \in \mathbb{R}^r.$ Note that  a Taylor series expansion of $\boldsymbol{m}(\boldsymbol{\theta}_n)$ around $\boldsymbol{\theta}_0$ yields
\begin{align*}
	\boldsymbol{m}(\boldsymbol{\theta}_n) &= \boldsymbol{M}^T(\boldsymbol{\theta}_0)\left(\boldsymbol{\theta}_n - \boldsymbol{\theta}_0\right) + o\left(\parallel \boldsymbol{\theta}_n - \boldsymbol{\theta}_0 \parallel \right)\\
	& =  \frac{\boldsymbol{M}^T(\boldsymbol{\theta}_0)\boldsymbol{d}}{\sqrt{N}} + o\left(\parallel \boldsymbol{\theta}_n - \boldsymbol{\theta}_0 \parallel \right)
\end{align*}
and so $\operatorname{H}_{1,n}$ and $\operatorname{H}_{1,n}^\ast$ are asymptotically equivalent by choosing $\boldsymbol{\delta} = \boldsymbol{M}^T(\boldsymbol{\theta}_0)\boldsymbol{d}.$ The following result states the asymptotic distribution of the Wald-type test statistics under both contiguous alternative hypotheses.

\begin{theorem}
	The asymptotic distribution of the Wald-type test statistic based on the MDPDE, with tuning parameter $\beta$, $W_{N}(\widehat{\boldsymbol{\theta}}^\beta),$ is given by
	\begin{itemize}
		\item $W_{N}(\widehat{\boldsymbol{\theta}}^\beta) \xrightarrow[n\rightarrow \infty]{L} \chi^2_r\left( \nu_1 \right)$ 
		with
		\begin{equation}\label{eq:v1}
					\begin{aligned}
				\nu_1 =  \boldsymbol{d}^T\boldsymbol{M}(\boldsymbol{\theta}_0)&\bigg[ \boldsymbol{M}^T(\boldsymbol{\theta}_0) \boldsymbol{J}_\beta^{-1}(\boldsymbol{\theta}_0) \boldsymbol{K}_\beta(\boldsymbol{\theta}_0)
				\\ &\times \boldsymbol{J}_\beta^{-1}(\boldsymbol{\theta}_0) \boldsymbol{M}(\boldsymbol{\theta}_0) \bigg]^{-1}\boldsymbol{M}^T(\boldsymbol{\theta}_0)\boldsymbol{d}
			\end{aligned}
		\end{equation}

		under the alternative hypothesis $\operatorname{H}_{1,n}$ given in (\ref{eq:cont1});
		
		\item $W_{N}(\widehat{\boldsymbol{\theta}}^\beta) \xrightarrow[n\rightarrow \infty]{L} \chi^2_r\left(\nu_2\right)$
		with
		\begin{equation}\label{eq:v2}
			\begin{aligned}
			\hspace{-0.3cm}  \nu_2 = \boldsymbol{\delta}^T [ \boldsymbol{M}^T(\boldsymbol{\theta}_0) \boldsymbol{J}_\beta^{-1}(\boldsymbol{\theta}_0) \boldsymbol{K}_\beta(\boldsymbol{\theta}_0) \boldsymbol{J}_\beta^{-1}(\boldsymbol{\theta}_0) \boldsymbol{M}(\boldsymbol{\theta}_0) ]^{-1} \boldsymbol{\delta} 
		\end{aligned}
	\end{equation}
	 under the alternative hypothesis $\operatorname{H}_{1,n}$ given in (\ref{eq:cont2}).
		
	\end{itemize}
\end{theorem}

\begin{proof}
	See Theorem 3 of Basu et al. (2016)
\end{proof}

From the above result, we could approximate the power function of the test under contiguous alternative hypothesis as
\begin{equation*}
	\beta(\boldsymbol{\theta}_n) = 1- F_{\chi_{r}^2(\nu)}(\chi_{r,\alpha}^2),
\end{equation*}
where $F_{\chi_{r}^2}(\cdot)$ is the distribution function of a non-central chi-square with $r$ degrees of freedom and non-centrality parameter $\nu_1$ and $\nu_2$ defined as in Equations (\ref{eq:v1}) and (\ref{eq:v2}), respectively.

\section{Influence function analysis \label{sec:IF}}

In this section ,we analyze the robustness properties of the MDPDE and their associated Wald-type test statistics through their Influence Function (IF). The IF of an estimator or statistic represents the rate of change in its associated statistical functional with respect to a small amount of contamination by another distribution, i.e., it quantifies the impact of an infinitesimal perturbation in the true distribution underlying the data on the asymptotic value of the resulting estimator.
Thus, Hampel’s IF (Hampel, 1986) is an important measure of robustness for examining the local stability along with the global reliability of an estimator.
Therefore, robust estimators and test statistics must have bounded IF.

\subsection{Influence function of the MDPDE}
Let $\boldsymbol{G}$ be the true density underlying the data with mass function $\boldsymbol{g}.$
The IF of the MDPDE, $\widehat{\boldsymbol{\theta}}^\beta,$ at a point perturbation $\boldsymbol{n}$ is computed in terms of its corresponding statistical functional,  denoted by $\boldsymbol{T}_\beta(\boldsymbol{G}),$ as 
\begin{equation} \label{eq:IFmath}
	\text{IF}\left(\boldsymbol{t}, \boldsymbol{T}, \boldsymbol{G}\right) = \lim_{\varepsilon \rightarrow 0}\frac{\boldsymbol{T}(\boldsymbol{G}_\varepsilon)- \boldsymbol{T}(\boldsymbol{G})}{\varepsilon} = \frac{\partial \boldsymbol{T}(G_\varepsilon)}{\partial \varepsilon}\bigg|_{\varepsilon = 0},
\end{equation} 
where $\boldsymbol{G}_\varepsilon = (1-\varepsilon)\boldsymbol{G} + \varepsilon\Delta_{\boldsymbol{n}}$ is the contaminated version of $\boldsymbol{G},$ with $\varepsilon$ being the contamination proportion and $\Delta_{\boldsymbol{n}}$ the degenerate distribution at the contamination point $\boldsymbol{n}.$ For the SSALT model, we could consider being one cell contamination, and so the contamination point $ \boldsymbol{n}$ would have all elements equal to zero except for one component. 

Let us denote $F_{\boldsymbol{\theta}}$ for the assumed distribution of the multinomial model with mass function $\boldsymbol{\pi}(\boldsymbol{\theta})$ given by the SSALT model with Weibull lifetime distribution. Then, the statistical functional $\boldsymbol{T}_\beta(\boldsymbol{G})$ associated with the MDPDE is computed as  the minimizer of the DPD between the two mass functions, $\boldsymbol{\pi}_{\boldsymbol{\theta}}$ and $\boldsymbol{g}.$ The next result presents an explicit expression of the IF function of the MDPDE for the SSALT model with Weibull lifetime distribution.

\begin{theorem}\label{thm:IF}
	The IF of the MDPDE of the SSALT model, $\widehat{\boldsymbol{\theta}}^\beta,$ at a point contamination $\boldsymbol{n}$ and the assumed model distribution $F_{\boldsymbol{\theta}_0}$ is given by
	\begin{equation}\label{eq:IF}
		\text{IF}\left(\boldsymbol{n}, \boldsymbol{T}_\beta, F_{\boldsymbol{\theta}_0} \right) = \boldsymbol{J}_\beta^{-1}(\boldsymbol{\theta}_0) \boldsymbol{W}^T \boldsymbol{D}_{\boldsymbol{\pi}(\boldsymbol{\theta}_0)}^{\beta-1}\left(-\boldsymbol{\pi}(\boldsymbol{\theta}_0)+\Delta_{\boldsymbol{n}} \right).
	\end{equation}
\end{theorem}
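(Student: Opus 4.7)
The plan is to derive the IF by implicitly differentiating the estimating equations from Theorem \ref{thm:esitmatingequations} evaluated at the contaminated distribution. Concretely, the statistical functional $\boldsymbol{T}_\beta(\boldsymbol{G})$ associated with the MDPDE satisfies the population version of those estimating equations, namely
\begin{equation*}
\boldsymbol{W}^T \boldsymbol{D}_{\boldsymbol{\pi}(\boldsymbol{T}_\beta(\boldsymbol{G}))}^{\beta-1}\left(\boldsymbol{g} - \boldsymbol{\pi}(\boldsymbol{T}_\beta(\boldsymbol{G}))\right) = \boldsymbol{0}_3,
\end{equation*}
where $\boldsymbol{g}$ is the probability mass vector associated with $\boldsymbol{G}$. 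Replacing $\boldsymbol{G}$ by the contaminated mixture $\boldsymbol{G}_\varepsilon = (1-\varepsilon)\boldsymbol{G} + \varepsilon \Delta_{\boldsymbol{n}}$, which in terms of mass vectors becomes $\boldsymbol{g}_\varepsilon = (1-\varepsilon)\boldsymbol{g} + \varepsilon \Delta_{\boldsymbol{n}}$, gives an identity in $\varepsilon$ that can be differentiated.

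Next I would assume the model is correctly specified so that $\boldsymbol{g} = \boldsymbol{\pi}(\boldsymbol{\theta}_0)$ and $\boldsymbol{T}_\beta(\boldsymbol{G}) = \boldsymbol{\theta}_0$. Differentiating the contaminated estimating equation with respect to $\varepsilon$ at $\varepsilon = 0$ produces three contributions: (i) the derivative acting on the factor $\boldsymbol{D}_{\boldsymbol{\pi}(\boldsymbol{T}_\beta(\boldsymbol{G}_\varepsilon))}^{\beta-1}$ multiplied by $(\boldsymbol{g} - \boldsymbol{\pi}(\boldsymbol{\theta}_0))$, which vanishes because the second factor is zero at $\varepsilon=0$; (ii) the derivative of $\boldsymbol{g}_\varepsilon$, which equals $-\boldsymbol{\pi}(\boldsymbol{\theta}_0) + \Delta_{\boldsymbol{n}}$; and (iii) the derivative of $-\boldsymbol{\pi}(\boldsymbol{T}_\beta(\boldsymbol{G}_\varepsilon))$, which by the chain rule equals $-\boldsymbol{W}\,\text{IF}(\boldsymbol{n}, \boldsymbol{T}_\beta, F_{\boldsymbol{\theta}_0})$, after recognising that the rows $\boldsymbol{w}_j = \boldsymbol{z}_j - \boldsymbol{z}_{j-1}$ of $\boldsymbol{W}$ are precisely the gradients $\partial \pi_j(\boldsymbol{\theta})/\partial \boldsymbol{\theta}$ at $\boldsymbol{\theta}_0$, a fact that is implicit in the derivation of the estimating equations of Theorem \ref{thm:esitmatingequations}.

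Collecting these pieces yields
\begin{equation*}
\boldsymbol{W}^T \boldsymbol{D}_{\boldsymbol{\pi}(\boldsymbol{\theta}_0)}^{\beta-1}\left(-\boldsymbol{\pi}(\boldsymbol{\theta}_0) + \Delta_{\boldsymbol{n}} - \boldsymbol{W}\,\text{IF}(\boldsymbol{n}, \boldsymbol{T}_\beta, F_{\boldsymbol{\theta}_0})\right) = \boldsymbol{0}_3,
\end{equation*}
and solving for the IF, after recognising the matrix $\boldsymbol{W}^T \boldsymbol{D}_{\boldsymbol{\pi}(\boldsymbol{\theta}_0)}^{\beta-1} \boldsymbol{W}$ as $\boldsymbol{J}_\beta(\boldsymbol{\theta}_0)$ from (\ref{eq:JK}), leads directly to the stated formula (\ref{eq:IF}).

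The only subtle step is the justification for discarding contribution (i): this relies on the fact that the derivative of $\boldsymbol{D}_{\boldsymbol{\pi}(\boldsymbol{T}_\beta(\boldsymbol{G}_\varepsilon))}^{\beta-1}$ with respect to $\varepsilon$ stays bounded while it multiplies a factor that is exactly zero at $\varepsilon=0$. Everything else is a routine implicit differentiation and an application of the chain rule, so the main bookkeeping obstacle is simply verifying that $\partial \boldsymbol{\pi}(\boldsymbol{\theta})/\partial \boldsymbol{\theta}^T\big|_{\boldsymbol{\theta}_0} = \boldsymbol{W}$ (with the sign convention used), which was already established in the proof of Theorem \ref{thm:esitmatingequations}.
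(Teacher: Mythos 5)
Your proposal is correct and follows essentially the same route as the paper: implicit differentiation of the population estimating equations at the contaminated distribution, evaluation at $\varepsilon=0$ under correct model specification so that the terms multiplying $\boldsymbol{g}-\boldsymbol{\pi}(\boldsymbol{\theta}_0)$ drop out, identification of $\partial\boldsymbol{\pi}/\partial\boldsymbol{\theta}^T$ with $\boldsymbol{W}$, and inversion of $\boldsymbol{J}_\beta(\boldsymbol{\theta}_0)=\boldsymbol{W}^T\boldsymbol{D}_{\boldsymbol{\pi}(\boldsymbol{\theta}_0)}^{\beta-1}\boldsymbol{W}$; the paper merely writes the computation componentwise rather than in matrix form. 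The only tiny point to add is that the $\varepsilon$-derivative hitting the $\boldsymbol{W}^T$ factor (which also depends on $\boldsymbol{\theta}_\varepsilon$) is discarded by exactly the same vanishing argument you give for $\boldsymbol{D}_{\boldsymbol{\pi}}^{\beta-1}$, a term the paper's componentwise expansion displays explicitly as the second-derivative contribution.
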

\begin{proof}
	See Appendix.
\end{proof}

\begin{remark}
	To examine the robustness of the estimators, we should analyze the boundedness of their IFs.
	The matrix $\boldsymbol{J}_\beta(\boldsymbol{\theta}_0)$ is usually assumed to be bounded, and so the robustness of the estimators depends on the the second factor of the IF, given by
	\begin{equation}\label{eq:secondfactorIF}
		\begin{aligned}
		&\boldsymbol{W}^T \boldsymbol{D}_{\boldsymbol{\pi}(\boldsymbol{\theta}_0)}^{\beta-1}\left(-\boldsymbol{\pi}(\boldsymbol{\theta}_0)+\Delta_{\boldsymbol{n}} \right) = \\
		&\sum_{j=1}^{L+1} (\boldsymbol{z}_j -\boldsymbol{z}_{j-1} )\pi_j(\boldsymbol{\theta}_0)^{\beta-1} \left(-\pi_j(\boldsymbol{\theta}_0)+
		\Delta_{\boldsymbol{n}j}\right),
			\end{aligned}
	\end{equation}
	where $\boldsymbol{z}_j$ is as defined in (\ref{eq:zj}). 
	
	As discussed in Balakrishnan et al. (2022), all terms in the expression in (\ref{eq:secondfactorIF}) are bounded for fixed stress levels and inspection times at any contamination point $\boldsymbol{n},$ and so all MDPDEs for $\beta \geq 0,$ including the MLE, are robust against vertical outliers.
	
	Then, we should study the limiting behaviour of the IF for large inspection times or stress levels (leverage points).
	We first consider the situation where an inspection time, $t_j ,$ tends to infinity, for fixed $j$. We set $i$ for the fixed stress level corresponding to the $j-$th inspection time, i.e., $\tau_{i-1} < t_j \leq \tau_{i}.$
	As the inspection times are ordered, all inspection times from $j$ onwards tend to infinity.
	Further, note that the values of $\alpha_i, h_{i-1}$ and $h_{i-1}^\ast$  are positive constants for any $i = 1,...,k$.
	For the $j$-th term, we can write
\resizebox{0.45\textwidth}{!}{%
\begin{minipage}{0.45\textwidth}
	\begin{equation}
	\begin{aligned}
		&(\boldsymbol{z}_j-\boldsymbol{z}_{j-1})\pi_j(\boldsymbol{\theta}_0)^{\beta-1} =(G_i(T_j)-G_i(T_{j-1}))^{\beta-1} \\
		&
	\hspace{-0.3cm} \times	\left(  g_i(T_j)\begin{pmatrix}
			-T_j\\
			-T_jx_i + h_{i-1}^\ast \\
			\log\left(\frac{T_j}{\alpha_i}\right)\frac{T_j}{\eta}
		\end{pmatrix}-  g_T(T_{j-1})\begin{pmatrix}
			-T_{j-1}\\
			-T_{j-1}x_i + h_{i-1}^\ast\\
			\log\left(\frac{T_{j-1}}{\alpha_i}\right)\frac{T_{j-1}}{\eta}
		\end{pmatrix}
		\right)\\
		=& 
		\Bigg[ 
		\frac{\eta}{\alpha_{i}}\left(\frac{T_j}{\alpha_i}\right)^{\eta} \exp\left(-\left(\frac{T_j}{\alpha_i}\right)^{\eta}\right)
		\begin{pmatrix}
			-T_{j}\\
			-T_{j}x_i + h_{i-1}^\ast\\
			\log\left(\frac{T_j}{\alpha_i}\right)\frac{T_j}{\eta}
		\end{pmatrix}\\
		&-  \frac{\eta}{\alpha_{i}}\left(\frac{T_{j-1}}{\alpha_i}\right)^{\eta} \exp\left(-\left(\frac{T_{j-1}}{\alpha_i}\right)^{\eta}\right)
		\begin{pmatrix}
			-T_{j-1}\\
			-T_{j-1}x_i + h_{i-1}^\ast \\
			\log\left(\frac{T_{j-1}}{\alpha_i}\right)\frac{T_{j-1}}{\eta}
		\end{pmatrix}
		\Bigg]\\
		& \times \left( - \exp\left(-\left(\frac{T_j}{\alpha_i}\right)^{\eta}
		\right)  +   \exp\left(-\left(\frac{T_{j-1}}{\alpha_i}\right)^{\eta} \right) \right)^{\beta-1}
	\end{aligned}
	\end{equation}
	\end{minipage}
}
	with $T_j = t_j+h_{i-1}.$ All terms depending on times before $t_j$ are bounded and so taking limits on $T_j\rightarrow \infty,$ we get 
	$$
	\lim_{t_j \rightarrow \infty}(\boldsymbol{z}_j-\boldsymbol{z}_{j-1})\pi_j(\boldsymbol{\theta}_0)^{\beta-1} = \begin{cases}
		+ \infty & \text{if } \beta = 0,\\
		< \infty & \text{if } \beta > 0.\\
	\end{cases}
	$$
	Hence, the IF of the MDPDEs for positives values of $\beta$ is bounded when increasing any inspection time, whereas the IF of the MLE is unbounded for this class of leverage points.
	
	Similarly, let us consider a stress level $x_i$ and let $x_i \rightarrow \infty.$ We choose $t_j$ such that $t_j = \tau_{i-1},$ the time of stress change for the $i$-th stress level. 
	Again, since the stress levels are ordered, all subsequent  stress levels tend to infinity, and we then need to establish the boundedness of all terms from $j$ onwards.
	Here, the quantities depending on the stress level, such as $\alpha_l, h_{l-1}$ and $h_{l-1}^\ast$ for $l=i,...,k,$ are not constant. In particular, taking limits on the relations in (\ref{eq:loglinear}), (\ref{eq:ai}) and (\ref{aast}), we have 
	$$\lim_{ x_i \rightarrow \infty} \alpha_l =  \lim_{ x_i \rightarrow \infty} h_{l-1}  = \lim_{ x_i \rightarrow \infty} h^\ast_{l-1} = \begin{cases}
		0 & \text{if } a_1 \leq 0\\
		\infty & \text{if } a_1 > 0,\\
	\end{cases}$$
	for $l=i,...,k.$
	As the parameter $a_1$ is assumed to be negative, all these quantities tend to zero.
	Note that times of stress change are points of discontinuity of the lifetime density function.
	We discuss the boundedness of each term in the summation (\ref{eq:secondfactorIF}) from $j$ onwards. For the $j$-th term, we have that
	all $\alpha_{i-1}, h_{i-2}$ and $h_{i-2}^\ast$ are bounded and so we get
\resizebox{0.5\textwidth}{!}{%
	\begin{minipage}{0.45\textwidth}
	\begin{align*}
		&(\boldsymbol{z}_j-\boldsymbol{z}_{j-1})\pi_j(\boldsymbol{\theta}_0)^{\beta-1} =
		\\ & 
		 \Bigg(  g_i(t_j + h_{i-1})\begin{pmatrix}
			-(t_j + h_{i-1})\\
			-(t_j + h_{i-1})x_i + h_{i-1}^\ast \\
			\log\left(\frac{t_j + h_{i-1}}{\alpha_i}\right)\frac{t_j + h_{i-1}}{\eta}
		\end{pmatrix}\\
	& -  g_{i-1}(t_{j-1} + h_{i-2})\begin{pmatrix}
			-(t_{j-1}+ h_{i-2})\\
			-(t_{j-1}+ h_{i-2})x_i + h_{i-2}^\ast\\
			\log\left(\frac{t_{j-1}+ h_{i-2}}{\alpha_{i-1}}\right)\frac{t_{j-1}+ h_{i-2}}{\eta}
		\end{pmatrix}
		\Bigg)\\
		& \times (G_i(t_j + h_{i-1})-G_{i-2}(t_{j-1}+ h_{i-2}))^{\beta-1}
	\end{align*}
	\end{minipage}
}

\resizebox{0.5\textwidth}{!}{%
	\begin{minipage}{0.45\textwidth}
		\begin{align*}
		&(\boldsymbol{z}_j-\boldsymbol{z}_{j-1})\pi_j(\boldsymbol{\theta}_0)^{\beta-1} \\ =
		& \Bigg[ 
		\frac{\eta}{\alpha_{i}}\left(\frac{t_j + h_{i-1}}{\alpha_i}\right)^{\eta} \exp\left(-\left(\frac{t_j + h_{i-1}}{\alpha_i}\right)^{\eta}\right)\\
		& \times
		\begin{pmatrix}
			-(t_j + h_{i-1})\\
			-(t_j + h_{i-1})x_i + h_{i-1}^\ast\\
			\log\left(\frac{t_j + h_{i-1}}{\alpha_i}\right)\frac{t_j + h_{i-1}}{\eta}
		\end{pmatrix}
		-  \frac{\eta}{\alpha_{i-1}}\left(\frac{t_{j-1}+ h_{i-2}}{\alpha_{i-1}}\right)^{\eta}\\
		&\times \exp\left(-\left(\frac{t_{j-1}+ h_{i-2}}{\alpha_{i-1}}\right)^{\eta}\right)
		\begin{pmatrix}
			-(t_{j-1}+ h_{i-2})\\
			-(t_{j-1}+ h_{i-2})x_i + h_{i-1}^\ast \\
			\log\left(\frac{t_{j-1}+ h_{i-2}}{\alpha_{i-1}}\right)\frac{t_{j-1}+ h_{i-2}}{\eta}
		\end{pmatrix}
		\Bigg]\\
		& \times \left( - \exp\left(-\left(\frac{t_j+h_i}{\alpha_i}\right)^{\eta}
		\right)  +   \exp\left(-\left(\frac{t_{j-1}+ h_{i-2}}{\alpha_{i-1}}\right)^{\eta} \right) \right)^{\beta-1}\\
		\xrightarrow[x_i \rightarrow \infty]{}& \begin{cases}
			+ \infty & \text{if } \beta = 0,\\
			< \infty & \text{if } \beta > 0.\\
		\end{cases} 
	\end{align*}
\end{minipage}
}
	Now, choosing $t_j$ any time of stress change such that $t_j \geq \tau_{i},$ we have 
	\resizebox{0.5\textwidth}{!}{%
	\begin{minipage}{0.45\textwidth}
	\begin{align*}
		&(\boldsymbol{z}_j-\boldsymbol{z}_{j-1})\pi_j(\boldsymbol{\theta}_0)^{\beta-1} 
		=\\
		&\Bigg[ 
		\frac{\eta}{\alpha_{i}}\left(\frac{t_j + h_{i-1}}{\alpha_i}\right)^{\eta} \exp\left(-\left(\frac{t_j + h_{i-1}}{\alpha_i}\right)^{\eta}\right)\\
		& \times 
		\begin{pmatrix}
			-(t_j + h_{i-1})\\
			-(t_j + h_{i-1})x_i + h_{i-1}^\ast\\
			\log\left(\frac{t_j + h_{i-1}}{\alpha_i}\right)\frac{t_j + h_{i-1}}{\eta}
		\end{pmatrix}-  \frac{\eta}{\alpha_{i-1}}\left(\frac{t_{j-1}+ h_{i-2}}{\alpha_{i-1}}\right)^{\eta}\\
	& \times \exp\left(-\left(\frac{t_{j-1}+ h_{i-2}}{\alpha_{i-1}}\right)^{\eta}\right)
		\begin{pmatrix}
			-(t_{j-1}+ h_{i-2})\\
			-(t_{j-1}+ h_{i-2})x_i + h_{i-1}^\ast \\
			\log\left(\frac{t_{j-1}+ h_{i-2}}{\alpha_{i-1}}\right)\frac{t_{j-1}+ h_{i-2}}{\eta}
		\end{pmatrix}
		\Bigg]\\
		& \times \left( - \exp\left(-\left(\frac{t_j+h_i}{\alpha_i}\right)^{\eta}
		\right)  +   \exp\left(-\left(\frac{t_{j-1}+ h_{i-2}}{\alpha_{i-1}}\right)^{\eta} \right) \right)^{\beta-1}\\
		\xrightarrow[x_i \rightarrow \infty]{}& \begin{cases}
			+ \infty & \text{if } \beta = 0,\\
			< \infty & \text{if } \beta > 0.\\
		\end{cases} 
	\end{align*}
\end{minipage}
}
	Finally, if $t_j$ is any inspection time different than any time of stress change, then the previous limit vanishes, and thus the IF of the proposed MDPDE is bounded only for $\beta > 0.$ That is, the proposed estimators are also robust for all type of outliers, whereas the MLE lacks robustness against these leverage points.
\end{remark}

\subsection{Influence function of Wald-type test statistics}


The influence function of a test statistic can be similarly derived as the first derivative of the statistic viewed as a functional, and it therefore measures the approximate impact on an additional observation to the underlying data.
We first determine the functional associated with the proposed Wald-type test statistic, $W_{N}(\widehat{\boldsymbol{\theta}}^\beta),$ under the null hypothesis in (\ref{eq:null}): 
\begin{align*}
	W_{N}&(\boldsymbol{T}_\beta(G)) =
	N \boldsymbol{m}^T(\boldsymbol{T}_\beta(G) )  \bigg(\boldsymbol{M}^T(\boldsymbol{T}_\beta(G) )\boldsymbol{J}_\beta^{-1}(\boldsymbol{T}_\beta(G) )\\
	& \times \boldsymbol{K}_\beta(\boldsymbol{T}_\beta(G) )\boldsymbol{J}_\beta^{-1}(\boldsymbol{T}_\beta(G) )\boldsymbol{M}(\boldsymbol{T}_\beta(G))\bigg)^{-1}
	\boldsymbol{m}(\boldsymbol{T}_\beta(G) ).
\end{align*}
Therefore, the IF of the proposed Wald-type test statistic can be easily derived from the IF of the MDPDE, as
\begin{align*}
	 \text{IF}&\left(\boldsymbol{n}, W_{N}, G\right) = \frac{\partial W_{N}(\boldsymbol{T}_\beta(G_\varepsilon))}{\partial \varepsilon}\bigg|_{\varepsilon = 0}\\
	= & 2N \boldsymbol{m}^T(\boldsymbol{T}_\beta(G) ) \bigg(\boldsymbol{M}^T(\boldsymbol{T}_\beta(G) )\boldsymbol{J}_\beta^{-1}(\boldsymbol{T}_\beta(G) )\boldsymbol{K}_\beta(\boldsymbol{T}_\beta(G) )\\
	& \times \boldsymbol{J}_\beta^{-1}(\boldsymbol{T}_\beta(G) )\boldsymbol{M}(\boldsymbol{T}_\beta(G))\bigg)^{-1}
	\boldsymbol{M}^T(\boldsymbol{T}_\beta(G) ) \text{IF}\left(\boldsymbol{n}, T_\beta, G\right).
\end{align*}
Under the null hypothesis in (\ref{eq:null}), the above expression vanishes and so  the  second order influence function becomes necessary.
\begin{align*}
	\text{IF}&\left(\boldsymbol{n}, W_{N}, G\right) =  \frac{\partial^2 W_{N}(\boldsymbol{T}_\beta(G_\varepsilon))}{\partial \varepsilon^2}\bigg|_{\varepsilon = 0}\\
	= & 2N \text{IF}^T\left(\boldsymbol{n}, T_\beta, G\right) \boldsymbol{M}^T(\boldsymbol{T}_\beta(G) ) \bigg(\boldsymbol{M}^T(\boldsymbol{T}_\beta(G)) \\
	& \times\boldsymbol{J}_\beta^{-1}(\boldsymbol{T}_\beta(G) ) \boldsymbol{K}_\beta(\boldsymbol{T}_\beta(G)) \boldsymbol{J}_\beta^{-1}(\boldsymbol{T}_\beta(G) )\boldsymbol{M}(\boldsymbol{T}_\beta(G))\bigg)^{-1}\\
	& \times \boldsymbol{M}^T(\boldsymbol{T}_\beta(G) ) \text{IF}\left(\boldsymbol{n}, T_\beta, G\right).
\end{align*}

As the second-order IFs of the proposed Wald-type tests are quadratic functions of the corresponding IFs of the MDPDEs, the boundedness of the IF of the Wald-type test statistic at a contamination point $\boldsymbol{n}$ and true distribution $F_{\boldsymbol{\theta}_0}$ can be discussed by the boundedness of the IF of the corresponding MDPDE and thus, the robust estimators do produce robust test statistics.

\section{Optimal tuning parameter}\label{sec:choicebeta}

As discussed in the previous sections, the tuning parameter of the MDPDEs controls the trade-off between efficiency and robustness: large values are more robust, although less efficient. 
Therefore, there is not an overall optimal value of the tuning parameter, but it depends on the amount of contamination in data, which is generally difficult to assess in real data. 
Moderately large values of $\beta$, over 0.3, generally produce MDPDEs with a suitable  compromise between efficiency and robustness, although the selection of such tuning parameter could greatly influence the performance of the estimator or statistic.
Warwick and Jones (2005) introduced an useful data-based procedure for the choice of the tuning parameter for the MDPDE, depending on a pilot estimator, $\boldsymbol{\theta}_P$. They assumed that optimal values of $\beta$ would produce estimators with the smallest estimation error, ans so they propose to minimize the estimated mean squared error (MSE) in the estimation given by
\begin{equation} \label{eq:choicebeta}
	\begin{aligned}
		\widehat{\operatorname{MSE}}\left(\beta\right)  =& \left(\widehat{\boldsymbol{\theta}}^\beta- \boldsymbol{\theta}_P\right)^T\left( \widehat{\boldsymbol{\theta}}^\beta- \boldsymbol{\theta}_P\right)\\
		& + \frac{1}{N} \operatorname{Tr}\left\{ \boldsymbol{J}_\beta(\widehat{\boldsymbol{\theta}}^\beta)^{-1} \boldsymbol{K}_\beta(\widehat{\boldsymbol{\theta}}^\beta) \boldsymbol{J}_\beta(\widehat{\boldsymbol{\theta}}^\beta)^{-1}\right\}
	\end{aligned}
\end{equation}
The previous MSE strongly depends on the pilot estimator through the bias term, and so it could influence the election of the optimal value. 
Indeed, 
 Balakrishnan et al. (2020) implemented the Warwick and Jones algorithm for choosing the optimal tuning parameter of the MDPDE, for constant-stress ALTs under Weibull lifetime distribution for one-shot devices. 
They found that the algorithm is strongly dependent on the pilot estimator, as the resulting optimal estimators were very close to the pilot. Thus, the optimal estimators were biased by that initial choice, 
and an alternative algorithm alleviating this strong pilot-dependence is needed. 
The Warwick and Jones algorithm was improved in Basak et al. (2021) by removing the strong dependency of the initial estimator. 
They proposed an iterative algorithm that replaces at each step the pilot estimator with the optimal MDPDE computed in the previous step.  This way, the dependency of the initial estimator fades as the algorithm proceeds.
The algorithm iterates until choice of the tuning parameter (or equivalently, the pilot estimator) is stabilized.
Basak et al. (2021) empirically showed that, in many statistical models, when the pilot estimators are within the MDPDE class, all robust pilots lead to the same iterated optimal value, and moreover the performance of the algorithm improves even with pure data.
Since the pilot-dependency of the original algorithm is especially strong for the proposed model, this behaviour is transferred to Basak et al.'s algorithm, although the latter manages to eliminate the bias to a great extent.
The above description is summarized in the following algorithm.\\

\begin{algorithm}
	\caption{Choice of the tuning parameter}\label{alg:cap}
	\begin{enumerate}
		\item Fix the convergence rate $\varepsilon$ and choose a pilot estimator $\boldsymbol{\theta}_P;$
		\item Compute the optimal value of the tuning parameter, $\beta^\ast,$ minimizing the estimated MSE in (\ref{eq:choicebeta});
		\item When the optimal estimate $\widehat{\boldsymbol{\theta}}^{\beta^\ast}$ differs from the pilot estimator more than the convergence rate,
		\begin{enumerate}
			\item Set $\boldsymbol{\theta}_P = \widehat{\boldsymbol{\theta}}^{\beta^\ast};$
			\item Minimize $\widehat{\operatorname{MSE}}\left(\beta\right)$ in (\ref{eq:choicebeta}) and update the optimal value of the tuning parameter, $\beta^\ast.$
		\end{enumerate}
	\end{enumerate}
\end{algorithm}

Balakrishnan et al. (2023) compared optimal values obtained using different initial tuning parameters, $\beta_0,$ for the step-stress ALT with non-destructive one-shot devices under exponential lifetimes, and showed empirically the independency of that initial choice for the exponential model. 
For Weibull lifetimes, the pilot-dependency is slightly preserved, and so the initial estimator is crucial for a proper performance. We use as pilot estimator an average of MDPDEs obtained with different values of $\beta$.
It is of interest to note that large  values of  the optimal tuning parameter could imply high contamination in data, and so the proposed algorithm is also useful in assessing the amount of  contamination in the data.

\section{Simulation study}\label{sec:simstudy}

In this section, we examine the behaviour of the proposed robust MDPDEs and Wald-type tests
for the SSALT model with Weibull lifetime  distribution in terms of efficiency and robustness.

To evaluate the robustness of the estimators and tests, we introduce contamination to the multinomial data by increasing (or decreasing)  the probability of failure in  (\ref{eq:th.prob}) for (at least) one interval (i.e., one cell); that is, we consider ``outlying cells'' instead of  ``outlying devices'' (see Balakrishnan et al. (2019a)). 
The  probability of failure at the contaminated cell is computed as
\begin{equation} \label{eq:contaminatedprob}
	\tilde{\pi}_j(\boldsymbol{\theta}) =  G_{\boldsymbol{\tilde{\theta}}}(\textit{IT}_j) - G_{\boldsymbol{\tilde{\theta}}}(\textit{IT}_{j-1})
\end{equation}
for some $j=2,...,L$, where $\boldsymbol{\tilde{\theta}} = (\widetilde{a}_0,\widetilde{a}_1, \widetilde{\eta})$ is a contaminated parameter with $\tilde{a}_0 \geq a_0,$ $\tilde{a}_1 \geq a_1$  and $\tilde{\eta} \geq \eta.$ 
After increasing (or decreasing) the probability of failure at one cell, the probability vector of the multinomial model, $\tilde{\boldsymbol{\pi}}(\boldsymbol{\theta}),$  must be normalised to add up to 1. 

We will consider a 2-step stress ALT experiment with $L=13$ inspection times and a total of $N = 200$ devices under test. 
At the start of the experiment, all the devices are tested at a stress level $x_1=30$ until the first time of stress change $\tau_1 = 18.$ Then, the surviving units are subjected to an increased stress level, $x_2=40$ until the end of the experiment at $\tau_2=52.$ 
During the experiment, the numbers of failures are recorded at $\text{IT}=(6,10,14,18,20,24,28,32,36,40,44,48,52).$

The  value of the true parameter is set to be $\boldsymbol{\theta}_0=(5.3, -0.05, 1.5)$ and  the data are generated from the corresponding
multinomial model described in Section \ref{sec:modelformulation}, by assuming Weibull lifetime distributions. 
Moreover, we contaminate the data by increasing the probability of failure in the third interval using (\ref{eq:contaminatedprob}). We consider three scenarios of contamination corresponding to the increase of each of the three model parameters, $a_0, a_1$ and $\eta,$ respectively.

\subsection{Minimum density power divergence estimators} \label{sec:simstudyMDPPE}

We examine the accuracy of the proposed MDPDEs in different scenarios of contamination by means of a Monte Carlo simulation study. We calculate the root mean square error  (RMSE)  of the MDPDE for different values of $\beta \in \{0,0.2,0.4,0.6,0.8,1\},$ including the MLE for $\beta = 0,$ over $R=1000$ repetitions. 
Figure \ref{fig:estimatorscont} presents the results when contaminating each of the model parameters. The abscissa axis indicates the value of the corresponding contaminated parameter, while the remaining two model parameters keep their true value. The grid of contaminated parameters are chosen to increase the value of failure on the third cell of the model.
The empirical results show that the MLE is the most efficient estimator when there is no contamination. However, it is highly sensitive to outliers and so, when contamination is introduced in data by switching any of the model parameters, the MDPDEs with positive values of the DPD tuning parameter outperform the classical estimator. The greater the value of the tuning $\beta$ is, the more robust but less efficient the resulting estimator is. In the most extreme case considered here, the MDPDE with $\beta=1$ is clearly not efficient but is much less influenced by outliers compared to the other estimators. Therefore, MDPDEs with moderate values of $\beta$ offer a good trade-off between efficiency and robustness.

	Moreover, the optimal tuning parameter selected using Algorithm 1 is close to zero when there is no contamination and it increases when contamination is introduced. Therefore, the resulting optimal estimator performs similarly to the MLE in the absence of contamination but matches the performance of the MDPDEs with moderate values of $\beta$ (between 0.4 and 0.6) when contamination is present. That is, the MDPDE with optimal $\beta$ effectively provides an efficient estimator in the absence of contamination and a robust estimator under contamination scenarios.

\begin{figure}
	\centering
	\includegraphics[height=5.3cm, width=6.3cm]{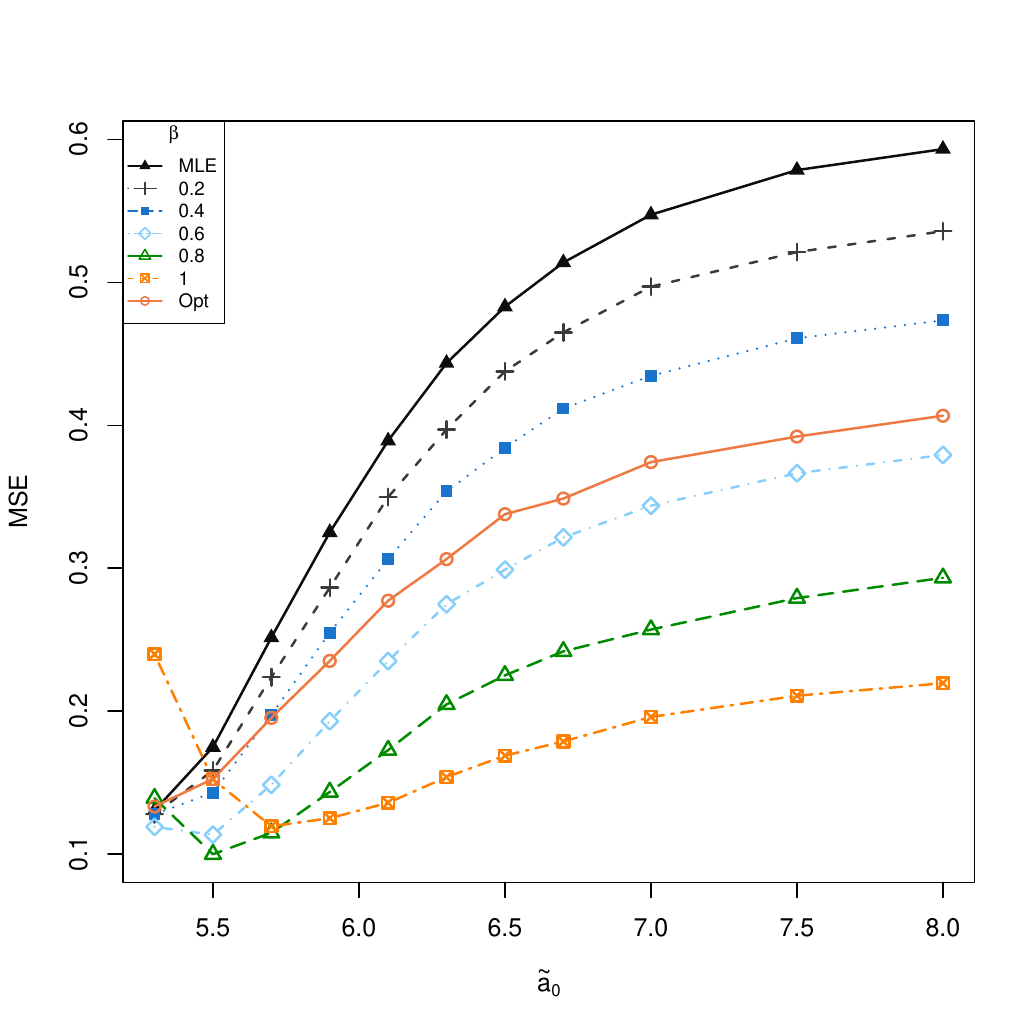}	
	\includegraphics[height=5.3cm, width=6.3cm]{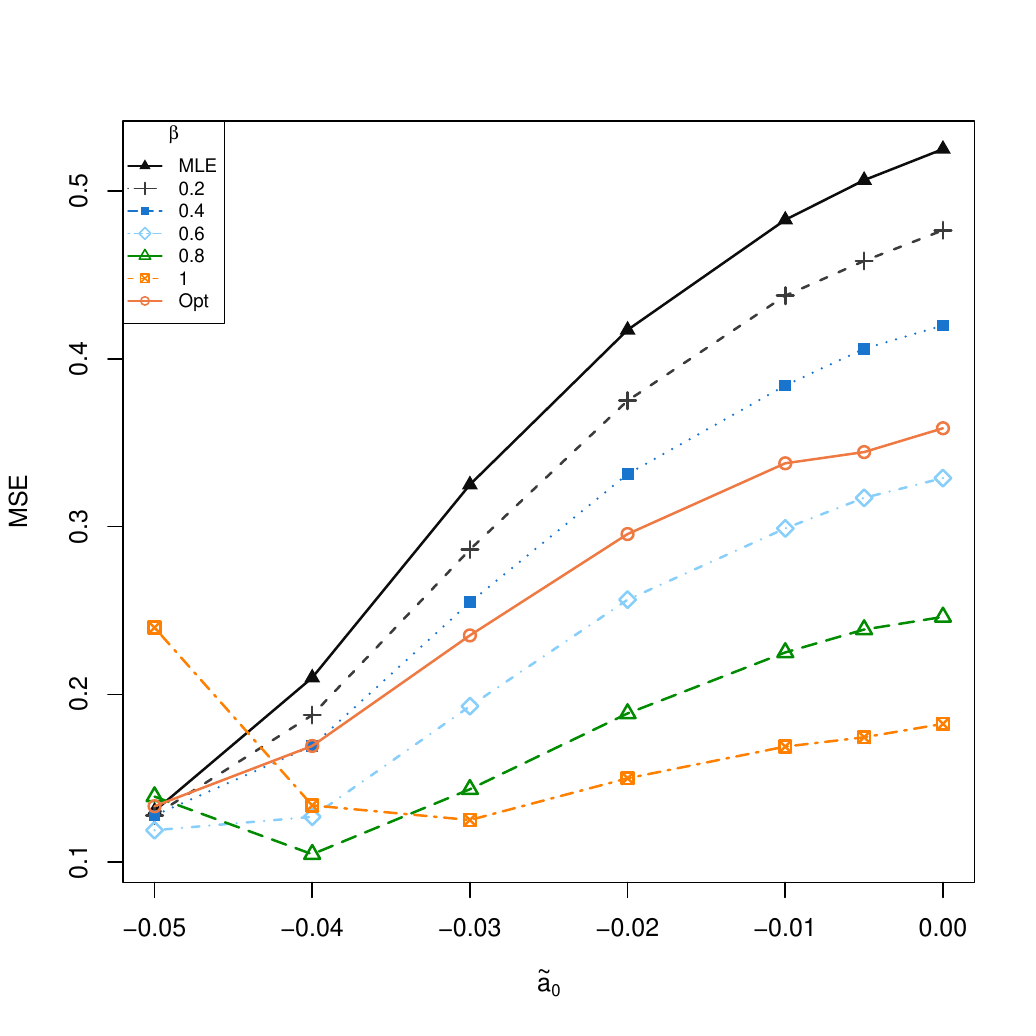}	
	\includegraphics[height=5.3cm, width=6.3cm]{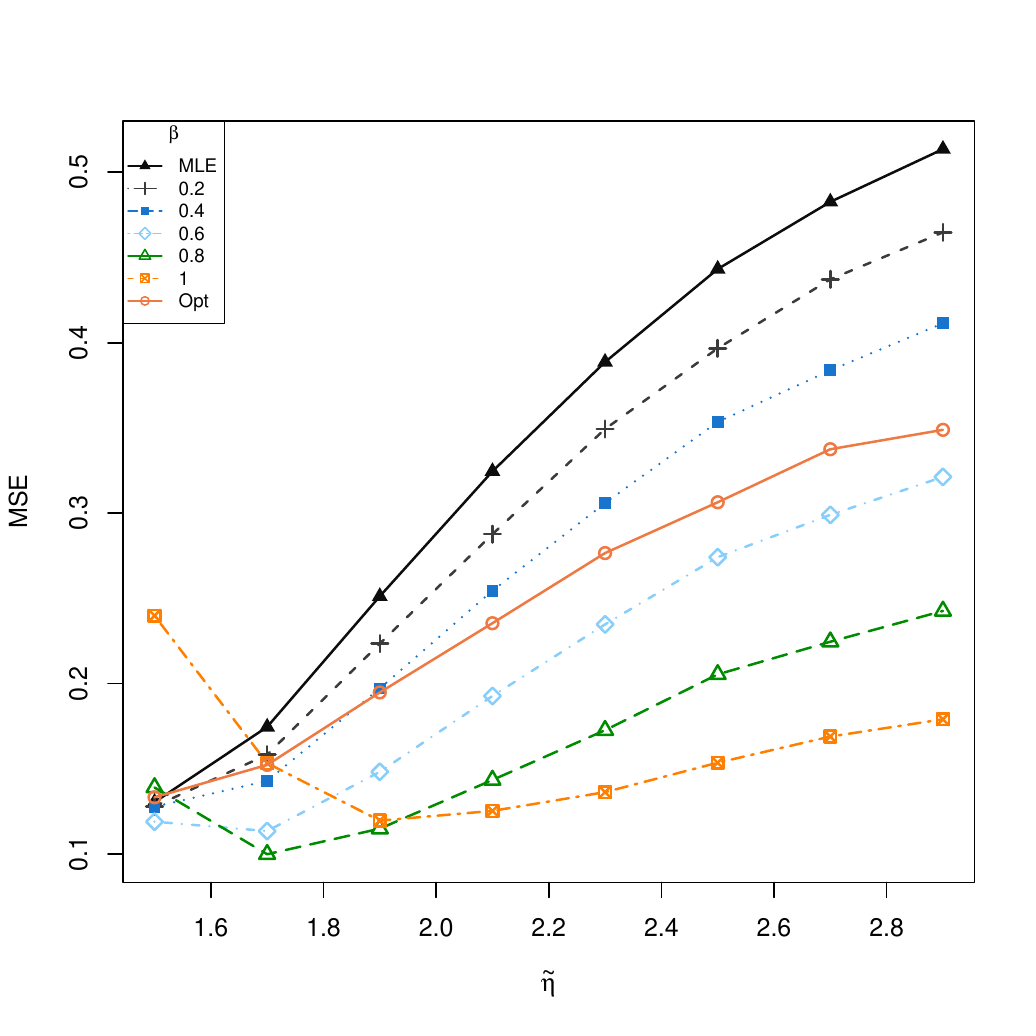}
	\caption{RMSE of the estimates under increasing contamination; $a_0$-contamination (top), $a_1$-contamination (middle) and $\eta$-contamination (bottom)}	
	\label{fig:estimatorscont}
\end{figure}

We additionally compute the mean square error (MSE) of the reliability estimates at a mission time $t_0=40$
in Figure \ref{fig:reliabilitycont}. Results for the mean time to failure are provided in Appendix \ref{app:MSEmean}. 
The robustness properties of the estimators are inherited by the lifetime characteristic estimates. Consequently, the reliability estimates under contaminated scenarios are more accurate for positive values of $\beta$, demonstrating the robustness of the estimators, performing competitively with the MLE in the absence of contamination.

\begin{figure}
	\centering
		\includegraphics[height=5.3cm, width=6.3cm]{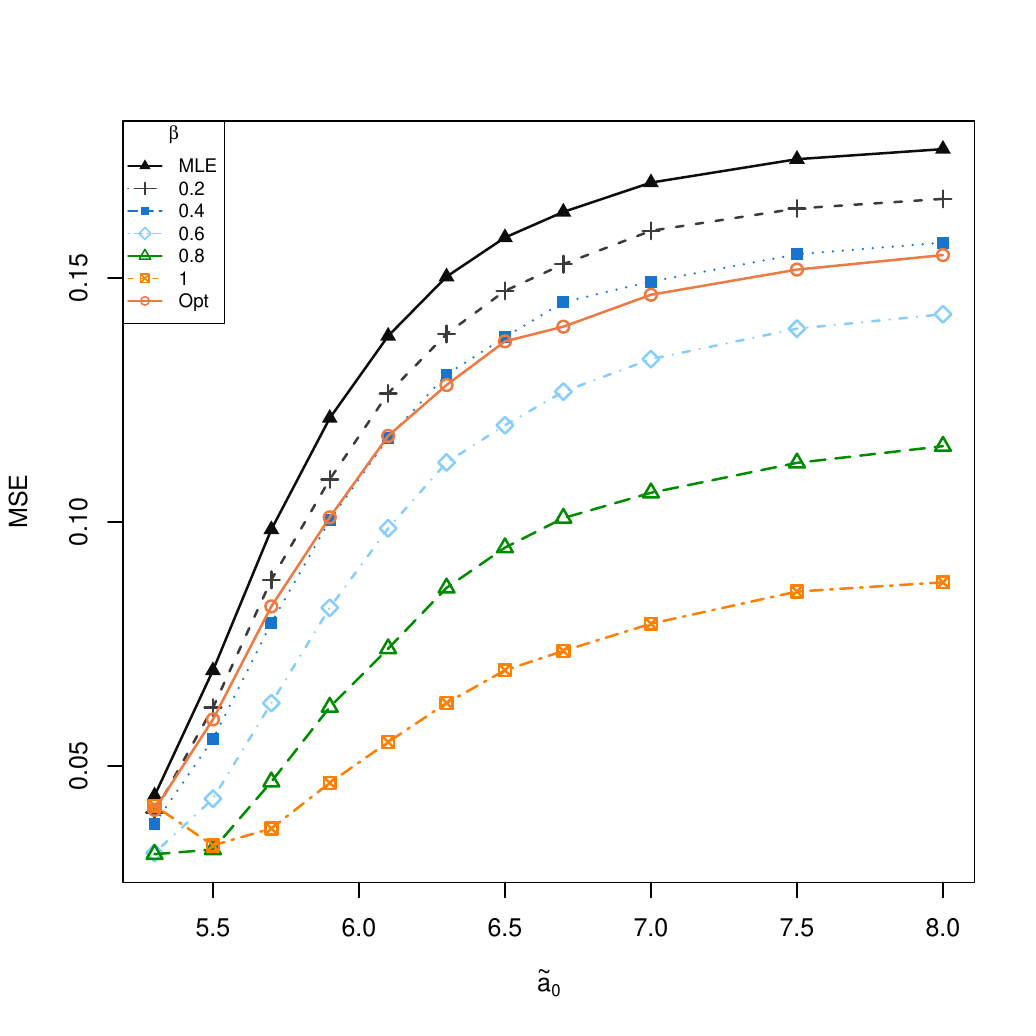}
		\includegraphics[height=5.3cm, width=6.3cm]{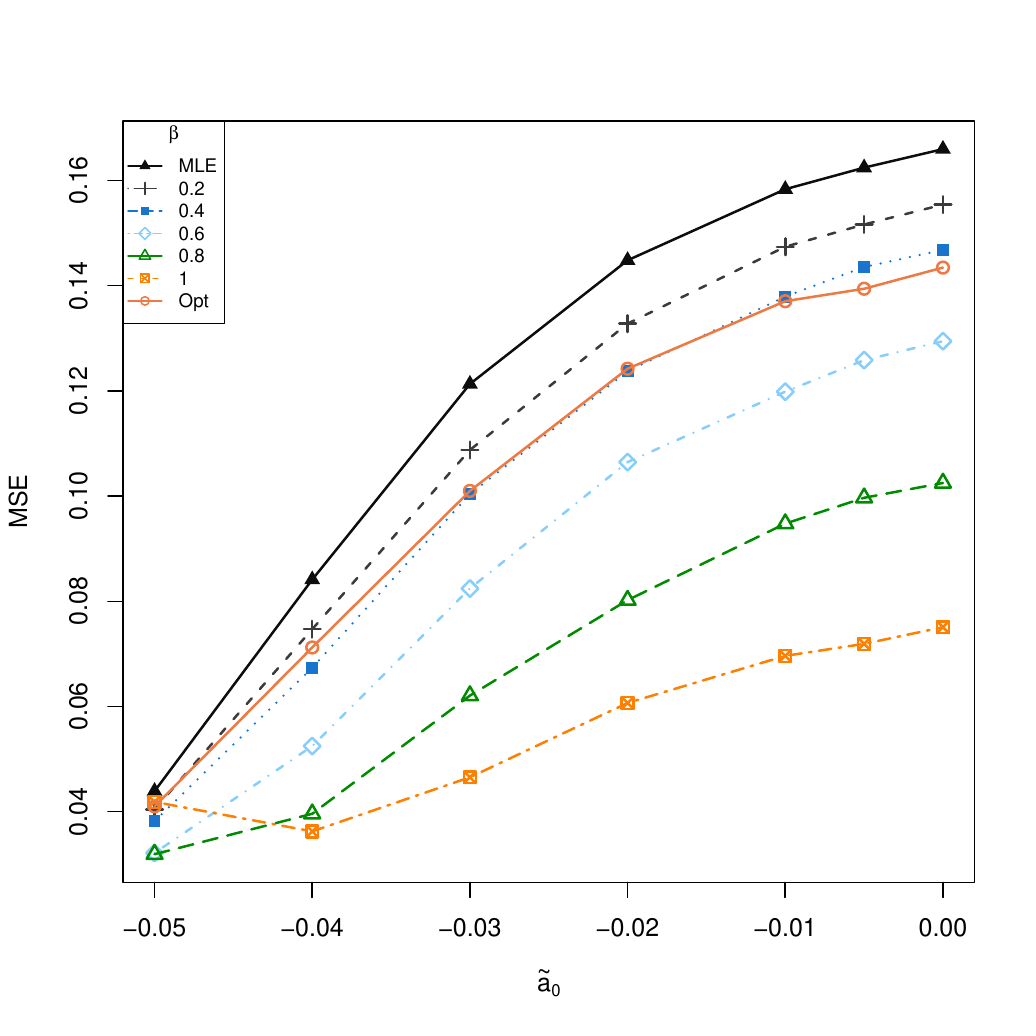}
		\includegraphics[height=5.3cm, width=6.3cm]{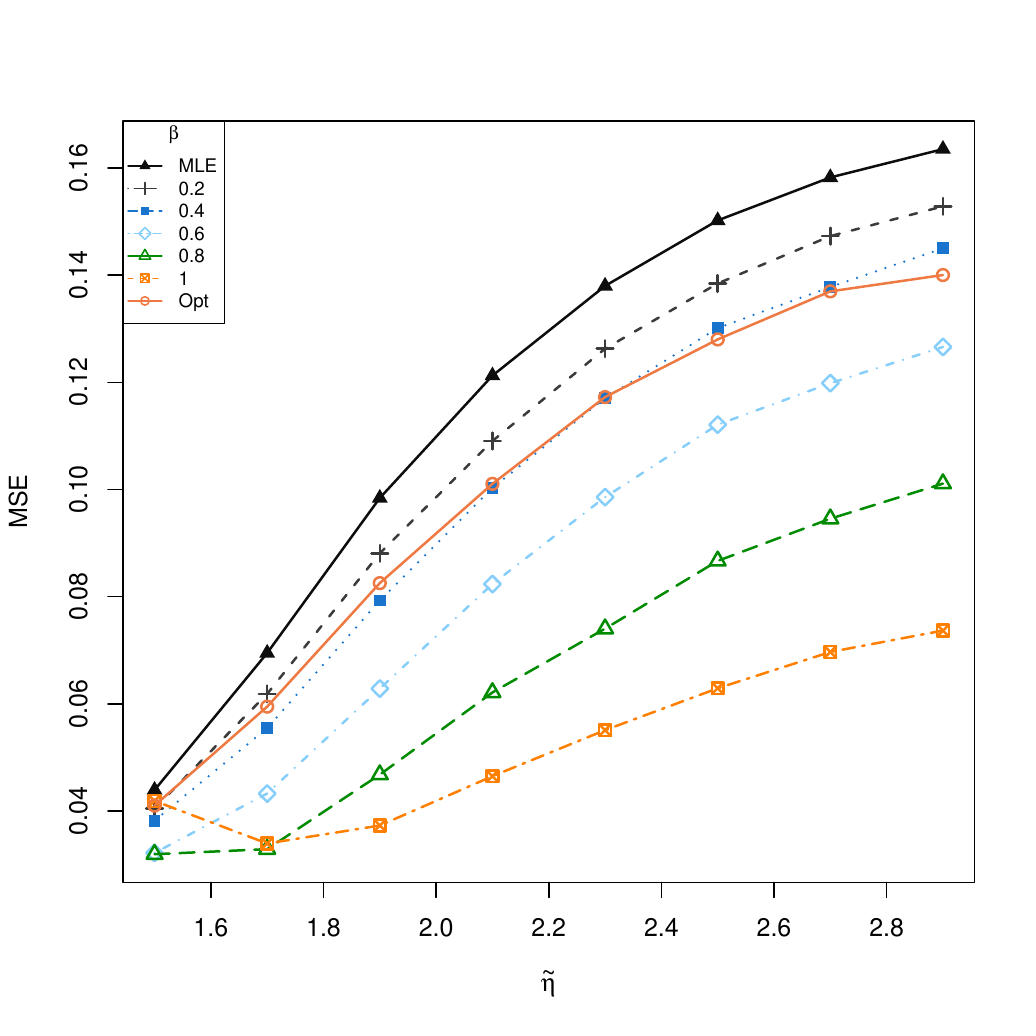}
	\caption{MSE of the reliability for $t=40$  under normal operating conditions $x_0=20$ when the contamination is introduced on the first (top) and second (bottom) model parameters.}	
	\label{fig:reliabilitycont}
\end{figure}

Further, we compare the direct and transformed asymptotic CIs of reliability at mission time $t_0=40$ and mean lifetime under a constant stress level $x_0=20$ in terms of coverage probability of the resulting intervals, in the three different scenarios of contamination.  The empirical coverage probabilities for the reliability and mean time to failure estimates are given in Tables \ref{table:CPreliabilitya0} and \ref{table:CPreliabilitya0}, respectively.
As shown, CIs based on MDPDEs with large values of the tuning parameter are more robust than CIs based on the classical MLE, while they perform competitively with the classical estimator under pure data scenarios. The nominal level of 95\% is not achieved by any approximate method, but transformed CIs have higher coverage probabilities, offering better performance. The difference in coverage between both methods is especially striking for highly contaminated scenarios. In these cases, direct CIs based on the MLE cover less than 15\%, while their corresponding transformed CIs include the true value of the characteristics around 20\% to 30\% times. In heavily contaminated scenarios, the advantage of using robust methods is clearly shown, as they maintain a coverage of around 75\% (for transformed CIs) in contrast to the 20-30\% achieved by the CI based on the MLE.

\begin{table}[ht]
	\centering
	\caption{Coverage of the direct and transformed CIs for reliability under contamination in the first parameter, for different values of the tuning parameter. }
	\label{table:CPreliabilitya0}
	\begin{tabular}{ccccccc}
		$a_0$ & 5.3 & 5.7 & 6 & 6.5 & 7 & 8 \\ 
		\hline
		$\beta$ & \multicolumn{6}{c}{Direct CI}\\
		\hline
		0 & 0.756 & 0.462 & 0.288 & 0.164 & 0.124 & 0.108 \\ 
		 0.2 & 0.776 & 0.520 & 0.338 & 0.234 & 0.180 & 0.156 \\ 
		 0.4 & 0.796 & 0.564 & 0.384 & 0.254 & 0.204 & 0.162 \\ 
	0.6 & 0.820 & 0.656 & 0.498 & 0.326 & 0.246 & 0.220 \\ 
		 0.8 & 0.778 & 0.750 & 0.634 & 0.478 & 0.382 & 0.344 \\ 
		 1 & 0.686 & 0.784 & 0.726 & 0.596 & 0.568 & 0.518 \\ 
		Optimal & 0.772 & 0.536 & 0.370 & 0.232 & 0.204 & 0.158 \\ 
		\hline
		$\beta$ & \multicolumn{6}{c}{Transformed CI}\\
		\hline
	0& 0.908 & 0.740 & 0.556 & 0.376 & 0.272 & 0.238  \\ 
	 0.2 & 0.900 & 0.758 & 0.598 & 0.424 & 0.348 & 0.296 \\ 
	0.4 & 0.892 & 0.796 & 0.652 & 0.488 & 0.398 & 0.346 \\ 
	 0.6 & 0.896 & 0.862 & 0.750 & 0.588 & 0.504 & 0.434 \\ 
		 0.8 & 0.854 & 0.910 & 0.832 & 0.704 & 0.642 & 0.578 \\ 
		 1 & 0.728 & 0.900 & 0.856 & 0.780 & 0.748 & 0.722 \\ 
		Optimal& 0.896 & 0.780 & 0.640 & 0.468 & 0.384 & 0.332 \\ 
		\hline
	\end{tabular}
\end{table}

\begin{table}[ht]
	\centering
	\caption{Coverage of the direct and transformed CIs for  reliability under contamination in the second parameter, for different values of the tuning parameter. }
	\label{table:CPreliabilitya1}
	\begin{tabular}{ccccccc}
		$a_1$ & -0.05 & -0.03 & -0.02 & -0.01 & -0.005 & 0 \\ 
		\hline
		$\beta$ & \multicolumn{6}{c}{Direct CI}\\
		\hline 
		0& 0.756 & 0.332 & 0.228 & 0.164 & 0.150 & 0.142 \\ 
		0.2 & 0.776 & 0.404 & 0.286 & 0.234 & 0.200 & 0.186 \\ 
	 0.4 & 0.796 & 0.462 & 0.320 & 0.254 & 0.222 & 0.202 \\ 
		 0.6 & 0.820 & 0.548 & 0.404 & 0.326 & 0.290 & 0.270 \\ 
	 0.8 & 0.778 & 0.654 & 0.570 & 0.478 & 0.438 & 0.398 \\ 
		 1 & 0.686 & 0.764 & 0.660 & 0.596 & 0.600 & 0.580 \\ 
		Optimal& 0.772 & 0.410 & 0.290 & 0.232 & 0.224 & 0.202 \\
		\hline
		$\beta$ & \multicolumn{6}{c}{Transformed CI}\\
		\hline
		0& 0.908 & 0.596 & 0.462 & 0.376 & 0.336 & 0.312 \\ 
	0.2 & 0.900 & 0.654 & 0.506 & 0.424 & 0.392 & 0.380 \\ 
		 0.4 & 0.892 & 0.698 & 0.562 & 0.488 & 0.448 & 0.432 \\ 
	 0.6 & 0.896 & 0.796 & 0.658 & 0.588 & 0.544 & 0.526 \\ 
	 0.8 & 0.854 & 0.842 & 0.778 & 0.704 & 0.678 & 0.670 \\ 
	 1 & 0.728 & 0.874 & 0.826 & 0.780 & 0.758 & 0.750 \\ 
		Optimal& 0.896 & 0.686 & 0.552 & 0.468 & 0.436 & 0.418 \\ 
		\hline
	\end{tabular}
\end{table}

\begin{table}[ht]
	\centering
	\caption{Coverage of the direct and transformed CIs for reliability under contamination in the third parameter, for different values of the tuning parameter. }
	\label{table:CPreliabilityeta}
	\begin{tabular}{ccccccc}
		\hline $\eta$ & 1.5 & 1.7 & 2 & 2.2&2.7 & 3 \\ 
		\hline
		$\beta$ & \multicolumn{6}{c}{Direct CI}\\
		\hline
		MLE& 0.756 & 0.620 & 0.400 & 0.290 & 0.164 & 0.144 \\ 
		 0.2 & 0.776 & 0.690 & 0.458 & 0.344 & 0.234 & 0.186 \\ 
		0.4 & 0.796 & 0.714 & 0.518 & 0.388 & 0.256 & 0.202 \\ 
		0.6 & 0.820 & 0.780 & 0.590 & 0.498 & 0.326 & 0.270 \\ 
	 0.8 & 0.778 & 0.824 & 0.718 & 0.634 & 0.478 & 0.398 \\ 
		1 & 0.686 & 0.782 & 0.770 & 0.728 & 0.596 & 0.582 \\ 
		Optimal& 0.772 & 0.678 & 0.478 & 0.366 & 0.232 & 0.204 \\ 
		\hline
	$\beta$ & \multicolumn{6}{c}{Transformed CI}\\
		\hline
	0 & 0.908 & 0.856 & 0.678 & 0.556 & 0.376 & 0.312 \\ 
	0.2 & 0.900 & 0.862 & 0.700 & 0.598 & 0.426 & 0.380 \\ 
	 0.4 & 0.892 & 0.888 & 0.752 & 0.652 & 0.492 & 0.432 \\ 
	0.6 & 0.896 & 0.916 & 0.818 & 0.750 & 0.588 & 0.526 \\ 
	 0.8 & 0.854 & 0.906 & 0.874 & 0.828 & 0.704 & 0.670 \\ 
	 1 & 0.728 & 0.858 & 0.884 & 0.860 & 0.780 & 0.750 \\ 
		Optimal& 0.896 & 0.862 & 0.738 & 0.638 & 0.470 & 0.418 \\ 
		\hline
	\end{tabular}
\end{table}

\begin{table}[ht]
	\centering
	\caption{Coverage of the direct and transformed CIs for mean lifetime under contamination in the first parameter, for different values of the tuning parameter. }
	\label{table:CPmeana0}
	\begin{tabular}{ccccccc}
		\hline
		$\beta $ $\backslash$ $a_0$ & 5.3 & 5.7 & 6 & 6.5 & 7 & 8 \\ 
		\hline
	$\beta$ & \multicolumn{6}{c}{Direct CI}\\
		\hline
	0 & 0.756 & 0.462 & 0.288 & 0.164 & 0.124 & 0.108 \\ 
	 0.2 & 0.776 & 0.520 & 0.338 & 0.234 & 0.180 & 0.156 \\ 
	 0.4 & 0.796 & 0.564 & 0.384 & 0.254 & 0.204 & 0.162 \\ 
	 0.6 & 0.820 & 0.656 & 0.498 & 0.326 & 0.246 & 0.220 \\ 
	 0.8 & 0.778 & 0.750 & 0.634 & 0.478 & 0.382 & 0.344 \\ 
	 1 & 0.686 & 0.784 & 0.726 & 0.596 & 0.568 & 0.518 \\ 
		Optimal& 0.772 & 0.536 & 0.370 & 0.232 & 0.204 & 0.158 \\ 
		\hline
	$\beta$ & \multicolumn{6}{c}{Transformed CI}\\
		\hline
	0& 0.932 & 0.726 & 0.546 & 0.366 & 0.276 & 0.234 \\ 
		0.2 & 0.938 & 0.754 & 0.598 & 0.416 & 0.348 & 0.304 \\ 
		 0.4 & 0.952 & 0.792 & 0.642 & 0.496 & 0.412 & 0.354 \\ 
	0.6 & 0.970 & 0.868 & 0.760 & 0.584 & 0.510 & 0.476 \\ 
		 0.8 & 0.954 & 0.926 & 0.832 & 0.716 & 0.656 & 0.596 \\ 
		 1 & 0.880 & 0.944 & 0.888 & 0.794 & 0.756 & 0.730 \\ 
		Optimal& 0.948 & 0.770 & 0.616 & 0.454 & 0.390 & 0.332 \\ 
		\hline
	\end{tabular}
\end{table}

\begin{table}[ht]
	\centering
	\caption{Coverage of the direct and transformed CIs for  mean lifetime under contamination in the second parameter, for different values of the tuning parameter. }
	\label{table:CPmeana1}
	\begin{tabular}{ccccccc}
		\hline
		$\beta $ $\backslash$ $a_1$ & -0.05 & -0.03 & -0.02 & -0.01 & -0.005 & 0 \\ 
		\hline
	$\beta$ & \multicolumn{6}{c}{Direct CI}\\
		\hline
	0 & 0.756 & 0.332 & 0.228 & 0.164 & 0.150 & 0.142 \\ 
		0.2 & 0.776 & 0.404 & 0.286 & 0.234 & 0.200 & 0.186 \\ 
		 0.4 & 0.796 & 0.462 & 0.320 & 0.254 & 0.222 & 0.202 \\ 
	 0.6 & 0.820 & 0.548 & 0.404 & 0.326 & 0.290 & 0.270 \\ 
	 0.8 & 0.778 & 0.654 & 0.570 & 0.478 & 0.438 & 0.398 \\ 
	 1 & 0.686 & 0.764 & 0.660 & 0.596 & 0.600 & 0.580 \\ 
		Optimal& 0.772 & 0.410 & 0.290 & 0.232 & 0.224 & 0.202 \\ 
		\hline
	$\beta$ & \multicolumn{6}{c}{Transformed CI}\\
		\hline
	0 & 0.932 & 0.574 & 0.452 & 0.366 & 0.340 & 0.318 \\ 
	 0.2 & 0.938 & 0.642 & 0.516 & 0.416 & 0.388 & 0.374 \\ 
	 0.4 & 0.952 & 0.702 & 0.572 & 0.496 & 0.454 & 0.440 \\ 
	 0.6 & 0.970 & 0.790 & 0.654 & 0.584 & 0.554 & 0.530 \\ 
	0.8 & 0.954 & 0.856 & 0.790 & 0.716 & 0.684 & 0.682 \\ 
	 1 & 0.880 & 0.896 & 0.856 & 0.794 & 0.788 & 0.782 \\ 
		Optimal& 0.948 & 0.666 & 0.556 & 0.454 & 0.436 & 0.418 \\ 
		\hline
	\end{tabular}
\end{table}

\begin{table}[ht]
	\centering
	\caption{Coverage of the direct and transformed CIs for mean lifetime under contamination in the third parameter, for different values of the tuning parameter. }
	\label{table:CPmeaneta}
	\begin{tabular}{ccccccc}
		\hline
		$\beta $ $\backslash$ $\eta$ & 1.5 & 1.7 & 2 & 2.2&2.7 & 3 \\ 
		\hline
	$\beta$ & \multicolumn{6}{c}{Direct CI}\\
		\hline
	0 & 0.756 & 0.620 & 0.400 & 0.290 & 0.164 & 0.144 \\ 
	 0.2 & 0.776 & 0.690 & 0.458 & 0.344 & 0.234 & 0.186 \\ 
	 0.4 & 0.796 & 0.714 & 0.518 & 0.388 & 0.256 & 0.202 \\ 
	 0.6 & 0.820 & 0.780 & 0.590 & 0.498 & 0.326 & 0.270 \\ 
	 0.8 & 0.778 & 0.824 & 0.718 & 0.634 & 0.478 & 0.398 \\ 
 1 & 0.686 & 0.782 & 0.770 & 0.728 & 0.596 & 0.582 \\ 
		Optimal& 0.772 & 0.678 & 0.478 & 0.366 & 0.232 & 0.204 \\ 
		\hline
	$\beta$ & \multicolumn{6}{c}{Transformed CI}\\
		\hline
	0 & 0.932 & 0.840 & 0.660 & 0.544 & 0.368 & 0.318 \\ 
	 0.2 & 0.938 & 0.860 & 0.706 & 0.598 & 0.418 & 0.374 \\ 
	 0.4 & 0.952 & 0.898 & 0.754 & 0.642 & 0.498 & 0.440 \\ 
	0.6 & 0.970 & 0.944 & 0.820 & 0.762 & 0.584 & 0.530 \\ 
	0.8 & 0.954 & 0.966 & 0.892 & 0.830 & 0.716 & 0.682 \\ 
	 1 & 0.880 & 0.940 & 0.924 & 0.892 & 0.794 & 0.782 \\ 
		Optimal & 0.948 & 0.878 & 0.748 & 0.614 & 0.456 & 0.418 \\ 
		\hline
	\end{tabular}
\end{table}

In addition, we test different scenarios of contamination. In the first scenario, we generate an outlying cell in the third interval by decreasing the value of the first parameter, $\theta_0,$ and at the second we perform similarly, but decreasing the second parameter $\theta_1$. In both cases, the lifetime rate $\lambda(\widetilde{\boldsymbol{\theta}})$ gets decreased; the smaller is the contaminated parameter, greater is the contamination. 

\subsection{Wald-type tests} \label{sec:wtestsimulation}

We empirically evaluate the performance of the Wald-type test statistics based on the MDPDEs for different values of the tuning parameter in terms of empirical level and power.
We set the simple step-stress model presented in Section \ref{sec:simstudy}, with $N=200$ devices and $L=13$ inspection times.
We consider the hypothesis testing problem
\begin{equation} \label{empiricaltest}
	\operatorname{H}_0: a_1 = -0.05 \hspace{0.5cm} \text{vs} \hspace{0.5cm} \operatorname{H}_1: a_1\neq -0.05.
\end{equation} 
and we fix the true value of the model parameter as $\boldsymbol{\theta}_0 = (5.3,-0.05,1.5)$  verifying the null hypothesis (when computing the empirical level) and 
$\boldsymbol{\theta}_0 = (5.3,-0.06,1.5)$ when computing the empirical power.
The Wald-type test statistic associated with the test in (\ref{empiricaltest}) based on the MDPDE, $\widehat{\boldsymbol{\theta}}^\beta,$ is defined using (\ref{eq:Wtypest}) with $\boldsymbol{m}(\boldsymbol{\theta}) = a_1-0.05$ and the critical region of the test is  as given in (\ref{eq:criticalregionWtest}).

Moreover, to examine the robustness of the test, we contaminate the third cell following the three contamination scenarios discussed in Section \ref{sec:simstudy}.
Figures \ref{fig:level} and \ref{fig:power}  show, respectively, the empirical  level  and power of the tests against cell contamination, when introducing contamination in the first parameter, $a_0,$ (top), the second parameter, $a_1$  (middle), are the third (bottom). 
The empirical level and power are computed as the proportion of rejected Wald-type test statistics over $R=500$ replications of the model under the null hypothesis for a significance level of $\alpha = 0.05$.

All Wald-type tests, based on MDPDEs with different values of the tuning parameter, perform similarly in the absence of contamination. However, Wald-type test statistics based on MDPDEs with large values of the tuning parameter are clearly more robust than the classical MLE in terms of empirical level, showing strong robustness in all contaminated scenarios. 
Conversely, the overall performance of all Wald-type test statistics, based on different values of the tuning parameter, is quite similar in terms of power.
When the contamination is introduced in the two parameters which are not under test, $a_0$ and $\eta,$ the power of the test is higher for low values of the tuning parameter $\beta$, including the MLE. However, when introducing contamination on the second parameter, $a_1,$ Wald-type test statistics based on MDPDE with large values of $\beta$ outperform the classical MLE in terms of robustness.
So, Wald-type test statistics based on the MDPDE with moderately large value of the tuning parameter offer an appealing alternative for the classical Wald-type test statistic based on the MLE, with a clear gain in robustness in terms of level while remaining competitive in terms of power.

\begin{figure}[ht]
	\centering	
		\includegraphics[height=5.3cm, width=6.3cm]{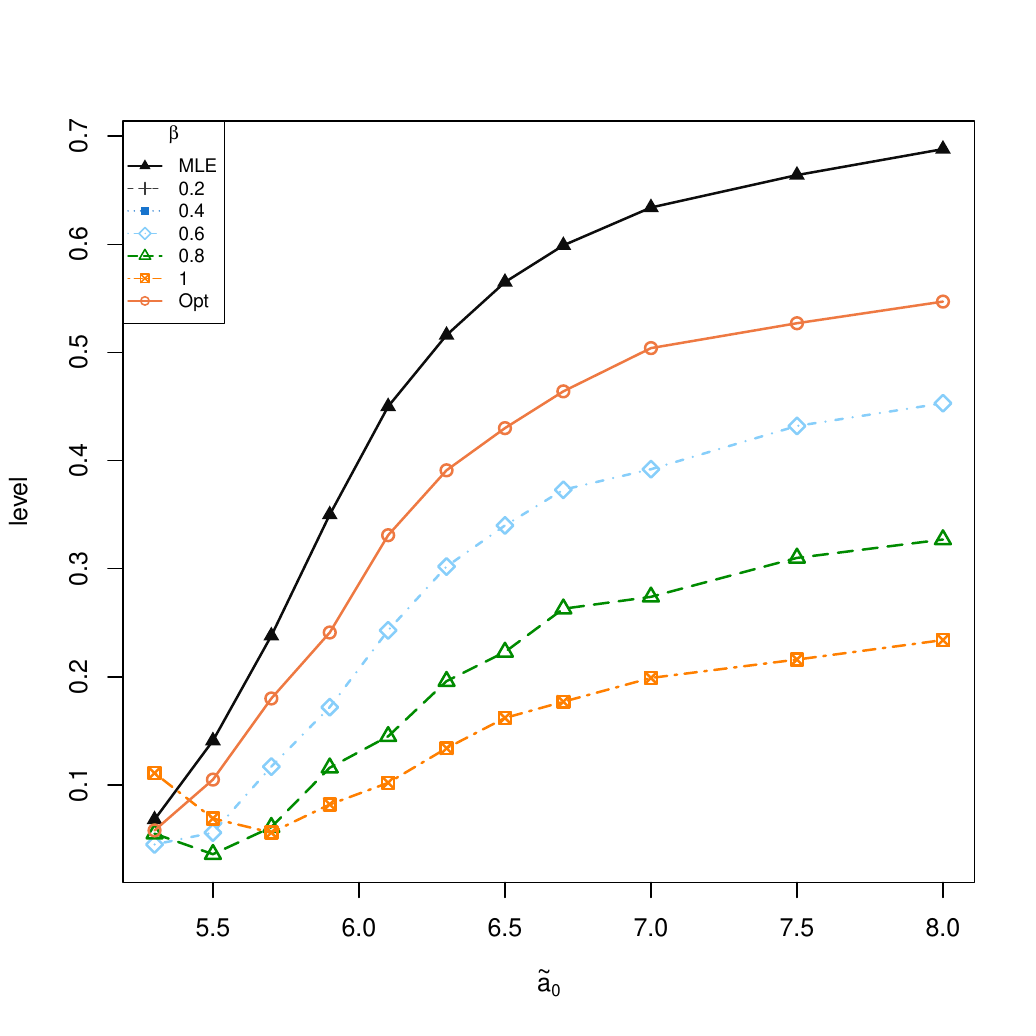}
		\includegraphics[height=5.3cm, width=6.3cm]{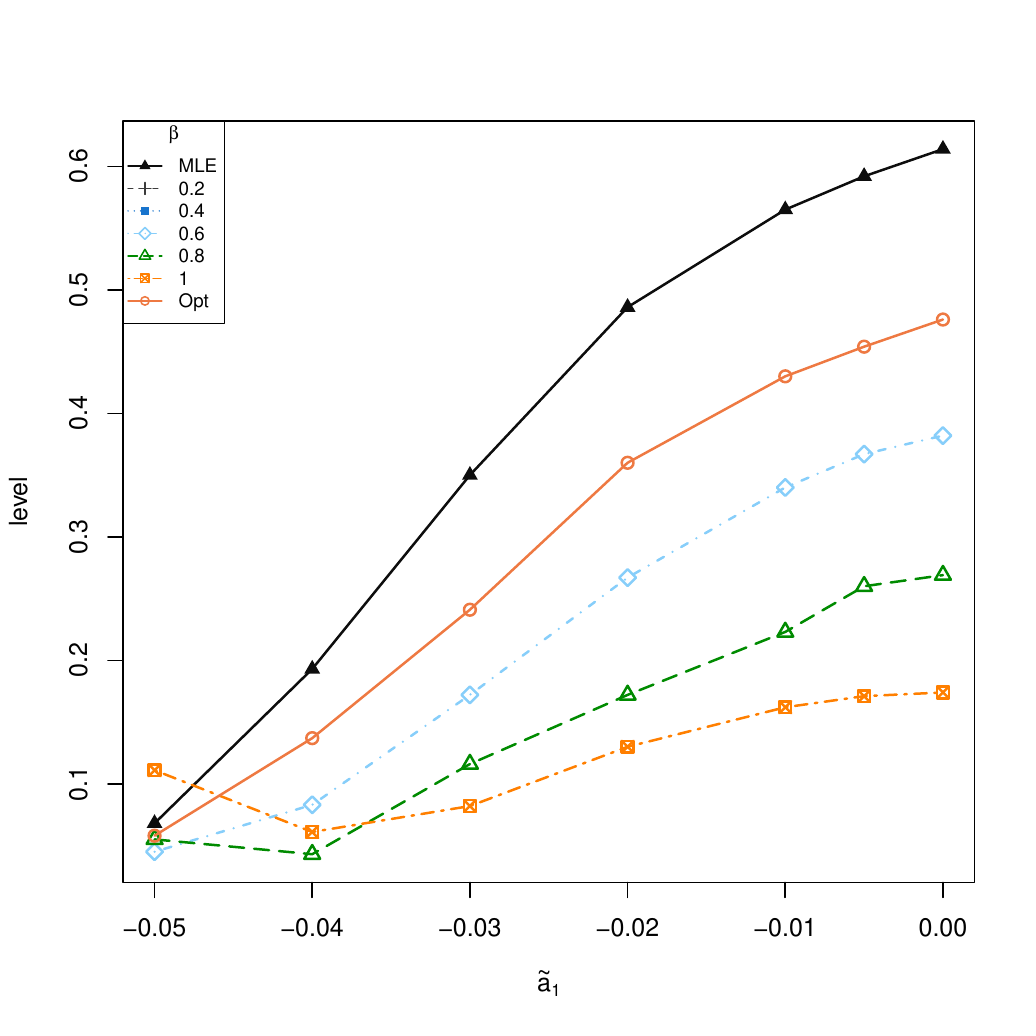}
		\includegraphics[height=5.3cm, width=6.3cm]{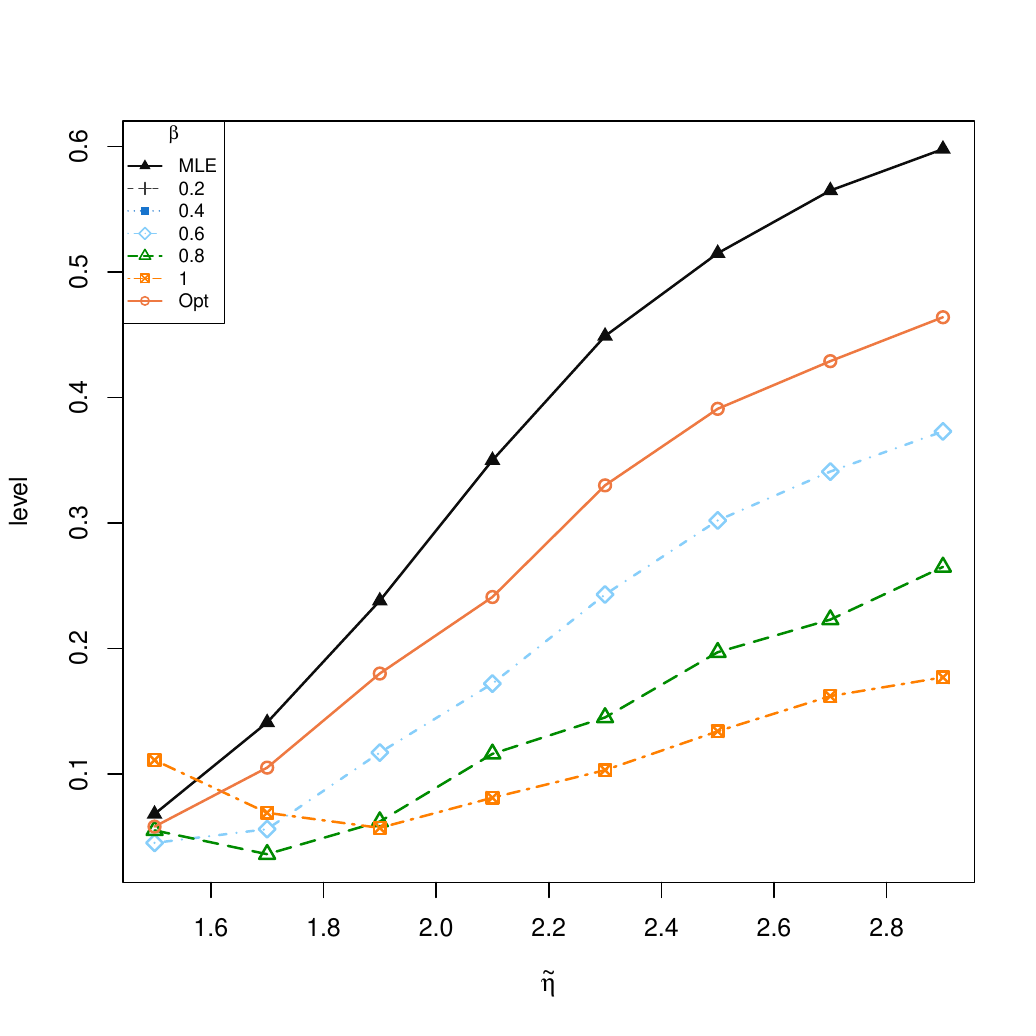}	
	\caption{Empirical level of the Wald-type test under increasing contamination on the first parameter $a_0$ (top), second $a_1$ (middle) and shape parameter $\nu$ (bottom).}	
	\label{fig:level}
\end{figure}

\begin{figure}[ht]
	\centering	
		\includegraphics[height=5.3cm, width=6.3cm]{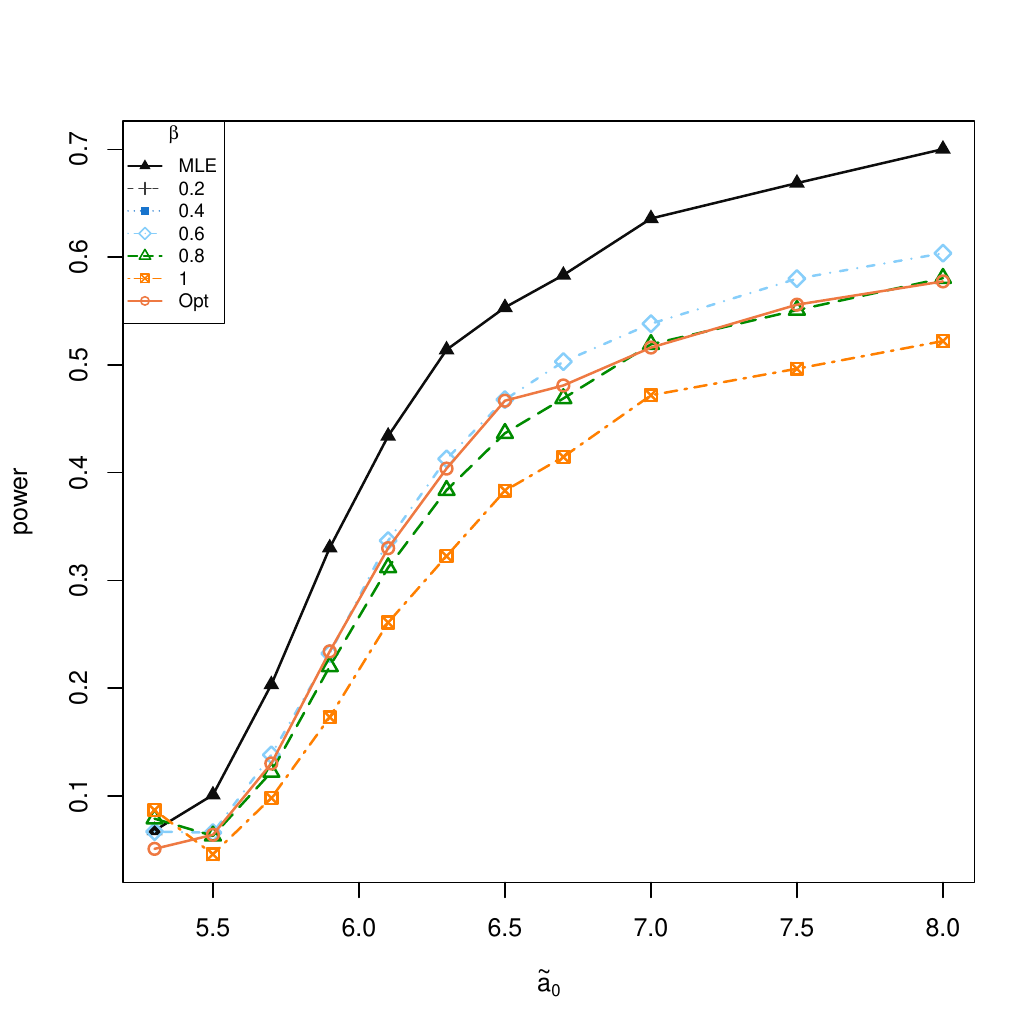}
		\includegraphics[height=5.3cm, width=6.3cm]{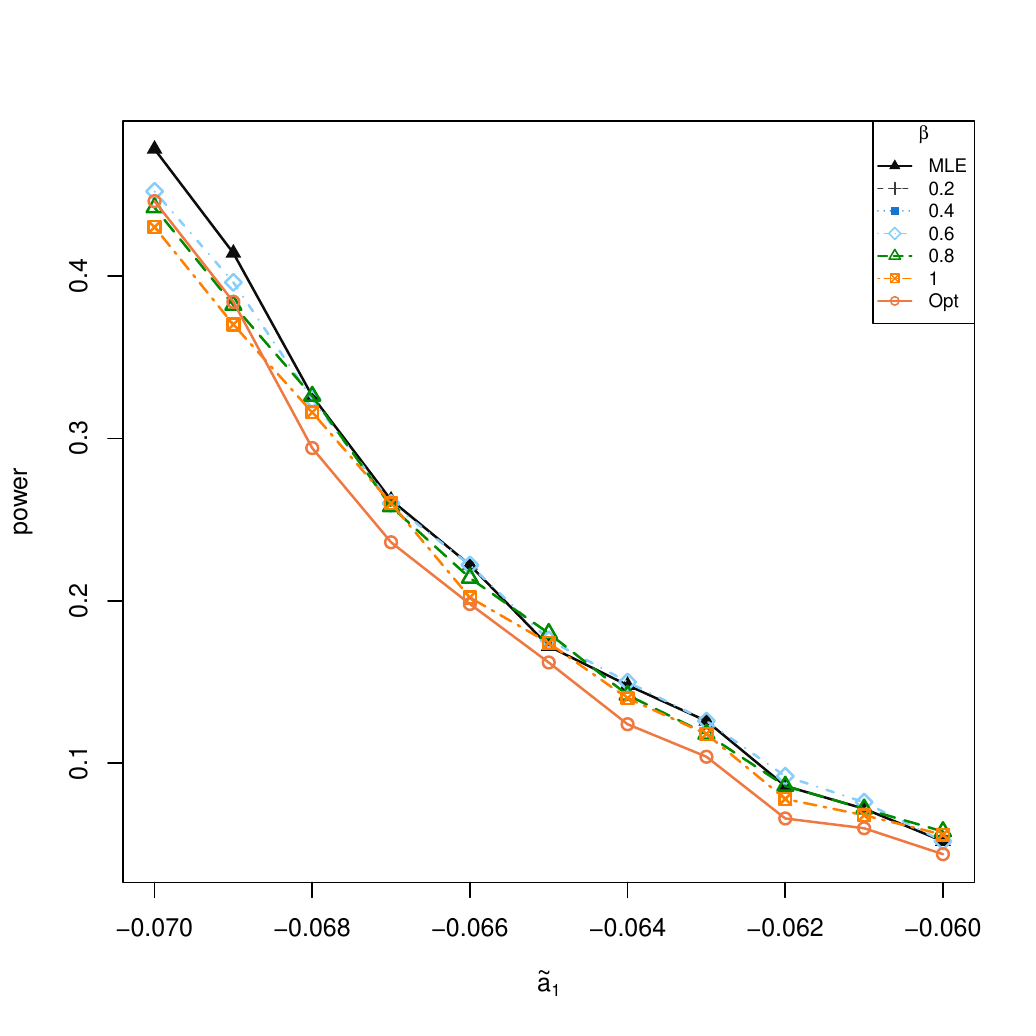}
		\includegraphics[height=5.3cm, width=6.3cm]{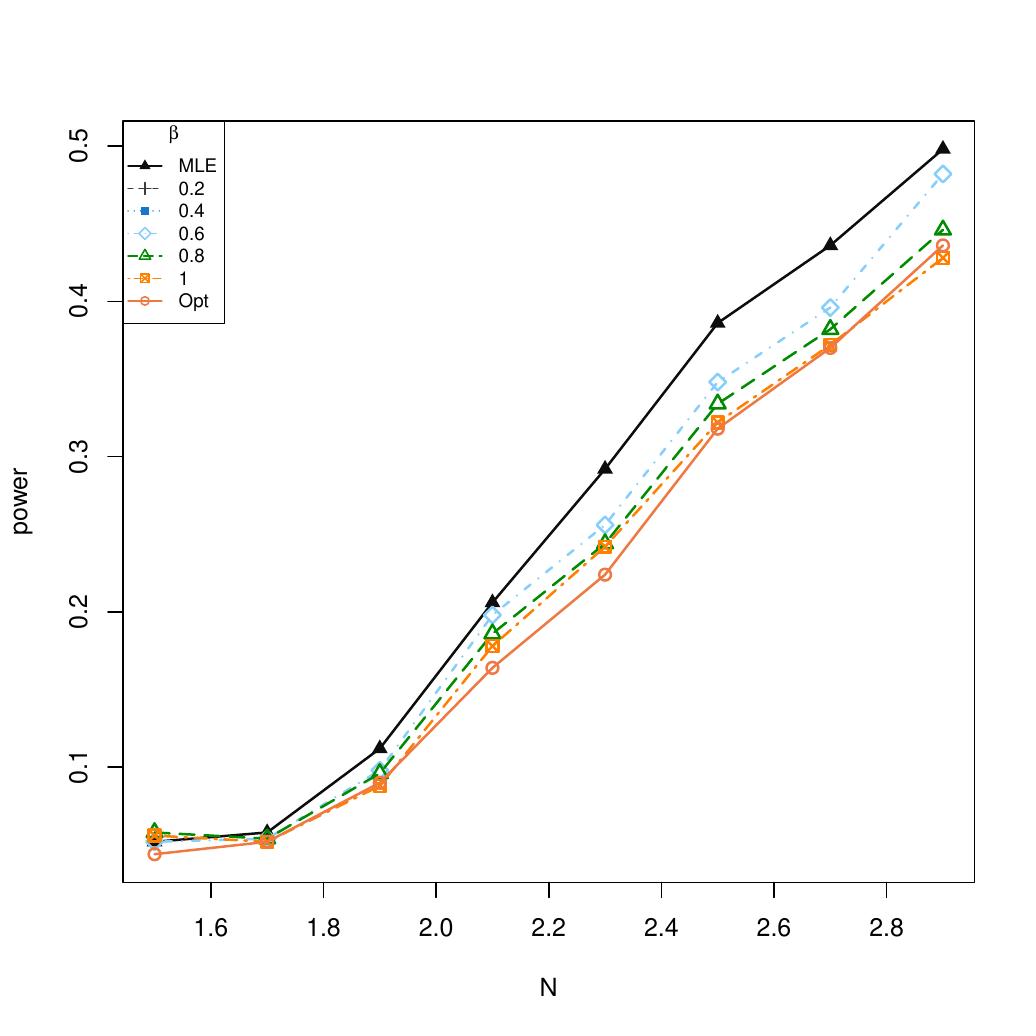}	
	\caption{Empirical power  of the Wald-type test under increasing contamination on the first parameter $a_0$ (top), second $a_1$ (middle) and shape parameter $\nu$ (bottom).}	
	\label{fig:power}
\end{figure}

\section{Real data analyses} \label{sec:realdata}

We now illustrate the performance of the proposed robust methods in analyzing the influence of temperature in solar lighting devices (dataset \ref{Dataset1}) 
 and LED lights (dataset \ref{Dataset3}).

\subsection{Effect of temperature on solar lighting devices}

We fit the step-stress ALT model with Weibull lifetime distributions for the first dataset \ref{Dataset1} introduced in Section \ref{sec:Motivating}. 
Before fitting the model, the stress levels were normalized to $(0,1).$ 

Table \ref{table:solarlight} shows the estimated values of the model parameters with different values of the tuning parameter $\beta.$ 
Note that the MDPDE estimates are greater for the scale parameters under both stress levels $\lambda_i = \exp(a_0+a_1x_0), i =1,2$ and lower for the common shape parameter of the Weibull distibution.
In addition, approximate CIs for the model parameters related to the scale, $a_0$ and $a_1,$ are not too wide, indicating low variance of the estimators. 
The CI of the shape parameter $\eta$ includes the value of 1, and therefore we could not reject the null hypothesis of exponential lifetime distributions, at a confidence level of $95\%$. However, the estimated value of the shape parameter is away from 1 and so the Weibull lifetime would be more appropriate.

\begin{table}[ht]
	\centering
	\caption{Estimated values and approximate $95\%$ CIs for the model parameters with solar light data for different values of $\beta$}
	\label{table:solarlight}
	\begin{tabular}{ccccccc}
		\hline
		$\beta  $ & $\widehat{a}_0^\beta$ & IC($a_0^\beta$) & $\widehat{a}_1^\beta$ & IC($a_1^\beta$) & $\widehat{\eta}^\beta$ & IC($\eta^\beta$) \\ 
		\hline
		0 & 1.80 & [1.70, 1.92] & -2.39 & [-3.07, -1.71] & 1.54 & [0.88, 2.19] \\ 
		 0.2 & 1.81 & [1.71, 1.92] & -2.38 & [-3.06, -1.70] & 1.50 & [0.86, 2.14] \\ 
		 0.4 & 1.82 & [1.71, 1.93] & -2.38 & [-3.06, -1.70] & 1.47 & [0.84, 2.10] \\ 
		 0.6 & 1.83 & [1.71, 1.95] & -2.37 & [-3.06, -1.69] & 1.44 & [0.82, 2.07] \\ 
		0.8 & 1.83 & [1.71, 1.96] & -2.37 & [-3.06, -1.69] & 1.42 & [0.80, 2.04] \\ 
		 1 & 1.84 & [1.71, 1.98] & -2.37 & [-3.06, -1.68] & 1.40 & [0.78, 2.02] \\ 
		\hline
	\end{tabular}
\end{table}

Table \ref{table:solarlight2} presents the estimated mean lifetime, reliability at $t=400$ hours and $95\%$ distribution quantile for the solar light data with different values of the tuning parameter. Robust methods yield larger mean lifetimes (and consequently, smaller  quantiles) than the classical MLE. However, all methods agree on the reliability of  devices at 400 hours.

\begin{table}[ht]
	\centering
	\caption{Estimated mean lifetime (in hundred hours), reliability at $t=400$ hours and $95\%$ distributional quantile (in hundred hours) at a constant stress level $x_0=293K$ for solar light devices for different values of $\beta$}
	\label{table:solarlight2}
	\begin{tabular}{rrrr}
		\hline
	$\beta $	& $\operatorname{E}_T(\widehat{\boldsymbol{\theta}}^\beta)$ &  $R_4(\widehat{\boldsymbol{\theta}}^\beta)$   &  $Q_{0.95}(\widehat{\boldsymbol{\theta}}^\beta)$  \\ 
		\hline
	0 & 5.468 & 0.591 & 0.877 \\ 
	 0.2 & 5.531 & 0.590 & 0.842 \\ 
	 0.4 & 5.585 & 0.589 & 0.814 \\ 
	 0.6 & 5.633 & 0.588 & 0.791 \\ 
	 0.8 & 5.679 & 0.588 & 0.770 \\ 
	 1 & 5.717 & 0.587 & 0.752 \\ 
		\hline
	\end{tabular}
\end{table}

\subsection{Effect of temperature on Light Emitting Diodes (LEDs)}

Finally, we report the performance of the proposed estimators in the dataset \ref{Dataset3}.
The data are transformed to intermittent inspection set-up and the stress levels are normalized as usual. 

Table \ref{table:LED} presents the estimated coefficients for different values of the tuning parameter $\beta.$ 
Here, the shape parameter is moderately high, pointing out the appropriateness of using the Weibull lifetime distribution, instead of the exponential distribution. However, we must mention that approximate CIs are wide and that the value of 1 is included in the approximate CI for the shape parameter.
Table \ref{table:LED2} contains the estimated mean life, reliability at $t=2.88 \times 10^4$ hours, and $95\%$ quantile based on the MDPDE for different values of $\beta$ under normal operating temperature. 
In this case, the estimated mean time to failure based on MDPDE with positive values of $\beta$ are larger than those estimated by the MLE, implying higher reliability of the devices. The same trend is observed for the estimated reliability and quantiles. Note that the increase in these quantities gradually rises with 
$\beta$, aligning more closely with the results obtained by Zhao and Elsayed (2005).
However, it is important to note that these values reach their maximum at $\beta=0.6$ and start decreasing beyond this critical point. Based on our empirical recommendations for the DPD tuning parameter, results from MDPDE fitted with a tuning parameter ranging from 0.4 to 0.6 provide a suitable trade-off between robustness and efficiency. Beyond this range, the efficiency may be insufficient. Thus, our results obtained with values in the range of 0.4-0.6 are, from our point of view, more reliable and, moreover, align with other studies in the literature.

\begin{table}[ht]
	\centering
	\caption{Estimated values and approximate $95\%$ CIs for the model parameters with LEDs data for different values of $\beta$.}
	\label{table:LED}
	\resizebox{0.5\textwidth}{!}{
	\begin{tabular}{ccccccc}
		\hline
		$\beta $ & $\widehat{a}_0^\beta$ & IC($a_0^\beta$) & $\widehat{a}_1^\beta$ & IC($a_1^\beta$) & $\widehat{\eta}^\beta$ & IC($\eta^\beta$) \\ 
		\hline
	0 & 10.09 & [1.04, 98.18] & -4.89 & [-33.56, 23.77] & 1.88 & [0.00, 8.79] \\ 
	 0.2 & 10.09 & [1.03, 99.18] & -4.89 & [-33.68, 23.90] & 1.88 & [0.00, 8.80] \\ 
	 0.4 & 10.09 & [1.01, 100.77] & -4.89 & [-33.88, 24.10] & 1.88 & [0.00, 8.85] \\ 
 	0.6 & 10.40 & [0.91, 119.20] & -5.25 & [-36.57, 26.08] & 1.79 & [0.00, 8.53] \\ 
	 0.8 & 10.17 & [0.97, 106.46] & -4.97 & [-34.67, 24.73] & 1.90 & [0.00, 8.95] \\ 
	 1 & 10.15 & [0.97, 105.93] & -4.94 & [-34.55, 24.67] & 1.93 & [0.00, 9.10] \\ 
		\hline
	\end{tabular}
}
\end{table}

\begin{table}[ht]
	\centering
	\caption{Estimated mean lifetime (in hours $\times 10^4$), reliability at $t=2.88 \times 10^4$ hours and $95\%$ distributional quantile (in hours $\times 10^4$) at normal operating temperature $x_0=50\circ C$ for LEDs data with different values of $\beta$}
	\label{table:LED2}
	\begin{tabular}{rrrr}
		\hline
	$\beta  $	& $\operatorname{E}_T(\widehat{\boldsymbol{\theta}}^\beta)$ &  $R_{2.88}(\widehat{\boldsymbol{\theta}}^\beta)$   &  $Q_{0.95}(\widehat{\boldsymbol{\theta}}^\beta)$  \\ 
		\hline
	0 & 2.146 & 0.249 & 0.499 \\ 
		 0.2 & 2.138 & 0.247 & 0.495 \\ 
	 0.4 & 2.143 & 0.248 & 0.499 \\ 
		 0.6 & 2.908 & 0.451 & 0.623 \\ 
	0.8 & 2.308 & 0.297 & 0.544 \\ 
		 1  & 2.264 & 0.283 & 0.547 \\ 
		\hline
	\end{tabular}
\end{table}

\section{Concluding remarks \label{sec:conclusions}}

In this paper, we have developed robust estimators and Wald-type test statistics for testing general composite null hypothesis based on the popular density power divergence (DPD) approach. We have examined the robustness properties of the proposed estimators and the test statistics theoretically as well as empirically, showing the clear improvement in terms of robustness with a small loss of efficiency in the absence of contamination.
Further, point estimation and approximate confidence intervals (CIs) for the model parameters and certain lifetime characteristics of interest based on the MDPDEs are derived. Their robustness is empirically examined in terms of accuracy for the point estimation and coverage probability for the CIs. Both direct and transformed approximate CIs are compared, with the transformed CIs demonstrating superior performance, particularly in heavily contaminated scenarios.
Finally, three real datasets from the reliability engineering field have been analyzed to illustrate the use of the proposed robust estimators and test statistics in practical situations.

	The MDPDEs are presented as a robust generalizacion of the MLE, and so their performance is only compared with this classical estimator.
	However, there exist some other appealing proposal for inference.
	For instance, an approach that could be of potential use is 
	Fisher's fiducial inferential method, as discussed recently by Chen et al. (2016).  Though this method has some appealing features such as not requiring an exact specification of a prior distribution as the Bayesian method does, it faces the problem of non-uniqueness in the absence of a complete sufficient statistic.  However, it does lead to useful inference based on approximate pivot as described in the elaborate article by Hannig (2009) amply demonstrates the practical utility of this approach.  Not much work has gone into studying the robustness features and misspecification effects on this method of inference.  In our future work, we plan to investigate this specific issue connected with the usage of fiducial method to develop inference for the problem considered here and then compare its relative performance with that of the methodology developed here.

\section{Acknowledgments}
\noindent This research is supported by the Spanish Grants PGC2018-095 194-B-100 (M.Jaenada and L.Pardo) and FPU 19/01824 (M.Jaenada). M.Jaenada and L.Pardo are members of the Interdisciplinary Mathematics Institute (IMI).

\newpage
\appendix[Proofs of the main results]
\section*{Proof of the Result \ref{thm:esitmatingequations}}

\begin{proof}
	As the MDPDE is a minimizer, it must satisfy the equation
	\begin{equation*}
		\begin{aligned}
		\frac{\partial d_{\beta}\left( \widehat{\boldsymbol{p}},\boldsymbol{\pi}\left(\boldsymbol{\theta}\right)\right)}{\partial \boldsymbol{\theta}}   &= (\beta+1)\sum_{j=1}^{L+1} \left(  \pi_j(\boldsymbol{\theta})^{\beta-1} \left(\pi_j(\boldsymbol{\theta}) -  \widehat{p}_j\right) \frac{\partial \pi_j(\boldsymbol{\theta})}{\partial \boldsymbol{\theta}}  \right)\\
		& = \boldsymbol{0}_3.
	\end{aligned}
	\end{equation*}
	where
	\begin{equation*}
		\frac{\partial\pi_j(\boldsymbol{\theta})}{\partial \boldsymbol{\theta}}
		= \frac{\partial G_T(t_j)}{\partial \boldsymbol{\theta}} -\frac{G_T(t_{j-1})}{\partial \boldsymbol{\theta}}.
	\end{equation*}
	Now, setting as $x_i$ the stress level at which the units are tested before the $\tau_i-$th inspection time, with $\tau_{i-1} < t_j \leq \tau_i,$ we have
	\begin{align*}
		\boldsymbol{z}_j = \frac{\partial G_T(t_j)}{\partial \boldsymbol{\theta}} &= g_T(t_j)\begin{pmatrix}
			-(t_j+h_{i-1})\\
			-(t_j+h_{i-1})x_i + h_{i-1}^\ast \\
			\log\left(\frac{t_j+h_{i-1}}{\alpha_i}\right)\frac{t_j+h_{i-1}}{\eta}
		\end{pmatrix}
	\end{align*}
	with
	$h_{i-1}^\ast$ as in (\ref{aast}).	Defining the matrix $\boldsymbol{W}$ with rows $\boldsymbol{w}_j =  \boldsymbol{z}_j-\boldsymbol{z}_{j-1},$ we obtain the desired result.
\end{proof}

\section*{Proof of the Result \ref{thm:asymptoticstat}}
\begin{proof}
	Let $\boldsymbol{\theta}_0$ be the true value of the model parameter satisfying the null hypothesis in (\ref{eq:null}).
	From Result \ref{thm:asymptoticestimator}, we know that
	$$ \sqrt{N}\left(\widehat{\boldsymbol{\theta}}^\beta - \boldsymbol{\theta}_0\right)\xrightarrow[N\rightarrow \infty]{L}\mathcal{N}\left(\boldsymbol{0}, \boldsymbol{J}_\beta^{-1}(\boldsymbol{\theta}_0)\boldsymbol{K}_\beta(\boldsymbol{\theta}_0)\boldsymbol{J}_\beta^{-1}(\boldsymbol{\theta}_0)\right)$$
	from which it follows that
	\begin{align*}
		\sqrt{N}\boldsymbol{m}(\widehat{\boldsymbol{\theta}}^\beta)\xrightarrow[N\rightarrow \infty]{L}\mathcal{N}\bigg(\boldsymbol{0}, \Sigma \bigg)
	\end{align*}
with
$$\Sigma = \boldsymbol{M}^T(\boldsymbol{\theta}_0)\boldsymbol{J}_\beta^{-1}(\boldsymbol{\theta}_0)\boldsymbol{K}_\beta(\boldsymbol{\theta}_0)\boldsymbol{J}_\beta^{-1}(\boldsymbol{\theta}_0)\boldsymbol{M}^T(\boldsymbol{\theta}_0)$$
	Now, the covariance matrix $\Sigma$  
	 has maximum rank, and then  by transforming the variable $\sqrt{N}\boldsymbol{m}(\widehat{\boldsymbol{\theta}}^\beta),$ 
	we obtain that
	$$ N\boldsymbol{m}^T\hspace{-0.05cm}(\widehat{\boldsymbol{\theta}}^\beta\hspace{-0.1cm})\left(\boldsymbol{M}^T\hspace{-0.05cm}(\boldsymbol{\theta}_0)\boldsymbol{J}_\beta^{-1}(\boldsymbol{\theta}_0)\boldsymbol{K}_\beta(\boldsymbol{\theta}_0)\boldsymbol{J}_\beta^{-1}(\boldsymbol{\theta}_0)\boldsymbol{M}(\boldsymbol{\theta}_0)\right)^{-1}\hspace{-0.3cm}\boldsymbol{m}(\widehat{\boldsymbol{\theta}}^\beta\hspace{-0.1cm})$$
	follows a chi-squared distribution with $r$ degrees of freedom.
	As $\widehat{\boldsymbol{\theta}}^\beta$ is a consistent estimator of $\boldsymbol{\theta}_0,$ the stated result follows.
\end{proof}

\section*{Proof of the Result \ref{thmasymptoticpower}}
\begin{proof}
	A first-order Taylor expansion of $\ell(\boldsymbol{\theta}) = \ell^\ast(\boldsymbol{\theta},\boldsymbol{\theta}^\ast)$ at $\widehat{\boldsymbol{\theta}}^\beta$ around $\boldsymbol{\theta}^\ast$ gives
	$$\ell(\widehat{\boldsymbol{\theta}}^\beta) - \ell(\boldsymbol{\theta}^\ast) =  \frac{\partial \ell(\boldsymbol{\theta})}{\partial \boldsymbol{\theta}^T} \bigg|_{\boldsymbol{\theta} = \boldsymbol{\theta}^\ast} 
	\left(\widehat{\boldsymbol{\theta}}^\beta - \boldsymbol{\theta}^\ast\right) +  o_p\left(\parallel \widehat{\boldsymbol{\theta}}^\beta - \boldsymbol{\theta}^\ast \parallel \right) $$
	and therefore, from Result \ref{thm:asymptoticestimator}, it follows that
	$$\sqrt{N}\left( \ell(\widehat{\boldsymbol{\theta}}^\beta) - \ell(\boldsymbol{\theta}^\ast) \right) \rightarrow \mathcal{N}\left(0, \sigma_{W_N(\boldsymbol{\theta}^\ast)} \right).
	$$
	Further, as $\widehat{\boldsymbol{\theta}}^\beta$ is a consistent estimator of the true parameter $\boldsymbol{\theta}^\ast$, we have 
	$$\sqrt{N}\left( W_N(\widehat{\boldsymbol{\theta}}^\beta)  - \ell(\boldsymbol{\theta}^\ast) \right) \xrightarrow[N\rightarrow \infty]{L} \mathcal{N}\left(0, \sigma_{W_N(\boldsymbol{\theta}^\ast)} \right).
	$$
	Now, the power function is the probability of rejection given the critical region (\ref{eq:criticalregionWtest}), and so we have 
	\begin{align*}
		\beta_N\left(\boldsymbol{\theta}^\ast\right) &= \mathbb{P}\left(W_{N}(\widehat{\boldsymbol{\theta}}^\beta) > \chi_{r,\alpha}^2 | \boldsymbol{\theta} = \boldsymbol{\theta}^\ast \right)\\
		& =1- \Phi\left(\frac{\sqrt{N}}{\sigma_{W_N(\boldsymbol{\theta}^\ast)} } \left(\frac{\chi^2_{r,\alpha}}{N}- \ell^\ast(\boldsymbol{\theta}^\ast, \boldsymbol{\theta}^\ast) \right)\right).
	\end{align*}
\end{proof}

\section*{Proof of the Result \ref{thm:IF}}

\begin{proof}
	Let us denote  $\boldsymbol{F}_{\boldsymbol{\theta},\varepsilon} = (1-\varepsilon)F_{\boldsymbol{\theta}_0} + \varepsilon \Delta_{\boldsymbol{n}}$ for the perturbed distribution function with mass function $\boldsymbol{\pi}_{\varepsilon}(\boldsymbol{\theta}),$ and  $\boldsymbol{\theta}_\varepsilon = \boldsymbol{T}_\beta(\boldsymbol{F}_{\boldsymbol{\theta},\varepsilon}).$ 
	The MDPDE satisfies the estimating equations
	\begin{equation} \label{eq:estimatingH}
		\sum_{j=1}^{L+1} \pi_j(\boldsymbol{\theta}_\varepsilon)^{\beta-1}\left[(\pi_j(\boldsymbol{\theta}_\varepsilon)-\boldsymbol{\pi}_{\varepsilon,j}(\boldsymbol{\theta}))\frac{\partial \pi_j(\boldsymbol{\theta}_\varepsilon)}{\partial \boldsymbol{\theta}}\right] = \boldsymbol{0}.
	\end{equation}
	Now, differentiating  (\ref{eq:estimatingH}), we get
	\resizebox{0.5\textwidth}{!}{
		\begin{minipage}{0.5\textwidth}
	\begin{align*}
		\sum_{j=1}^{L+1} &(\beta-1)\pi_j(\boldsymbol{\theta}_\varepsilon)^{\beta-2}\frac{\partial \pi_j(\boldsymbol{\theta}_\varepsilon)}{\partial \boldsymbol{\theta}}\frac{\partial \boldsymbol{\theta}_\varepsilon}{\partial \varepsilon}\left[(\pi_j(\boldsymbol{\theta}_\varepsilon)-\boldsymbol{\pi}_{\varepsilon,j}(\boldsymbol{\theta}))\frac{\partial \pi_j(\boldsymbol{\theta}_\varepsilon)}{\partial \boldsymbol{\theta}}\right]\\
		& + \pi_j(\boldsymbol{\theta}_\varepsilon)^{\beta-1}\bigg[\left(\frac{\partial \pi_j(\boldsymbol{\theta}_\varepsilon)}{\partial \boldsymbol{\theta}}\frac{\partial \boldsymbol{\theta}_\varepsilon}{\partial \varepsilon}-\frac{\boldsymbol{\pi}_{\varepsilon,j}(\boldsymbol{\theta})}{\partial \varepsilon} \right) \frac{\partial \pi_j(\boldsymbol{\theta}_\varepsilon)}{\partial \boldsymbol{\theta}}\\
		& + \left(\pi_j(\boldsymbol{\theta}_\varepsilon)-\boldsymbol{\pi}_{\varepsilon,j}(\boldsymbol{\theta})\right) \frac{\partial^2 \pi_j(\boldsymbol{\theta}_\varepsilon)}{\partial \boldsymbol{\theta}^2} \frac{\partial \boldsymbol{\theta}_\varepsilon}{\partial \varepsilon}\bigg] = \boldsymbol{0}.
	\end{align*}
	\end{minipage}
}
	Then, using the fact that $\boldsymbol{\pi}_{0}(\boldsymbol{\theta}) = \boldsymbol{\pi}(\boldsymbol{\theta}_0)$ and evaluating at $\varepsilon = 0$, we obtain
	\begin{equation*}
		\begin{aligned}
		\sum_{j=1}^{L+1} \pi_j(\boldsymbol{\theta}_0)^{\beta-1}&\bigg[\left(\frac{\partial \pi_j(\boldsymbol{\theta}_0)}{\partial \boldsymbol{\theta}}\right)^2\text{IF}\left(\boldsymbol{n}, \boldsymbol{T}_\beta, \boldsymbol{F}_{\boldsymbol{\theta}}\right)\\
		& -\left(\frac{\partial \pi_j(\boldsymbol{\theta}_0)}{\partial \boldsymbol{\theta}}\right)(-\pi_j(\boldsymbol{\theta}_0) + \Delta_{\boldsymbol{n}})   \bigg] = \boldsymbol{0}.
		\end{aligned}
	\end{equation*}
	Now, rewriting this equation in matrix form, we get
	$$ \boldsymbol{W}^T \boldsymbol{D}_{\boldsymbol{\pi}(\boldsymbol{\theta_0})}^{\beta-1} \boldsymbol{W} \cdot\text{IF}\left(\boldsymbol{n}, \boldsymbol{T}_\beta, \boldsymbol{F}_{\boldsymbol{\theta}} \right) - \boldsymbol{W}^T \boldsymbol{D}_{\boldsymbol{\pi}(\boldsymbol{\theta}_0)}^{\beta-1}\left(-\boldsymbol{\pi}(\boldsymbol{\theta}_0)+\Delta_{\boldsymbol{n}} \right)$$
	and finally solving for $\text{IF}\left(\boldsymbol{n}, \boldsymbol{T}_\beta, \boldsymbol{F}_{\boldsymbol{\theta}} \right),$ we obtain the stated expression.
\end{proof}

\appendix

\section{Additional Numerical Results}

\subsection{Performance of the MDPDEs under increasing sample size }

	Accelerated life-tests with highly reliable devices often deal with very small sample sizes, as is the case of our illustrative examples. However, the performance of the proposed estimators and especially their associated approximate confindence intervals will naturally depends on the sample size; larger sample sizes result in more accurate estimators and asymptotic confidence intervals.
	To illustrate such improvement, Figure \ref{fig:Npure} shows the MSE
	of the MDPDEs with different values of the tuning parameter $\beta$ for increasing sample size in the absence of contamination. As seen, all estimators improves their performance when increasing the sample size, and that difference gets reduced for larger sample sizes, where all estimators (including the optimal estimator obtained with Algorithm 1) present quite similar estimation errors.

	\begin{figure}
	\centering
	\includegraphics[scale=0.34]{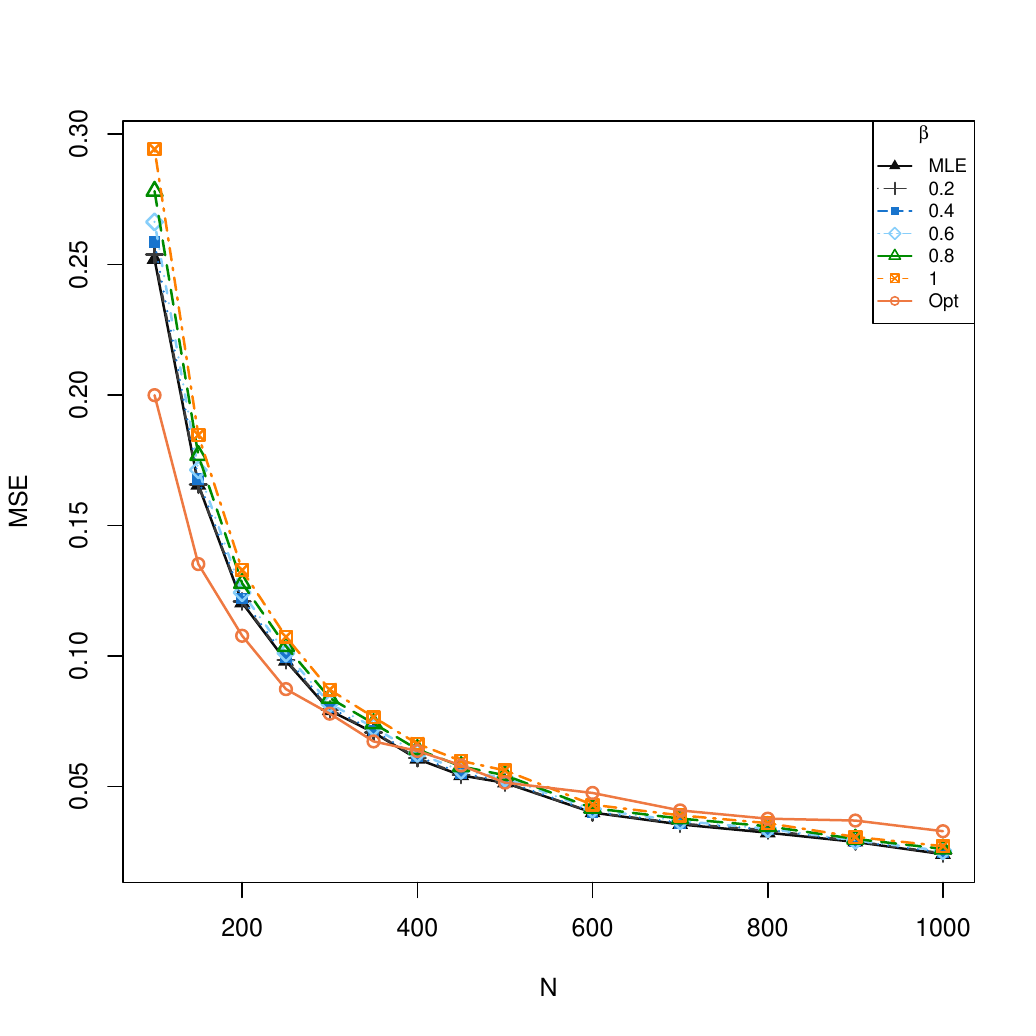}
	\caption{Mean Squared Error of the MDPDEs with different values of the tuning parameter $\beta$ for increasing sample size in the absence of contamination}
	\label{fig:Npure}
\end{figure}

	However, the robustness advantage of the MDDPE with positive values of $\beta$ compared to the MLE is maintained even for very large sample sizes. Figure \ref{fig:Ncont} illustrates the performance of the proposed robust estimators with increasing sample sizes in the presence of contamination in every of the three directions considered: contamination in the first parameter with $\widetilde{a}_0=6.3$ all the other two parameters remain constant  (top), second parameter with $\widetilde{a}_1=-0.03$ (middle) and third parameter with $\widetilde{\eta} = 2$ (bottom).
	As expected, the estimation error decreases with increasing sample size. However, the relative performance between MDPDEs with different values of the tuning parameter $\beta$ remains consistent across all sample sizes, demonstrating the enduring importance of robustness even for large samples. Furthermore, it is noteworthy that the optimal estimator obtained using Algorithm 1 outperforms the MLE and performs similarly to MDPDEs with moderate values of the tuning parameter, over 0.4.

	\begin{figure}
		\centering
		\includegraphics[scale=0.34]{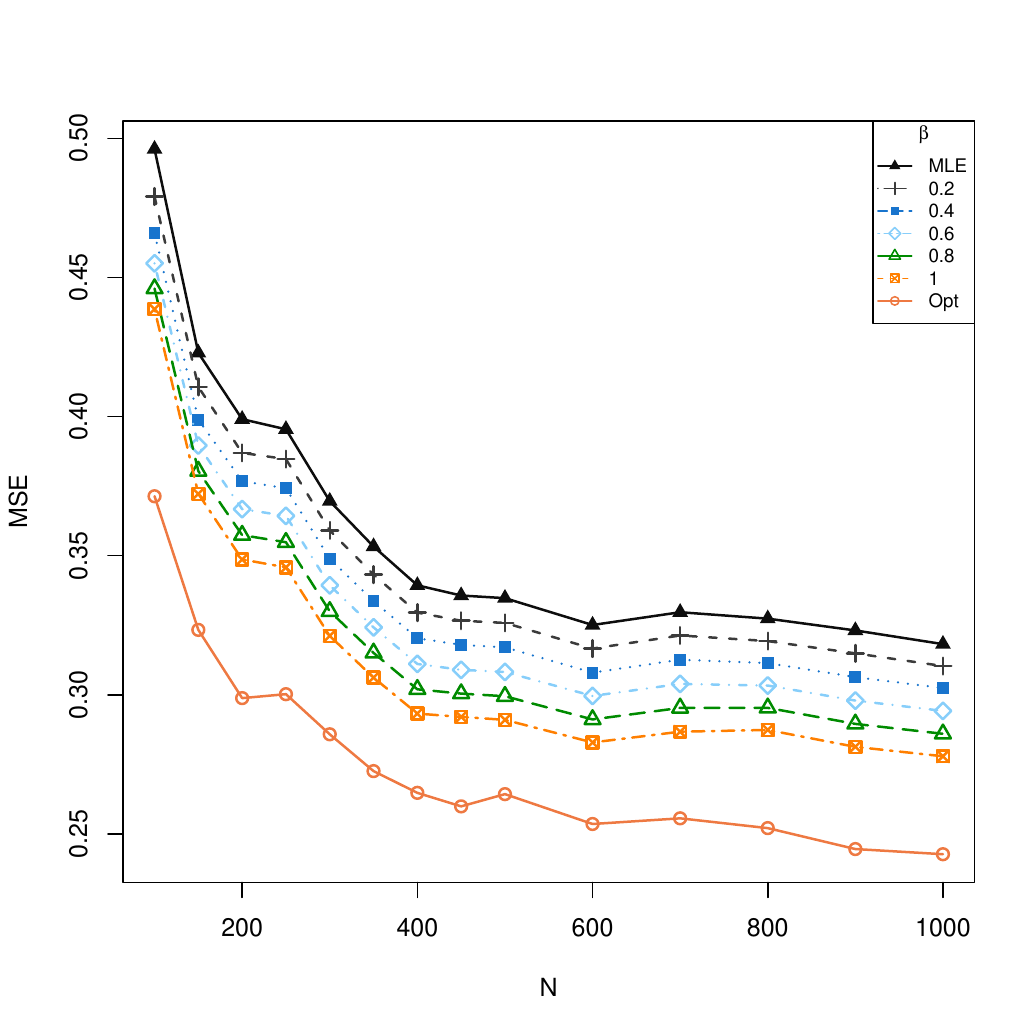}
		\includegraphics[scale=0.34]{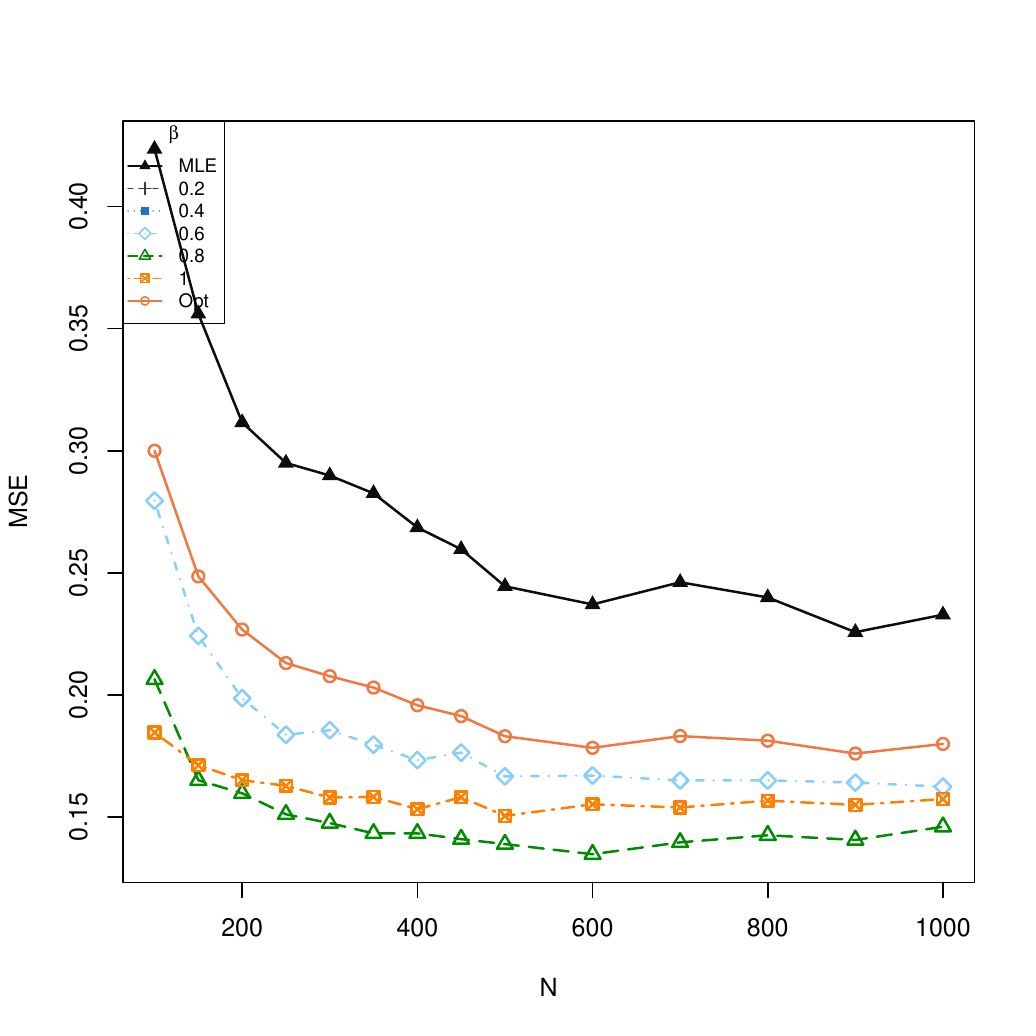}
		\includegraphics[scale=0.34]{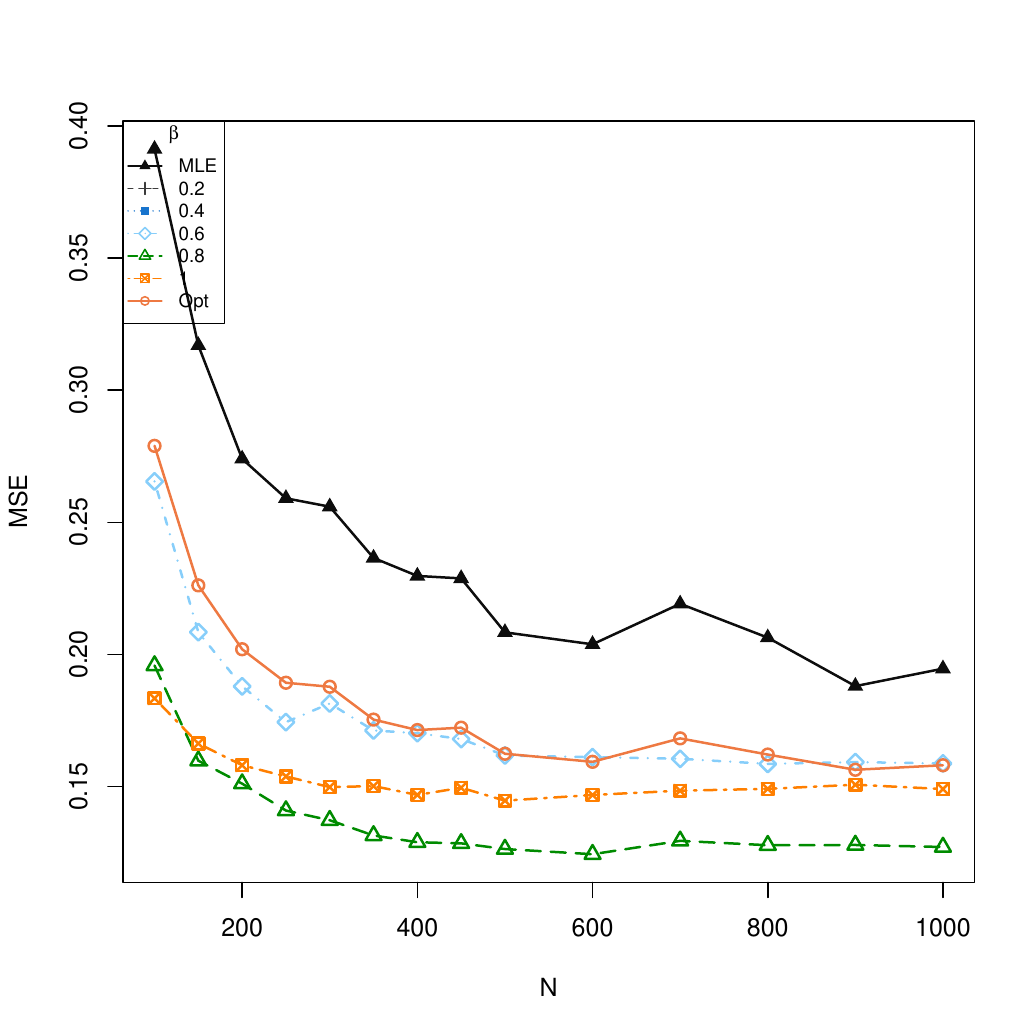}
		\caption{Mean Squared Error of the MDPDEs with different values of the tuning parameter $\beta$ for increasing sample size in the presence of contamination in the first parameter (top), second (middle) and third (bottom).}
		\label{fig:Ncont}
	\end{figure}

\subsection{Mean Squared Error of the mean lifetime to failure estimate under contamination \label{app:MSEmean}}

For completeness of the results, we include in this Section the empirical errors obtained when estimating the mean time to failure for the simulation study in Section \ref{sec:simstudy} under normal operating conditions $x_0=20$ when the contamination is introduced on the first (top), second (middle) and third (bottom) model parameters. 
The mean time to failure is significantly high, and the differences in the accuracies of the MDPDE are not evident in the characteristic estimation. Instead, all estimators appear to perform similarly, and their performance deteriorates with contamination.

\begin{figure}
	\centering
		\includegraphics[height=5.7cm, width=6.3cm]{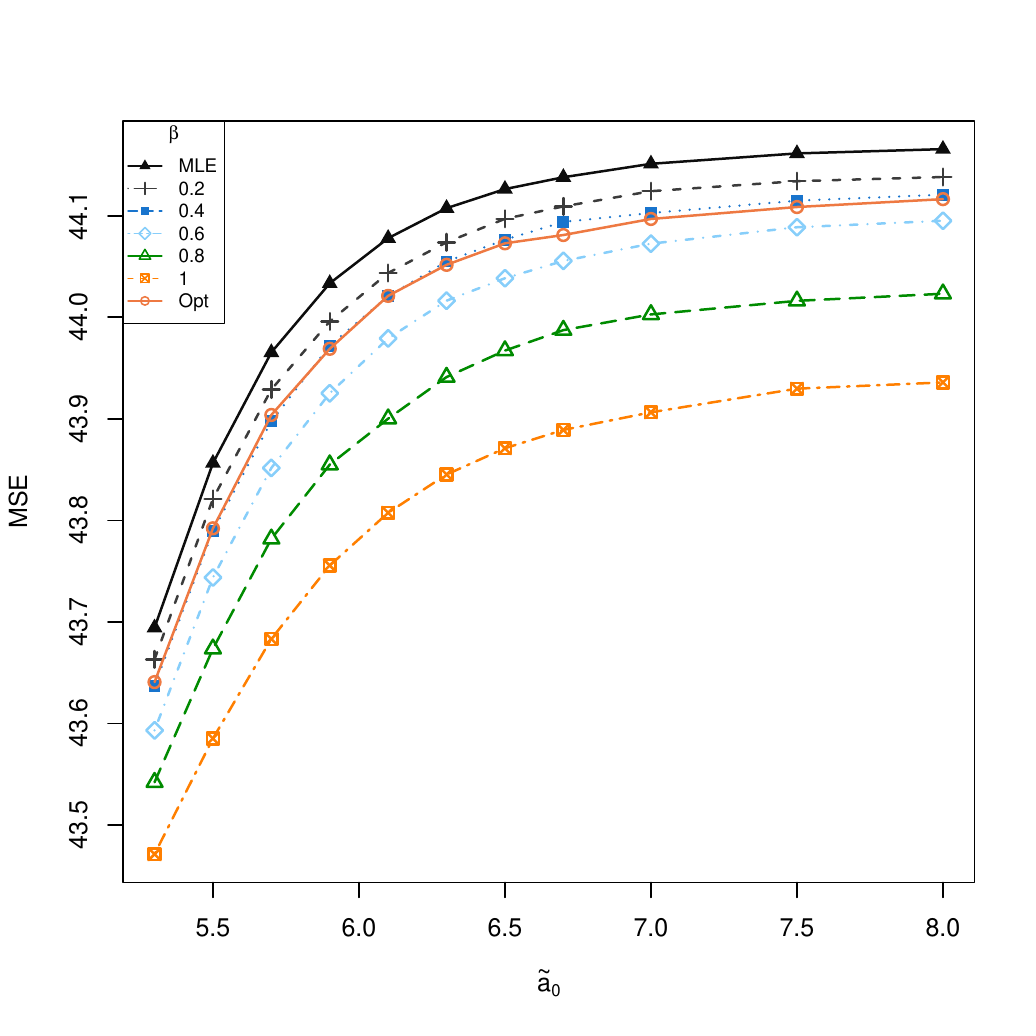}	
		\includegraphics[height=5.7cm, width=6.3cm]{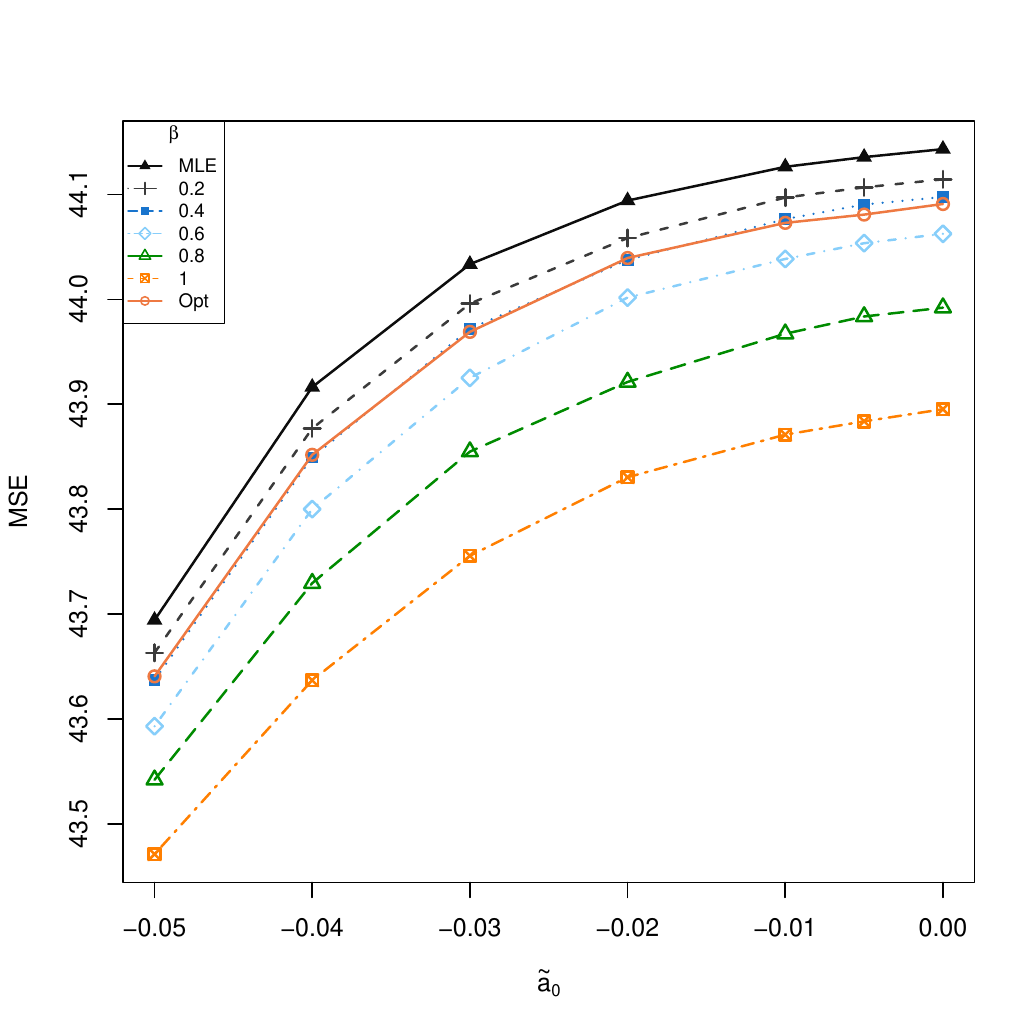}
		\includegraphics[height=5.7cm, width=6.3cm]{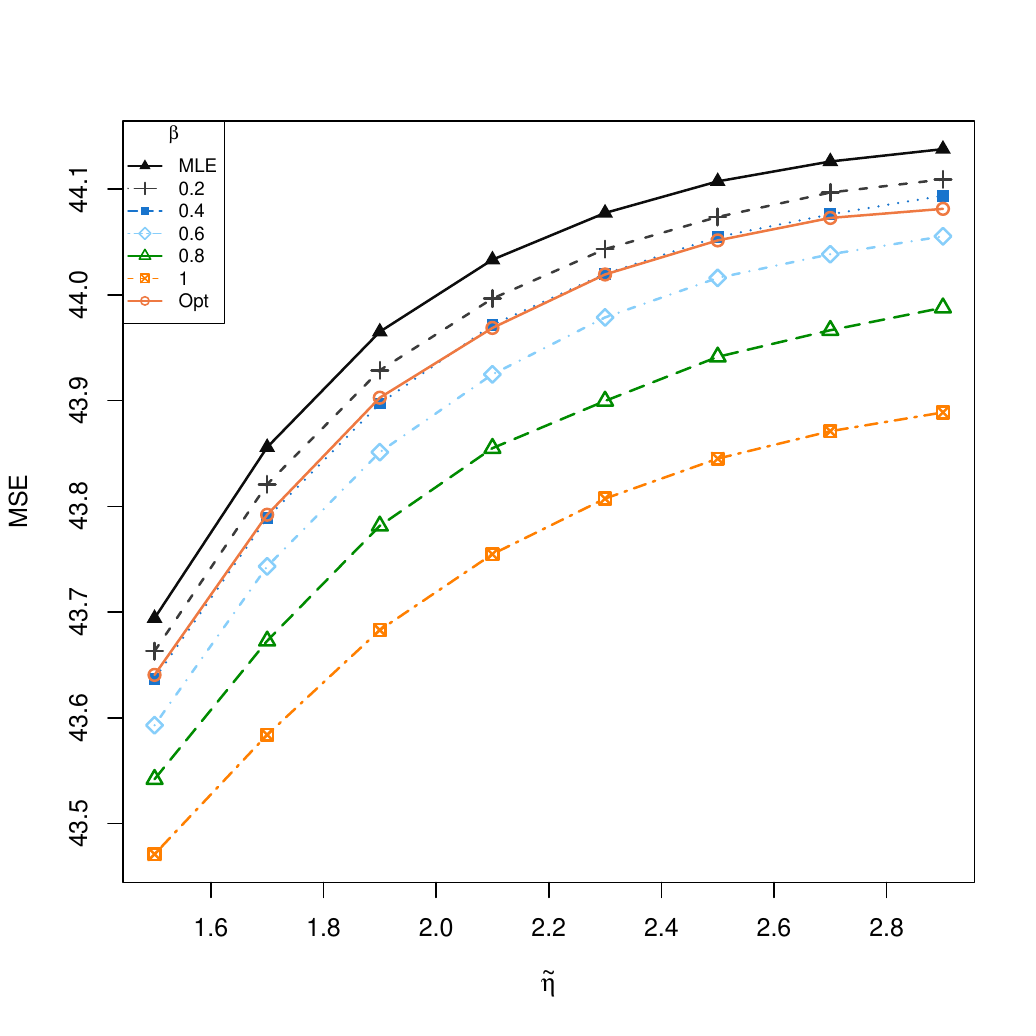}	
	\caption{MSE of the mean lifetime ($\times 10^{2})$  under normal operating conditions $x_0=20$ when the contamination is introduced on the first (top) and second (bottom) model parameters.}	
	\label{fig:meancont}
\end{figure}

\subsection{Coverage probabilities of the asymptotic and direct CI for some lifetime characteristics}

We finally present in this section some additional numerical results on the coverage probabilities of the direct and transfomed confidence intervals for the mean time to failure and the reliability of the product at a certain mission time at a different nominal level, $\alpha = 0.1$. The results are quite similar to those for the nominal level $\alpha = 0.5$ presented in the main manuscript, which is why we have decided to present them as supplementary material in this appendix.


\begin{table}[ht]
	\centering
	\caption{Empirical coverage of direct and transformed asymptotic confidence intervals at 90\% for the mean time to failure and reliability and mission time $t_0=60$ under normal operating condition $x_0=20$ and $a_0$-contamination  }
	\label{table:coverage-a0cont}
	\resizebox{0.5\textwidth}{!}{
	\begin{tabular}{rrrrrrrrrr}
		\hline
		\multicolumn{10}{c}{Mean Time To Failure}\\
		\hline
		\hline
	$\widetilde{a}_0$ & 5.3 & 5.5 & 5.7 & 5.9 & 6.1 & 6.3 &6.5 & 6.7 & 7 \\ 
		\hline
		$\beta$ & \multicolumn{9}{c}{Direct}\\
		\hline
		0 & 0.794 & 0.692 & 0.514 & 0.380 & 0.302 & 0.236 & 0.206 & 0.178 & 0.156 \\ 
		0.2 & 0.788 & 0.696 & 0.520 & 0.398 & 0.310 & 0.238 & 0.206 & 0.180 & 0.160 \\ 
		0.4 & 0.802 & 0.704 & 0.522 & 0.402 & 0.322 & 0.246 & 0.210 & 0.188 & 0.164 \\ 
		0.6 & 0.790 & 0.694 & 0.526 & 0.428 & 0.332 & 0.262 & 0.218 & 0.190 & 0.162 \\ 
		0.8 & 0.778 & 0.696 & 0.534 & 0.434 & 0.342 & 0.268 & 0.222 & 0.196 & 0.172 \\ 
		1 & 0.774 & 0.696 & 0.540 & 0.442 & 0.348 & 0.270 & 0.226 & 0.202 & 0.172 \\ 
		opt & 0.796 & 0.698 & 0.520 & 0.396 & 0.312 & 0.234 & 0.204 & 0.172 & 0.154 \\ 
		\hline 
			$\beta$ & \multicolumn{9}{c}{Transformed}\\
		\hline
		0.1 & 0.946 & 0.864 & 0.768 & 0.638 & 0.550 & 0.478 & 0.410 & 0.378 & 0.330 \\ 
		0.2 & 0.946 & 0.866 & 0.764 & 0.644 & 0.558 & 0.486 & 0.424 & 0.386 & 0.340 \\ 
		0.4 & 0.948 & 0.878 & 0.768 & 0.658 & 0.566 & 0.500 & 0.438 & 0.398 & 0.346 \\ 
		0.6 & 0.948 & 0.882 & 0.780 & 0.666 & 0.582 & 0.504 & 0.446 & 0.422 & 0.356 \\ 
		0.8 & 0.950 & 0.892 & 0.786 & 0.686 & 0.590 & 0.516 & 0.468 & 0.434 & 0.376 \\ 
		1 & 0.956 & 0.896 & 0.788 & 0.698 & 0.600 & 0.532 & 0.482 & 0.446 & 0.394 \\ 
		opt & 0.950 & 0.872 & 0.764 & 0.648 & 0.568 & 0.490 & 0.422 & 0.390 & 0.342 \\ 
 		\hline
		 \multicolumn{10}{c}{Reliability at $t_0=60$}\\
		 \hline
		 	$\beta$ & \multicolumn{9}{c}{Direct}\\
		 \hline
		0 & 0.794 & 0.692 & 0.514 & 0.380 & 0.302 & 0.236 & 0.206 & 0.178 & 0.156 \\ 
		0.2 & 0.788 & 0.696 & 0.520 & 0.398 & 0.310 & 0.238 & 0.206 & 0.180 & 0.160 \\ 
		0.4 & 0.802 & 0.704 & 0.522 & 0.402 & 0.322 & 0.246 & 0.210 & 0.188 & 0.164 \\ 
		0.6 & 0.790 & 0.694 & 0.526 & 0.428 & 0.332 & 0.262 & 0.218 & 0.190 & 0.162 \\ 
		0.8 & 0.778 & 0.696 & 0.534 & 0.434 & 0.342 & 0.268 & 0.222 & 0.196 & 0.172 \\ 
		1 & 0.774 & 0.696 & 0.540 & 0.442 & 0.348 & 0.270 & 0.226 & 0.202 & 0.172 \\ 
		opt & 0.796 & 0.698 & 0.520 & 0.396 & 0.312 & 0.234 & 0.204 & 0.172 & 0.154 \\
			\hline 
			$\beta$ & \multicolumn{9}{c}{Transformed}\\
		\hline 
		0.1 & 0.912 & 0.874 & 0.764 & 0.636 & 0.558 & 0.482 & 0.420 & 0.380 & 0.324 \\ 
		0.2 & 0.906 & 0.876 & 0.770 & 0.656 & 0.562 & 0.492 & 0.428 & 0.390 & 0.336 \\ 
		0.4 & 0.908 & 0.874 & 0.774 & 0.670 & 0.582 & 0.498 & 0.450 & 0.406 & 0.344 \\ 
		0.6 & 0.908 & 0.880 & 0.782 & 0.684 & 0.598 & 0.506 & 0.458 & 0.414 & 0.360 \\ 
		0.8 & 0.908 & 0.898 & 0.800 & 0.700 & 0.606 & 0.534 & 0.482 & 0.440 & 0.374 \\ 
		1 & 0.904 & 0.902 & 0.804 & 0.726 & 0.624 & 0.564 & 0.496 & 0.458 & 0.392 \\ 
		opt & 0.912 & 0.876 & 0.774 & 0.672 & 0.564 & 0.490 & 0.438 & 0.398 & 0.340 \\ 
		\hline
	\end{tabular}
}
\end{table}

\begin{table}[ht]
	\centering
	\caption{Empirical coverage of direct and transformed asymptotic confidence intervals at 90\% for the mean time to failure and reliability and mission time $t_0=60$ under normal operating condition $x_0=20$ and $a_1$-contamination  }
	\label{table:coverage-a1cont}
	\begin{tabular}{rrrrrrrr}
		\hline
		\multicolumn{8}{c}{Mean Time To Failure}\\
		\hline
		\hline
		$\widetilde{a}_1 $ &-0.05 & -0.04 & -0.03 & -0.02 & -0.01 & -0.005 &  0.0  \\ 
		\hline
		$\beta$ & \multicolumn{7}{c}{Direct}\\
		\hline
		0 & 0.794 & 0.598 & 0.380 & 0.260 & 0.206 & 0.186 & 0.164 \\ 
		0.2 & 0.788 & 0.602 & 0.398 & 0.258 & 0.206 & 0.186 & 0.164 \\ 
		0.4 & 0.802 & 0.606 & 0.402 & 0.266 & 0.210 & 0.192 & 0.172 \\ 
		0.6 & 0.790 & 0.616 & 0.428 & 0.282 & 0.218 & 0.192 & 0.174 \\ 
		0.8 & 0.778 & 0.624 & 0.434 & 0.296 & 0.222 & 0.198 & 0.184 \\ 
		1 & 0.774 & 0.632 & 0.442 & 0.298 & 0.226 & 0.202 & 0.192 \\ 
		opt & 0.796 & 0.610 & 0.396 & 0.258 & 0.204 & 0.180 & 0.158 \\ 
		\hline 
		$\beta$ & \multicolumn{7}{c}{Transformed}\\
		\hline
		0.1 & 0.946 & 0.820 & 0.638 & 0.508 & 0.410 & 0.382 & 0.360 \\ 
		0.2 & 0.946 & 0.816 & 0.644 & 0.520 & 0.424 & 0.390 & 0.370 \\ 
		0.4 & 0.948 & 0.820 & 0.658 & 0.534 & 0.438 & 0.402 & 0.376 \\ 
		0.6 & 0.948 & 0.830 & 0.666 & 0.550 & 0.446 & 0.426 & 0.404 \\ 
		0.8 & 0.950 & 0.826 & 0.686 & 0.552 & 0.468 & 0.440 & 0.414 \\ 
		1 & 0.956 & 0.832 & 0.698 & 0.552 & 0.482 & 0.450 & 0.426 \\ 
		opt & 0.950 & 0.818 & 0.648 & 0.524 & 0.422 & 0.394 & 0.372 \\ 
		\hline
		 \multicolumn{8}{c}{Reliability at $t_0=60$}\\
		\hline 
		$\beta$ & \multicolumn{7}{c}{Direct}\\
		\hline
		0 & 0.794 & 0.598 & 0.380 & 0.260 & 0.206 & 0.186 & 0.164 \\ 
		0.2 & 0.788 & 0.602 & 0.398 & 0.258 & 0.206 & 0.186 & 0.164 \\ 
		0.4 & 0.802 & 0.606 & 0.402 & 0.266 & 0.210 & 0.192 & 0.172 \\ 
		0.6 & 0.790 & 0.616 & 0.428 & 0.282 & 0.218 & 0.192 & 0.174 \\ 
		0.8 & 0.778 & 0.624 & 0.434 & 0.296 & 0.222 & 0.198 & 0.184 \\ 
		1 & 0.774 & 0.632 & 0.442 & 0.298 & 0.226 & 0.202 & 0.192 \\ 
		opt & 0.796 & 0.610 & 0.396 & 0.258 & 0.204 & 0.180 & 0.158 \\ 
		\hline 
		$\beta$ & \multicolumn{7}{c}{Transformed}\\
		\hline
		0.1 & 0.912 & 0.820 & 0.636 & 0.506 & 0.420 & 0.380 & 0.362 \\ 
		0.2 & 0.906 & 0.824 & 0.656 & 0.518 & 0.428 & 0.392 & 0.376 \\ 
		0.4 & 0.908 & 0.838 & 0.670 & 0.532 & 0.450 & 0.414 & 0.388 \\ 
		0.6 & 0.908 & 0.846 & 0.684 & 0.542 & 0.458 & 0.420 & 0.402 \\ 
		0.8 & 0.908 & 0.850 & 0.700 & 0.562 & 0.482 & 0.442 & 0.418 \\ 
		1 & 0.904 & 0.852 & 0.726 & 0.584 & 0.496 & 0.460 & 0.442 \\ 
		opt & 0.912 & 0.836 & 0.672 & 0.524 & 0.438 & 0.404 & 0.386 \\ 
		\hline
		\end{tabular}
	\end{table}

\begin{table}[ht]
	\centering
	\caption{Empirical coverage of direct and transformed asymptotic confidence intervals at 90\% for the mean time to failure and reliability and mission time $t_0=60$ under normal operating condition $x_0=20$ and $\eta$-contamination }
	\label{table:coverage-etacont}
	\begin{tabular}{rrrrrrrrr}
			\hline
			\multicolumn{9}{c}{Mean Time To Failure}\\
			\hline
			\hline
			$\widehat{\eta}$	& 1.5	&1.7	&1.9	&2.1	&2.3	&2.5	&2.7	&2.9 \\ 
			\hline
			$\beta$ & \multicolumn{8}{c}{Direct}\\
			\hline
			0 & 0.794 & 0.692 & 0.514 & 0.380 & 0.302 & 0.238 & 0.206 & 0.178 \\ 
			0.2 & 0.788 & 0.696 & 0.520 & 0.398 & 0.310 & 0.240 & 0.206 & 0.180 \\ 
			0.4 & 0.802 & 0.704 & 0.522 & 0.402 & 0.322 & 0.246 & 0.212 & 0.188 \\ 
			0.6 & 0.790 & 0.694 & 0.526 & 0.428 & 0.332 & 0.262 & 0.220 & 0.190 \\ 
			0.8 & 0.778 & 0.696 & 0.534 & 0.434 & 0.342 & 0.268 & 0.224 & 0.196 \\ 
			1 & 0.774 & 0.696 & 0.540 & 0.442 & 0.348 & 0.270 & 0.228 & 0.202 \\ 
			opt & 0.796 & 0.698 & 0.520 & 0.396 & 0.312 & 0.236 & 0.204 & 0.172 \\ 
			\hline 
			$\beta$ & \multicolumn{8}{c}{Transformed}\\
			\hline
			0.1 & 0.946 & 0.864 & 0.768 & 0.638 & 0.552 & 0.478 & 0.410 & 0.378 \\ 
			0.2 & 0.946 & 0.866 & 0.764 & 0.644 & 0.560 & 0.486 & 0.426 & 0.386 \\ 
			0.4 & 0.948 & 0.878 & 0.768 & 0.658 & 0.568 & 0.504 & 0.440 & 0.398 \\ 
			0.6 & 0.948 & 0.882 & 0.780 & 0.666 & 0.584 & 0.508 & 0.450 & 0.422 \\ 
			0.8 & 0.950 & 0.892 & 0.786 & 0.686 & 0.592 & 0.520 & 0.472 & 0.436 \\ 
			1 & 0.956 & 0.896 & 0.788 & 0.698 & 0.602 & 0.534 & 0.486 & 0.446 \\ 
			opt & 0.950 & 0.872 & 0.764 & 0.648 & 0.570 & 0.490 & 0.424 & 0.390 \\ 
			\hline
			 \multicolumn{9}{c}{Reliability at $t_0=60$}\\
			 \hline
			$\beta$ & \multicolumn{8}{c}{Direct}\\
			\hline
			0 & 0.794 & 0.692 & 0.514 & 0.380 & 0.302 & 0.238 & 0.206 & 0.178 \\ 
			0.2 & 0.788 & 0.696 & 0.520 & 0.398 & 0.310 & 0.240 & 0.206 & 0.180 \\ 
			0.4 & 0.802 & 0.704 & 0.522 & 0.402 & 0.322 & 0.246 & 0.212 & 0.188 \\ 
			0.6 & 0.790 & 0.694 & 0.526 & 0.428 & 0.332 & 0.262 & 0.220 & 0.190 \\ 
			0.8 & 0.778 & 0.696 & 0.534 & 0.434 & 0.342 & 0.268 & 0.224 & 0.196 \\ 
			1 & 0.774 & 0.696 & 0.540 & 0.442 & 0.348 & 0.270 & 0.228 & 0.202 \\ 
			opt & 0.796 & 0.698 & 0.520 & 0.396 & 0.312 & 0.236 & 0.204 & 0.172 \\
			\hline 
			$\beta$ & \multicolumn{8}{c}{Transformed}\\
			\hline 
			0.1 & 0.912 & 0.874 & 0.764 & 0.636 & 0.560 & 0.482 & 0.424 & 0.380 \\ 
			0.2 & 0.906 & 0.876 & 0.770 & 0.656 & 0.564 & 0.494 & 0.430 & 0.390 \\ 
			0.4 & 0.908 & 0.874 & 0.774 & 0.670 & 0.584 & 0.502 & 0.452 & 0.408 \\ 
			0.6 & 0.908 & 0.880 & 0.782 & 0.684 & 0.600 & 0.510 & 0.460 & 0.416 \\ 
			0.8 & 0.908 & 0.898 & 0.800 & 0.700 & 0.608 & 0.538 & 0.482 & 0.440 \\ 
			1 & 0.904 & 0.902 & 0.804 & 0.726 & 0.626 & 0.566 & 0.496 & 0.458 \\ 
			opt & 0.912 & 0.876 & 0.774 & 0.672 & 0.566 & 0.494 & 0.440 & 0.398 \\ 
			\hline
			
		\end{tabular}
\end{table}


\begin{thebibliography}{} 
	
	\bibitem{Bala2009} Balakrishnan, N. (2009). A synthesis of exact inferential results for exponential step-stress models and associated optimal accelerated life-tests. \textit{Metrika}, vol. 69(2), pp. 351-396.
	
	
	%
	\bibitem {Bala2023} Balakrishnan, N., Castilla, E., Jaenada, M., \& Pardo, L. (2023). Robust inference for nondestructive one‐shot device testing under step‐stress model with exponential lifetimes. \textit{Quality and Reliability Engineering International}, vol. 39(4), pp. 1192-1222.
	
	%
	%
	%
	\bibitem {BalaCasal4} Balakrishnan, N., Castilla, E., Martin, N., $\&$ Pardo, L. (2020). Robust inference for one-shot device testing data under Weibull lifetime model. \textit{IEEE Transactions on Reliability}, vol. 69(3),  pp. 937--953.
	%
	%
	
	
	
	%
	\bibitem {Bala2013}  Balakrishnan N. \& Ling M.H. (2013). Expectation maximization algorithm for one shot device accelerated life testing with Weibull lifetimes, and variable parameters over stress. \textit{IEEE Transactions Reliability}, vol. 62(2),  pp. 537--551.
	
	
	%
	
	\bibitem {BalaBook} Balakrishnan, N., Ling, M. H., $\&$ So, H. Y. (2021). \textit{Accelerated life testing of one-shot devices: Data Collection and Analysis}. John Wiley $\&$ Sons, Hoboken, New Jersey.
	
	\bibitem{balakrishnan2024step} Balakrishnan, N., Jaenada, M., \& Pardo, L. (2024a). Step-stress tests for interval-censored data under gamma lifetime distribution. \textit{Quality Engineering}, vol. 36(1), pp. 3--20.
	
	\bibitem{balakrishnan2024non} Balakrishnan, N., Jaenada, M., \& Pardo, L. (2024b). Non-destructive one-shot device test under step-stress experiment with lognormal lifetime distribution. \textit{Journal of Computational and Applied Mathematics}, vol. 437, pp. 115483.
	
	\bibitem{Basak2021} Basak, S., Basu, A., \& Jones, M. C. (2021). On the ``optimal'' density power divergence tuning parameter. \textit{Journal of Applied Statistics}, vol. 48(3),  pp. 536--556.
	%
	\bibitem{Basu1998} Basu, A., Harris, I. R., Hjort, N. L., \& Jones, M. C. (1998). Robust and efficient estimation by minimising a density power divergence. \textit{Biometrika}, vol. 85(3), pp. 549--559.
	%
	%
	\bibitem{Basu2016}Basu, A., Mandal, A., Martin, N., \& Pardo, L. (2016). Generalized Wald-type tests based on minimum density power divergence estimators. \textit{Statistics}, vol. 50(1),  pp. 1--26.
	%
	\bibitem {Bhattacharyya} Bhattacharyya, G. K., \& Soejoeti, Z. (1989). A tampered failure rate model for step-stress accelerated life test. \textit{Communications in Statistics-Theory and Methods}, vol. 18(5),  pp. 1627--1643.
	
	
	\bibitem{Cheng2018Reliability} Cheng, Y., \& Elsayed, E. A. (2018). Reliability modeling and optimization of operational use of one-shot units. \textit{Reliability Engineering \& System Safety}, vol. 176, pp. 27–36.
	
	\bibitem{chen2016generalized} Chen, P., Xu, A., \& Ye, Z. S. (2016). Generalized fiducial inference for accelerated life tests with Weibull distribution and progressively type-II censoring. IEEE Transactions on Reliability, vol. 65(4), pp. 1737-1744.
	
	\item There are some misused symbols (such as $\theta_1$) and misspelled words that need correction.
	
	\bibitem{DeGroot} DeGroot, M. H., \& Goel, P. K. (1979). Bayesian estimation and optimal designs in partially accelerated life testing.\textit{ Naval Research Logistics Quarterly}, vol. 26(2),  pp. 223--235.
	
	%
	%
	%
	\bibitem{Guono} Gouno, E. (2001). An inference method for temperature step‐stress accelerated life testing. \textit{Quality and Reliability Engineering International}, vol. 17(1),  pp. 11--18.
	
	\bibitem{Hampel1986} Hampel, F.R., Ronchetti, E., Rousseeuw, P.J., \& Stahel, W. (1986).\textit{	Robust Statistics: The Approach Based on Influence Functions}. John Wiley \& Sons, New York.
	%
	\bibitem{Han} Han, D., \& Bai, T. (2020). Design optimization of a simple step-stress accelerated life test–Contrast between continuous and interval inspections with non-uniform step durations. \textit{Reliability Engineering \& System Safety}, vol. 199, pp. 106875.
	
	\bibitem{Han} Han, D., \& Bai, T. (2019). Parameter estimation using EM algorithm for lifetimes from step-stress and constant-stress accelerated life tests with interval monitoring. \textit{IEEE Transactions on Reliability}, vol. 70(1), pp. 49-64.
	
	\bibitem{Han} Han, D., \& Bai, T. (2022). Exact inference for progressively Type-I censored step-stress accelerated life test under interval monitoring. \textit{Computational Statistics}, pp. 1-29.
	
	
	\bibitem{Han} Han, D., \& Kundu, D. (2014). Inference for a step-stress model with competing risks for failure from the generalized exponential distribution under type-I censoring. \textit{IEEE Transactions on Reliability}, vol. 64(1),  pp. 31--43.
	
	\bibitem{Han2013comparison} Han, D., \& Ng, H. K. T. (2013). Comparison between constant‐stress and step‐stress accelerated life tests under time constraint. \textit{Naval Research Logistics}, vol. 60(7), pp. 541-556.
	
	\bibitem{hanning} Hannig, J. (2009). On generalized fiducial inference. Statistica Sinica, pp. 491-544
		
	\bibitem{hassan2014optimum} Hassan, A. S., Assar, S. M., \& Shelbaia, A. (2014). Optimum inspection times of step-stress accelerated life tests with progressively type-I interval censored. Australian Journal of Basic and Applied Sciences, vol. 8(17), pp. 282-292.
	
	
	
	
	\bibitem {Meeter} Meeter, C. A. \& Meeker, W. Q. (1994). Optimum accelerated life tests with a nonconstant scale parameter. \textit{Technometrics}, vol. 36(1), pp. 71--83.
	
	
	\bibitem {Meeker}  Meeker, W. Q., Escobar, L. A., $\&$ Lu, C. J. (1998). Accelerated degradation tests: modeling and analysis. \textit{Technometrics}, vol. 40(2),  pp. 89--99.
	%
	%
	\bibitem{Mondal2019point} Mondal, S., \& Kundu, D. (2019). Point and interval estimation of Weibull parameters based on joint progressively censored data. \textit{Sankhya b}, vol. 81, pp. 1-25.
	\bibitem{Nelson1980} Nelson, W. (1980). Accelerated life testing-step-stress models and data analyses. \textit{IEEE Transactions on Reliability}, vol.  29(2),  pp. 103--108.
	%
	%
	
	%
	\bibitem{pal2021optimal} Pal, A., Mitra, S., \& Kundu, D. (2021). Order restricted classical inference of a Weibull multiple step-stress model. \textit{Journal of Applied Statistics}, vol. 48(4), pp. 623-645.
	
	\bibitem{samanta2019analysis} Samanta, D., Gupta, A., \& Kundu, D. (2019). Analysis of Weibull step-stress model in presence of competing risk. IEEE Transactions on Reliability, vol. 68(2), pp. 420-438.
	
	%
	%
	%
	\bibitem{Wang2003uniqueness} Wang, R. H., \& Fei, H. L. (2003). Uniqueness of the maximum likelihood estimate of the Weibull distribution tampered failure rate model. \textit{Communications in Statistics-Theory and Methods}, vol. 32(12), pp. 2321-2338.
	\bibitem{Warwick} Warwick, J., $\&$ Jones, M. C. (2005). Choosing a robustness tuning parameter. \textit{Journal of Statistical Computation and Simulation}, vol. 75(7),  pp. 581--588.
	%
	
	\bibitem {Zhao}  Zhao, W., \& Elsayed, E. A. (2005). A general accelerated life model for step-stress testing. \textit{IIE Transactions}, vol. 37(11),  pp. 1059--1069.
	
	%
	%
	%
	
	
\end{thebibliography}
\end{document}